\newcommand{\qedconf}{}
  \renewcommand{\qedconf}{\qed}
  \newcommand{\qedhere}{\qedconf}
\newcommand{\abs}[1]{ \vert #1 \vert }
\newcommand{\commentout}[1]{}
\newcommand{\fo}{\FO}
\newcommand{\set}[1]{\{  #1 \} }
\def\doi{5 (3:3) 2009}
\begin{document}

\author[P.~Weis]{Philipp Weis}
\author[N.~Immerman]{Neil Immerman}

\newcommand{\myaddress}{%
  Department of Computer Science\\
  University of Massachusetts, Amherst\\
  140 Governors Drive, Amherst, MA 01003, USA
}
\newcommand{\myinfo}{
  \myaddress\\
  \url{{pweis,immerman}@cs.umass.edu}\\
  \url{http://www.cs.umass.edu/~{pweis,immerman}}\\\bigskip\bigskip
  \footnotesize{1998 ACM Subject Classification: F.4.1, F.4.3}
}
\newcommand{\mythanks}{Supported in part by NSF grant CCF-0514621.}
\newcommand{\mytitle}{%
  Structure Theorem and Strict Alternation Hierarchy for
  \texorpdfstring{$\FOV{2}$}{FO^2} on Words\rsuper*} 

\begin{conference}
  \institute{\myinfo}
  \title{\mytitle\thanks{\mythanks}}
\end{conference}
\begin{LMCS}
  \address{\myaddress}
  \email{\{pweis,immerman\}@cs.umass.edu}
  \thanks{\mythanks}
  \title{\mytitle}
  \keywords{descriptive complexity, finite model theory, alternation hierarchy,
\ef{} games}
  \subjclass{F.4.1, F.4.3}
  \titlecomment{{\lsuper*}This work is an extended version of \cite{WI07}}
\end{LMCS}
\begin{web}
  \date{\myinfo}
  \title{\mytitle\thanks{\mythanks}}
\end{web}

\begin{abstract}
It is well-known that every first-order property on words is
expressible using at most three variables. The subclass of properties
expressible with only two variables is also quite interesting and
well-studied. We prove precise structure
theorems that characterize the exact expressive power of first-order
logic with two variables on words. Our results apply to both the case with
and without a successor relation.

For both languages, our structure theorems show exactly what is expressible
using a given quantifier depth, $n$, and using $m$ blocks of alternating
quantifiers, for any $m\leq n$. Using these characterizations, we prove,
among other results, that there is a strict hierarchy of alternating
quantifiers for both languages. The question whether there was such a
hierarchy had been completely open. As another consequence of our structural
results, we show that satisfiability for first-order logic with two
variables without successor, which is \NEXP-complete in general, becomes
\NP-complete once we only consider alphabets of a bounded size.
\end{abstract}

\maketitle

\section{Introduction} \label{sec:intro}

It is well-known that every first-order property on words is
expressible using at most three variables \cite{IK89,K68}. 
The subclass of properties expressible with only
two variables is also quite interesting and well-studied (Fact
\ref{fact}).  

In this paper we prove precise structure theorems that characterize
the exact expressive power of first-order logic with two variables on
words.  Our results apply to $\FOV[$<$]{2}$ and $\FOV{2}[<,\suc]$, the
latter of which includes the binary successor relation in addition to
the linear ordering on string positions.  

For both languages, our structure theorems show exactly what is expressible
using a given quantifier depth, $n$, and using $m$ blocks of alternating
quantifiers, for any $m\leq n$. Using these characterizations, we prove that
there is a strict hierarchy of alternating quantifiers for both languages.
The question whether there was such a hierarchy had been completely open
since it was asked in \cite{EVW97,EVW02}. As another consequence of our
structural results, we show that satisfiability for $\FOV{2}[<]$, which is
\NEXP-complete in general \cite{EVW02}, becomes \NP-complete once we only
consider alphabets of a bounded size.

Our motivation for studying $\FOV{2}$ on words comes from the desire to
understand the trade-off between formula size and number of
variables.  This is of great interest because, as is well-known, this
is equivalent to the trade-off between parallel time and number of
processors \cite{I99}.  Adler and Immerman \cite{AI03}
 introduced a game that can be used to determine the
minimum size of first-order formulas with a given number of variables
needed to express a given property.  These games, which are closely
related to the communication complexity games of Karchmer and
Wigderson  \cite{KW90}, were used to prove two optimal size bounds
for temporal logics \cite{AI03}.  Later Grohe and Schweikardt used
similar methods to study the size versus variable trade-off for
first-order logic on unary words \cite{GS05}.  They proved that all
first-order expressible properties of unary words are already
expressible with two variables and that the variable-size trade-off
between two versus three variables is polynomial whereas the trade-off
between three versus four variables is exponential.  They left open
the trade-off between $k$ and $k+1$ variables for $k\geq 4$.  While we
do not directly address that question here, our classification of
$\FOV{2}$ on words is a step towards the general understanding of the
expressive power of $\FO{}$ needed for progress on such trade-offs.

Our characterization of $\FOV{2}[<]$ and $\FOV{2}[<,\suc]$ on words is based
on the very natural notion of $n$-ranker (Definition \ref{def:n-ranker}).
Informally, a ranker is the position of a certain combination of letters in
a word. For example, $\rright_\mathtt{a}$ and $\rleft_\mathtt{b}$ are
1-rankers where $\rright_\mathtt{a}(w)$ is the position of the first
\texttt{a} in $w$ (from the left) and $\rleft_\mathtt{b}(w)$ is the position
of the first \texttt{b} in $w$ from the right. Similarly, the 2-ranker $r_2
= \rright_\mathtt{a} \rright_\mathtt{c}$ denotes the position of the first
\texttt{c} to the right of the first \texttt{a}, and the 3-ranker, $r_3 =
\rright_\mathtt{a} \rright_\mathtt{c} \rleft_\mathtt{b}$ denotes the
position of the first \texttt{b} to the left of $r_2$. If there is no such
letter then the ranker is undefined. For example, $r_3(\mathtt{cababcba})=
5$ and $r_3(\mathtt{acbbca})$ is undefined.

Our first structure theorem (Theorem \ref{thm:ranker-char}) says that the
properties expressible in $\FOVD[$<$]{2}{n}$, i.e. first-order logic with two
variables and quantifier depth $n$, are exactly boolean combinations of
statements of the form, ``$r$ is defined'', and ``$r$ is to the left (right)
of $r'$'' for $k$-rankers, $r$, and $k'$-rankers, $r'$, with $k\leq n$ and
$k' < n$. A non-quantitative version of this theorem was previously known
\cite{STV01}.\footnote{See item \ref{turtle} in Fact \ref{fact}: a ``turtle
  language'' is a language of the form ``$r$ is defined'', for some ranker,
  $r$.} Furthermore, a quantitative version in terms of iterated block
products of the variety of semi-lattices is presented in \cite{TT07}, based
on work by Straubing and Th\'erien \cite{ST02}.

Surprisingly, Theorem \ref{thm:ranker-char} can be generalized in almost
exactly the same form to characterize $\FOV{2}_{m,n}[<]$ where there are at
most $m$ blocks of alternating quantifiers, $m\leq n$. This second structure
theorem (Theorem \ref{thm:alt-ranker-char}) uses the notion of
$(m,n)$-ranker where there are $m$ blocks of $\rright$'s or $\rleft$'s, that
is, changing direction in rankers corresponds exactly to alternation of
quantifiers. Using Theorem \ref{thm:alt-ranker-char} we prove that there is
a strict alternation hierarchy for $\FOV{2}_n[<]$ (Theorem
\ref{thm:alt-hierarchy}) but that exactly at most $\abs{\Sigma} +1$
alternations are useful, where $\abs{\Sigma}$ is the size of the alphabet
(Theorem \ref{thm:alt-alphabet}).

The language $\FOV{2}[<,\suc]$ is more expressive than $\FOV{2}[<]$ because
it allows us to talk about consecutive strings of symbols\footnote{With
  three variables we can express $\suc(x,y)$ using the ordering: $x < y
  \land \forall z (z \le x \lor y \le z)$.}. For $\FOV{2}[<,\suc]$, a
straightforward generalization of $n$-ranker to $n$-successor-ranker allows
us to prove exact analogs of Theorems \ref{thm:ranker-char} and
\ref{thm:alt-ranker-char}. We use the latter to prove that there is also a
strict alternation hierarchy for $\FOV{2}_n[<,\suc]$ (Theorem
\ref{thm:suc-alt-hierarchy}). Since in the presence of successor we can
encode an arbitrary alphabet in binary, no analog of Theorem
\ref{thm:alt-alphabet} holds for $\FOV{2}[<,\suc]$.

The expressive power of first-order logic with three or more variables on
words has been well-studied. The languages expressible are of course the
star-free regular languages \cite{MP71}. The dot-depth hierarchy is the
natural hierarchy of these languages. This hierarchy is strict \cite{BK78}
and identical to the first-order quantifier alternation hierarchy
\cite{T82,T84}.

Many beautiful results on $\FOV{2}$ on words were also already known. The
main significant outstanding question was whether there was an alternation
hierarchy. The following is a summary of the main previously known
characterizations of $\FOV{2}[<]$ on words. For a detailed treatment of all
these characterizations, we refer the reader to \cite{TT01}.

\begin{fact}\label{fact}
  \textup{\cite{EVW97,EVW02,PW97,S76,TW98,STV01}}
  Let $R\subseteq \Sigma^\star$. The following conditions are equivalent:
  \setlength{\itemsep}{0pt}
  \begin{enumerate}[(1)]
  \item $R \in \FOV{2}[<]$
  \item $R$ is expressible in unary temporal logic
  \item $R \in \Sigma_2 \cap \Pi_2[<]$
  \item\label{UL}  $R$ is an unambiguous regular language
  \item The syntactic semi-group of $R$ is a member of {\bf DA}
  \item $R$ is recognizable by a partially-ordered 2-way automaton
  \item\label{turtle} $R$ is a boolean combination of ``turtle languages''
  \end{enumerate}
\end{fact}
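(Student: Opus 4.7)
The plan is to establish the seven conditions as a cycle of implications, grouping the pieces according to whether the natural argument is syntactic, game-theoretic, or algebraic. The cheapest implications are the syntactic translations between $\FOV{2}[<]$ and unary temporal logic. For (2)$\Rightarrow$(1), I would translate each unary temporal connective into a two-variable formula by the standard variable-swapping trick: $F\varphi(x)$ becomes $\exists y(x<y\wedge\varphi'(y))$, where the recursive translation of $\varphi$ into $\varphi'$ swaps the roles of $x$ and $y$ so that only two variables are ever live. The converse (1)$\Rightarrow$(2) is more delicate and is most cleanly proved by an Ehrenfeucht--Fra\"iss\'e argument showing that $n$-equivalence in $\FOV{2}[<]$ is captured by a finite set of UTL formulas of bounded modal depth.

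Next I would handle the quantifier-alternation characterization (1)$\Leftrightarrow$(3). The inclusion $\FOV{2}[<]\subseteq\Sigma_2\cap\Pi_2$ follows from closure of $\FOV{2}$ under duals and a normal-form rewriting that pulls all existential quantifiers to the front (and symmetrically for $\Pi_2$). The hard direction is $\Sigma_2\cap\Pi_2\subseteq\FOV{2}[<]$: I would combine the Th\'erien--Wilke argument, based on a back-and-forth game adapted to $\Sigma_2\cap\Pi_2$-types, with the observation that the two extremal classes collapse onto the same finite set of two-variable types. For the turtle characterization (1)$\Leftrightarrow$(7), I would show that each ranker is definable by a two-variable formula saying ``the stated extremal position exists'', and conversely decompose every $\FOV{2}$-formula into a boolean combination of such statements by induction on quantifier depth; this gives the non-quantitative version of the structure theorem the authors refine in Theorem~\ref{thm:ranker-char}.

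The algebraic and automata-theoretic equivalences are best handled via Eilenberg's variety theorem. For (1)$\Leftrightarrow$(5) I would invoke the Pin--Weil/Th\'erien--Wilke result identifying the variety $\mathbf{DA}$ as the syntactic image of $\FOV{2}[<]$; working through this requires relating a two-variable formula to the $\mathcal{J}$-class structure of the transition semigroup. From (5), the equivalences (5)$\Leftrightarrow$(4) and (5)$\Leftrightarrow$(6) fall out of Sch\"utzenberger-style factorization theorems for $\mathbf{DA}$: languages in $\mathbf{DA}$ admit unique factorizations witnessing unambiguity, and the associated deterministic two-way runs visit each state in a monotone order, giving a partially-ordered 2-way automaton.

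The main obstacle is the algebra--logic bridge (1)$\Leftrightarrow$(5). Every other equivalence reduces either to a syntactic rewriting, a straightforward induction, or a consequence of Eilenberg's theorem once the variety is identified; but pinning that variety down as $\mathbf{DA}$ requires a careful simultaneous induction on $\FOV{2}$-quantifier depth and on the $\mathcal{J}$-depth of the syntactic semigroup, and this is exactly the sort of combinatorial analysis that the authors' ranker-based structure theorem is designed to replace with a more transparent combinatorial invariant.
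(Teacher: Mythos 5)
First, a point of comparison: the paper does not prove this Fact at all. It is quoted from the literature with citations, and the authors only remark afterwards that ``all of the above characterizations follow from'' their structure theorems, without carrying out any of the derivations. So there is no in-paper proof to match your outline against; what you have written is essentially the standard route through the cited literature, and at the survey level most of it is sound, though it is a plan rather than a proof.

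There is, however, one step that would fail as written. For $(1)\Rightarrow(3)$ you claim that $\FOV{2}[<]\subseteq\Sigma_2\cap\Pi_2$ ``follows from closure of $\FOV{2}$ under duals and a normal-form rewriting that pulls all existential quantifiers to the front.'' No such rewriting exists inside $\FOV{2}$: prenexing a two-variable formula forces you to rename bound variables (destroying the two-variable property), and even if extra variables are permitted, the prenex form of a formula with $m$ quantifier alternations is a priori only $\Sigma_m$, not $\Sigma_2$. Since the central theorem of this very paper is that the alternation hierarchy inside $\FOV{2}[<]$ is strict, no syntactic manipulation can bound the alternation depth of a two-variable formula by $2$. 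The containment $\FOV{2}[<]\subseteq\Sigma_2\cap\Pi_2$ is a genuine theorem (Th\'erien--Wilke, Pin--Weil); in the present paper's framework it would be derived from Theorem~\ref{thm:ranker-char}: every $\FOV{2}[<]$ language is a boolean combination of statements ``$r$ is defined'' and ``$r(w)<r'(w)$'' for rankers $r,r'$, each such statement is expressible both as a $\Sigma_2$ and as a $\Pi_2$ sentence (the witnessing positions are existentially quantified and their extremality is enforced by a single inner universal block, and likewise for the complement), and $\Sigma_2\cap\Pi_2$ is closed under boolean combinations as a class of languages. With that repair your plan is the standard one; the remaining sketches --- $(1)\Leftrightarrow(2)$ by syntactic translation plus games, $(1)\Leftrightarrow(7)$ by the non-quantitative structure theorem, and the algebraic equivalences routed through $\mathbf{DA}$ --- accurately reflect how the cited works proceed.
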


The proofs of our structure theorems are self-contained applications
of \ef{} games.  All of the above characterizations follow from these
results.  Furthermore, we have now exactly connected quantifier and
alternation depth to the picture, thus adding tight bounds and further
insight to the above results.

For example, one can best understand item \ref{UL} above -- that
$\FOV{2}[<]$ on words corresponds to the unambiguous regular languages
-- via
Theorem \ref{thm:unique-rankers} which states that any $\FOV{2}_n[<]$
formula with one free variable that is always true of at most one position
in any string, necessarily denotes an $n$-ranker.

In the conclusion of \cite{STV01}, the authors define the subclasses
of rankers with one and two blocks of alternation.  They write that,
``\ldots turtle languages might turn out to be a helpful tool for
further studies in algebraic language theory.''  We feel that the
present paper fully justifies that prediction.  Turtle languages --- aka
rankers --- do provide an exceptionally clear and precise
understanding of the expressive power of $\fo^2$ on words, with and
without successor.

In summary, our structure theorems provide a complete classification of
the expressive power of $\FOV{2}$ on words in terms of both quantifier
depth and alternation.  They also tighten several previous
characterizations and lead to the alternation hierarchy results.

We begin the remainder of this paper with a brief review
of logical background including \ef{} games, our main tool. In 
Sect.~\ref{sec:FO2} we formally define rankers and present our structure
theorem for $\FOVD[$<$]{2}{n}$. The structure theorem for
$\FOVDA[$<$]{2}{n}{m}$ is covered in Sect.~\ref{sec:FO2-alt},
including our alternation hierarchy result that follows from
it. Sect.~\ref{sec:FO2-suc} extends our structure theorems and the
alternation hierarchy result to $\FOV[$<,\suc$]{2}$. Finally, we discuss
applications of our structural results to satisfiability for $\FOV[$<$]{2}$ in
Sect.~\ref{sec:satisfiability}.

\section{Background and Definitions} \label{sec:background}


We recall some notation concerning strings, first-order logic,
and \ef{} games.  See \cite{I99} for more details, including the
proof of Facts \ref{fact:ef1} and \ref{fact:ef2}.

$\Sigma$ will always denote a finite alphabet and $\varepsilon$ the empty
string. For a word $w \in \Sigma^\ell$ and $i \in [1, \ell]$, let $w_i$ be
the $i$-th letter of $w$; and for $[i,j]$ a subinterval of $[1, \ell]$, let
$w_{[i,j]}$ be the substring $w_i\ldots w_j$. Slightly abusing notation, we
identify a word $w \in \Sigma^\ell$ with the logical structure $w =
(\set{1,\ldots, \ell}; Q^w_{\mathtt{a}}, \mathtt{a} \in \Sigma; x^w; y^w)$.
Here $Q_\mathtt{a}, \mathtt{a} \in \Sigma$ are all unary relation symbols,
and $x$ and $y$ are the only two variables. If not specified otherwise, we
have $x^w = y^w = 1$ by default, and for all $\mathtt{a} \in \Sigma$,
$Q^w_{\mathtt{a}} = \{1 \le i \le \ell \mid w_i = \mathtt{a}\}$.
Furthermore, we write $(w, i, j)$ for the word structure $w$ with the two
variables set to $i$ and $j$, respectively, and $(w, i)$ for the word
structure $w$ with $x^w = i$. Thus $w = (w, 1, 1)$, and $(w, i) \models
Q_\mathtt{a}(x)$ iff $w_i = \mathtt{a}$.

We use $\FO[$<$]$ to denote first-order logic with a binary linear order
predicate $<$, and $\FO = \FO[$<, \suc$]$ for first-order logic with an
additional binary successor predicate. $\fo^2_n$ refers to the restriction
of first-order logic to use at most two distinct variables, and quantifier
depth $n$. $\fo^2_{m,n}$ is the further restriction to formulas such that
any path in their parse tree has at most $m$ blocks of alternating
quantifiers, and $\FOVA{2}{m} = \bigcup_{n \ge m} \FOVDA{2}{n}{m}$. We write
$u \equiv^2_n v$ to mean that $u$ and $v$ agree on all formulas from
$\FOVD{2}{n}$, and $u \equiv^2_{m,n} v$ if they agree on
$\FOVDA{2}{n}{m}$.

We assume that the reader is familiar with our main tool: the \ef{} game. In
each of the $n$ moves of the game $\fo^2_n(u,v)$, Samson places one of the
two pebble pairs, $x$ or $y$ on a position in one of the two words and
Delilah then answers by placing that pebble's mate on a position of the
other word. Samson wins if after any move, the map from the chosen points in
$u$ to those in $v$, i.e., $x^u\mapsto x^v$, $y^u\mapsto y^v$ is not an
isomorphism of the induced substructures; and Delilah wins otherwise. The
fundamental theorem of \ef{} games is the following:

\begin{fact}\label{fact:ef1}
  Let $u,v\in \Sigma^\star$, $n\in\N$.
  Delilah has a winning strategy for the game $\fo^2_n(u,v)$ iff
  $u \equiv^2_n v$.
\end{fact}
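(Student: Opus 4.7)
The plan is to prove this by induction on $n$, first strengthening the statement to cover arbitrary in-progress configurations: for all words $u,v$ and positions $i,j,i',j'$, Delilah wins the $n$-round game $\fo^2_n((u,i,j),(v,i',j'))$ iff $(u,i,j) \equiv^2_n (v,i',j')$. The fact as stated is the case $i=j=i'=j'=1$. The base case $n=0$ is direct: no moves are played, so Delilah wins iff the partial map $i\mapsto i'$, $j\mapsto j'$ is an isomorphism of induced substructures, which matches agreement on all quantifier-free two-variable formulas (the letter predicates at the pebbled positions together with the order and, if present, successor atoms).

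For the inductive step, the enabling observation is that $\FOVD{2}{n-1}$ contains only finitely many inequivalent formulas, so every configuration $(u,k,j)$ has a characteristic formula $\tau_{(u,k,j)}(x,y) \in \FOVD{2}{n-1}$ that picks out exactly the $\equiv^2_{n-1}$-class of $(u,k,j)$. For the ``$\Leftarrow$'' direction, suppose $(u,i,j) \equiv^2_n (v,i',j')$ and Samson plays, say, pebble $x$ onto position $k$ in $u$. Then $(u,i,j) \models \exists x\,\tau_{(u,k,j)}(x,y)$, a formula in $\FOVD{2}{n}$, so the same holds in $(v,i',j')$, yielding a response $k'$ with $(u,k,j) \equiv^2_{n-1} (v,k',j')$. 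Delilah plays $k'$; the inductive hypothesis then supplies a winning strategy for the remaining $n-1$ rounds. The remaining pebble and word choices are symmetric.

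For the ``$\Rightarrow$'' direction I would argue contrapositively. If $(u,i,j) \not\equiv^2_n (v,i',j')$, some depth-$n$ formula distinguishes them; since, up to equivalence, $\FOVD{2}{n}$ is the Boolean algebra generated by atomic formulas together with formulas $\exists x\,\psi$ and $\exists y\,\psi$ for $\psi \in \FOVD{2}{n-1}$, we may take the distinguishing formula to be one of those generators, say $\varphi = \exists x\,\psi$ with $(u,i,j) \models \varphi$ and $(v,i',j') \not\models \varphi$. Samson plays a witness $k$ for $x$ in $u$; for every response $k'$ by Delilah we have $(u,k,j) \models \psi$ while $(v,k',j') \not\models \psi$, so by the inductive hypothesis Samson wins from that configuration.

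The only real subtlety is the correspondence between quantification and pebble placement under the two-variable restriction: quantifying $\exists x$ forgets the prior value of $x$ in the formula, mirroring the game rule that the $x$-pebble leaves its old position whenever it is re-placed. Once this correspondence is made precise, and once the characteristic-formula construction is in hand, both directions are essentially a bookkeeping exercise.
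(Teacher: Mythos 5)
The paper does not prove this Fact itself but defers it to \cite{I99}; your argument is precisely the standard proof found there --- induction on $n$ over pebbled configurations, using the finiteness of inequivalent depth-$(n-1)$ two-variable formulas to obtain characteristic (Hintikka) formulas for the ``$\Leftarrow$'' direction and a distinguishing generator $\exists x\,\psi$ for the ``$\Rightarrow$'' direction --- and it is correct. The one subtlety you flag, that re-quantifying $x$ discards its old value exactly as re-placing the $x$-pebble does, is indeed the only point specific to the two-variable game, so there is nothing to add.
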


Thus, \ef{} games are a perfect tool for determining what is
expressible in first-order logic with a given quantifier-depth and
number of variables.  The game $\fo^2_{m,n}(u,v)$ is the restriction
of the game $\fo^2_n(u,v)$ in which Samson may change which word he
plays on at most $m-1$ times.

\begin{fact}\label{fact:ef2}
  Let $u,v\in \Sigma^\star$ and let $m, n \in \N$ with $m \le n$.
  Delilah has a winning strategy for the game $\fo^2_{m,n}(u,v)$ iff $u
  \equiv^2_{m,n} v$.
\end{fact}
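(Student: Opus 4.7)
The plan is to adapt the standard proof of Fact~\ref{fact:ef1} by tracking the additional alternation parameter $m$ throughout. The correspondence to maintain is that a switch of the word Samson plays in matches a change of quantifier type (from an $\exists$-block to a $\forall$-block or vice versa), so that at most $m-1$ switches by Samson correspond to at most $m$ alternating quantifier blocks on any path of the separating formula.

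For the direction ``Delilah wins $\Rightarrow u \equiv^2_{m,n} v$,'' I would prove the contrapositive by induction on $n$, with a nested induction on formula structure. If a quantifier-free formula separates $u$ and $v$, no moves are needed. If $\varphi_1 \land \varphi_2$ or $\varphi_1 \lor \varphi_2$ in $\FOVDA{2}{n}{m}$ separates them, one of the conjuncts or disjuncts already does, and it has no greater path-wise alternation depth, so we may recurse. Otherwise $\varphi$ has an outermost quantifier; assume it is $\exists x\,\psi$ with $u \models \varphi$ and $v \not\models \varphi$. Samson pebbles the witness in $u$, and $\psi$ separates the two resulting pebbled structures. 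The key observation is that if Samson's next required move in the residual $\fo^2_{m,n-1}$ game stays in $u$, then the next quantifier of $\psi$ is existential and lies in the same block as the initial $\exists x$, whereas if he must switch to $v$, the next quantifier of $\psi$ is universal and genuinely starts a new block, decrementing the remaining block budget from $m$ to $m-1$ in lockstep with the switch.

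For the converse, I would construct, by induction on $n$ with $m$ as a parameter, characteristic Hintikka-style formulas $\chi^{m,n}_{(w,a,b)} \in \FOVDA{2}{n}{m}$ such that $(w',a',b') \models \chi^{m,n}_{(w,a,b)}$ iff $(w,a,b) \equiv^2_{m,n} (w',a',b')$. The inductive step at level $(m, n+1)$ conjoins, for each pebble and each position $i$ reachable in one Samson move, an existential witness $\exists x\,\chi^{m-1,n}_{(w,i,b)}$ and $\exists y\,\chi^{m-1,n}_{(w,a,i)}$ together with their universal duals $\forall x\bigvee_i \chi^{m-1,n}_{(w,i,b)}$ and $\forall y\bigvee_i \chi^{m-1,n}_{(w,a,i)}$. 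On any single root-to-leaf path through the resulting parse tree, the outermost quantifier is either an $\exists$ or a $\forall$, followed by a subformula of path-wise alternation depth at most $m-1$, so the path-wise block count stays at $m$ even though the whole formula mentions both quantifier types. Delilah's winning strategy is then the usual one: maintain the invariant that the current pebbled structures satisfy the same $\chi^{m',n'}$ for the remaining budget $(m',n')$, using the enumerative shape of $\chi$ to produce each reply.

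The main obstacle is the bookkeeping of blocks in the two-variable setting: a fresh $\exists x$ may follow an $\exists y$ (or even an earlier $\exists x$, whose position it overwrites) inside the same block, and the Hintikka construction must enumerate all such options under one shared quantifier block on each path. Conversely, when Samson is forced to change words in the game, the inductive hypothesis must hand him a subformula whose outermost quantifier is genuinely of the opposite polarity, not merely one reachable through a conjunction that contains an opposite-polarity branch. I would address this by arguing block-by-block rather than quantifier-by-quantifier, distinguishing at every level of the recursion between parse-tree paths that currently start with $\exists$ and those that start with $\forall$, so that the count of alternations in the separating formula coincides exactly with the count of word-switches in Samson's strategy.
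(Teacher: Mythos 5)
First, a note on the comparison you were asked for: the paper does not prove this fact at all --- it is quoted from \cite{I99} --- so the only question is whether your argument stands on its own. Your first direction (Samson extracts a strategy from a separating formula, switching words exactly when the next quantifier on the relevant parse-tree path has opposite polarity, so that word-switches and block boundaries stay in lockstep) is sound, provided the induction hypothesis is stated for the asymmetric residual game ``$n$ moves, at most $m-1$ switches, first move forced on $u$ (resp.\ on $v$)''; that part is routine bookkeeping.

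The gap is in the converse. The Hintikka recursion you actually display, conjoining $\exists x\,\chi^{m-1,n}_{(w,i,b)}$ with $\forall x\bigvee_i \chi^{m-1,n}_{(w,i,b)}$, decrements the block budget at \emph{every} quantifier rather than only at alternations, and it is symmetric where the residual game is not: after Samson's first move on $w'$ he still has $n$ moves and $m-1$ switches \emph{measured relative to $w'$}, which is strictly more than the symmetric $(m-1,n)$ game that your inner formulas certify Delilah against. Concretely, for $m=1$ your recursion bottoms out at atomic types after one step, so $\chi^{1,n}$ cannot capture a purely existential formula of depth $n\ge 2$ such as $\exists x\bigl(Q_{\mathtt{a}}(x)\land\exists y(x<y\land Q_{\mathtt{b}}(y))\bigr)$, even though this formula lies in $\FOVDA{2}{n}{1}$ and corresponds to a $0$-switch win for Samson; so the stated biconditional for $\chi^{m,n}$ fails already there. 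Your closing paragraph correctly diagnoses the repair --- index the characteristic formulas (and the equivalence relation itself) additionally by the polarity/word of the last move, decrementing $m$ only on a genuine switch, or equivalently peel off whole blocks at a time --- but the construction as written does not implement it, and the two are not interchangeable. This asymmetry is not a pedantic point: it is precisely what forces the general version of Theorem \ref{thm:alt-ranker-char} in the paper to be stated as ``Delilah wins \ldots if Samson starts with a move on $(u,i_1,i_2)$'' rather than as a symmetric equivalence.
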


We end this section with a simple lemma that will be useful whenever we want
to prove that there is a formula expressing a property of strings. With this
lemma, it suffices to show that for any pair of strings, one with the
property in question and one without, there is a formula that distinguishes
between these two particular strings.

\begin{lem} \label{lem:fin-equiv}
  Let $P \subseteq \Sigma^\star$ and let $L$ be a logic closed
  under boolean operations with only finitely many inequivalent formulas. If
  for every $u \in P$ and every $v \in \overline{P}$ there is a formula
  $\varphi_{u,v} \in L$ such that $u \models \varphi_{u,v}$ and $v
  \not\models \varphi_{u,v}$, then there is a formula $\varphi \in L$ such
  that for all $w \in \Sigma^\star$, $w \models \varphi \iff w \in P$.
\end{lem}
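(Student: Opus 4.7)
The plan is to convert the hypothesis, which gives only pointwise separation, into a single uniform formula via two rounds of boolean combination, using the finiteness-up-to-equivalence of $L$ to keep everything inside $L$.

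First, fix a $u \in P$. For every $v \in \overline{P}$ the hypothesis supplies a $\varphi_{u,v} \in L$ with $u \models \varphi_{u,v}$ and $v \not\models \varphi_{u,v}$. The collection $\{\varphi_{u,v} : v \in \overline{P}\}$ may be infinite, but since $L$ has only finitely many inequivalent formulas, it has only finitely many inequivalence classes, so I may pick one representative per class and take the finite conjunction of these representatives; call the resulting formula $\psi_u$. Closure of $L$ under boolean operations guarantees $\psi_u \in L$. By construction $u \models \psi_u$, and for every $v \in \overline{P}$ some conjunct (equivalent to $\varphi_{u,v}$) fails on $v$, so $v \not\models \psi_u$.

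Next, perform the dual step across $u \in P$. The family $\{\psi_u : u \in P\}$ again lies in $L$ and hence has only finitely many inequivalence classes; let $\varphi$ be the finite disjunction of one representative per class. Then $\varphi \in L$. For any $w \in P$, the disjunct equivalent to $\psi_w$ is satisfied by $w$, so $w \models \varphi$. For any $w \in \overline{P}$, every $\psi_u$ fails on $w$ (that was the defining property of $\psi_u$), hence every disjunct of $\varphi$ fails on $w$, so $w \not\models \varphi$. This gives $w \models \varphi \iff w \in P$ uniformly, as required.

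There is no real obstacle; the only subtle point is being explicit that ``only finitely many inequivalent formulas'' allows an a priori infinite boolean combination to be replaced, up to logical equivalence, by a finite one that still lives in $L$. Once that observation is in place, the two-step conjunction-then-disjunction construction is immediate.
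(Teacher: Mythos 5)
Your proof is correct and follows essentially the same route as the paper: both build the separating formula as a finite disjunction over $u\in P$ of finite conjunctions over $v\in\overline P$, using the finiteness of $L$ up to equivalence to cut each a priori infinite boolean combination down to a finite one. The only cosmetic difference is that the paper collapses the whole family $\{\varphi_{u,v}\}$ to one finite set $\Gamma'$ and indexes the outer disjunction by the finitely many subsets $\Gamma'_u\subseteq\Gamma'$, whereas you invoke the finiteness hypothesis a second time on the family $\{\psi_u\}$; these are interchangeable.
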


\begin{proof}
  Let $\Gamma = \{\psi_{u,v} \mid u \in P, v \in \overline{P} \}$, and
  let $\Gamma'$ be a maximal subset of $\Gamma$ containing only inequivalent
  formulas. Since $L$ contains only finitely many inequivalent formulas,
  $\Gamma'$ is finite. For every $u \in P$, we define the finite sets of
  formulas $\Gamma'_u = \{ \psi \in \Gamma' \mid u \models \psi\}$.
  Since all these sets are subsets of the finite set $\Gamma'$, there can
  only be finitely many of them. Thus there is a finite set $P' \subseteq P$
  such that $\{\Gamma'_u \mid u \in P\} = \{\Gamma'_u \mid u \in P'\}$. Now
  we set
  \begin{equation*}
    \varphi = \bigvee_{u \in P'} \bigwedge_{\psi \in \Gamma'_u} \psi
  \end{equation*}
  We have $\varphi \in L$ and for every $w \in \Sigma^\star$, $w \in P \iff
  w \models \varphi$ as required.
\end{proof}

It is well-known \cite{I99} that for any $m,n \in \N$, the logics
$\FOVD{2}{n}$ and $\FOVDA{2}{n}{m}$, both with and without the successor
predicate, have only finitely many inequivalent formulas. Thus the above
lemma applies to these logics.

\section{Structure Theorem for \texorpdfstring{\FOV[$<$]{2}}{FOV[<]2}} \label{sec:FO2}

We define boundary positions that point to the first or last occurrences of
a letter in a word, and define an $n$-ranker as a sequence of $n$ boundary
positions. In terms of \cite{STV01}, boundary positions are turtle
instructions and $n$-rankers are turtle programs of length $n$. The
following three lemmas show that basic properties about the definedness and
position of these rankers can be expressed in $\FOV[$<$]{2}$, and we use these
results to prove our structure theorem.

\begin{defi} \label{def:boundary-position}
  A \emph{boundary position} denotes the first or last
  occurrence of a letter in a given word. Boundary positions are of the form
  $d_\mathtt{a}$ where $d \in \{\rright, \rleft\}$ and $\mathtt{a} \in \Sigma$. The
  interpretation of a boundary position $d_\mathtt{a}$ on a word $w = w_1 \ldots
  w_{|w|} \in \Sigma^\star$ is defined as follows.
  \begin{displaymath}
    d_\mathtt{a}(w) =
    \begin{cases}
      \min\{i \in [1, |w|] \mid w_i = \mathtt{a}\} & \text{if } d = \rright\\
      \max\{i \in [1, |w|] \mid w_i = \mathtt{a}\} & \text{if } d = \rleft
    \end{cases}
  \end{displaymath}
  Here we set $\min\{\}$ and $\max\{\}$ to be undefined, thus $d_\mathtt{a}(w)$ is
  undefined if $a$ does not occur in $w$. A boundary position can also be
  specified with respect to a position $q \in [1, |w|]$.
  \begin{displaymath}
    d_\mathtt{a}(w, q) =
    \begin{cases}
      \min\{i \in [q+1, |w|] \mid w_i = \mathtt{a}\} & \text{if } d = \rright\\
      \max\{i \in [1, q-1] \mid w_i = \mathtt{a}\} & \text{if } d = \rleft
    \end{cases}
  \end{displaymath}
\end{defi}

\begin{defi} \label{def:n-ranker}
  Let $n$ be a positive integer. An \emph{$n$-ranker} $r$ is a sequence of
  $n$ boundary positions. The interpretation of an $n$-ranker $r = (p_1,
  \ldots, p_n)$ on a word $w$ is defined as follows.
  \begin{displaymath}
    r(w) :=
    \begin{cases}
      p_1(w) & \text{if } r = (p_1)\\
      \text{undefined} & \text{if $(p_1, \ldots, p_{n-1})(w)$ is
        undefined}\\
      p_n(w, (p_1, \ldots, p_{n-1})(w)) & \text{otherwise}
    \end{cases}
  \end{displaymath}
\end{defi}

Instead of writing $n$-rankers as a formal sequence $(p_1, \ldots, p_n)$, we
often use the simpler notation $p_1 \ldots p_n$. We denote the set of all
$n$-rankers by $R_n$, and the set of all $n$-rankers that
are defined over a word $w$ by $R_n(w)$. Furthermore, we set
$R_n^\star := \bigcup_{i \in [1, n]} R_i$ and $R_n^\star(w)
:= \bigcup_{i \in [1, n]} R_i(w)$.

\begin{defi}
  Let $r$ be an $n$-ranker. As defined above, we have $r = (p_1, \ldots,
  p_n)$ for boundary positions $p_i$. The \emph{$k$-prefix ranker} of $r$ for
  $k \in [1, n]$ is  $r_k := (p_1, \ldots, p_k)$.
\end{defi}

\begin{defi} \label{def:order-type}
  Let $i, j \in \N$. The \emph{order type} of $i$ and $j$ is defined as
  \begin{displaymath}
    \ord(i,j) =
    \begin{cases}
      < & \text{if } i < j\\
      = & \text{if } i = j\\
      > & \text{if } i > j\\
    \end{cases}
  \end{displaymath}
\end{defi}

\begin{lem}[distinguishing points on opposite sides of a ranker]
  \label{lem:ranker-sides}
  Let $n$ be a positive integer, let $u, v \in \Sigma^\star$ and let $r \in
  R_n(u) \cap R_n(v)$. Samson wins the game $\FOVD{2}{n}(u,v)$ where
  initially $\ord(x^u, r(u)) \ne \ord(x^v, r(v))$.
\end{lem}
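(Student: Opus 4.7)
The plan is to induct on $n$. For the base case $n = 1$, write $r = d_\mathtt{a}$ and enumerate the six pairs of differing order types. Either the initial letters at $x^u$ and $x^v$ already disagree (Samson wins at move $0$), or Samson plays $y$ at $r(\cdot)$ in the word in which $r$ lies on the side of $x$ on which the other word contains no $\mathtt{a}$; Delilah, needing a matching $\mathtt{a}$ on that side of $x$ in the other word, has no valid response. For example, with $d = \rright$, $x^u < r(u)$, and $x^v > r(v)$, Samson plays $y^v = r(v)$ and Delilah needs $y^u < x^u$ with letter $\mathtt{a}$, but the first $\mathtt{a}$ in $u$ is $r(u) > x^u$.

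For the inductive step, compare $\ord(x^u, r_{n-1}(u))$ with $\ord(x^v, r_{n-1}(v))$. If they differ, the induction hypothesis gives Samson an $(n-1)$-move winning strategy, which also wins the $n$-move game. Otherwise let $o$ denote the common order type. Two short observations drastically restrict the situation: $o = {=}$ is impossible, because then $\ord(x, r)$ would reduce to the direction of $p_n$ in both words, contradicting the hypothesis; and $p_n$ must point back toward $x$ from $r_{n-1}$, else $\ord(x, r) = o$ in both words. By left--right symmetry assume $o = {<}$ and $p_n = \rleft_\mathtt{a}$; then $r(\cdot)$ is the last $\mathtt{a}$ strictly before $r_{n-1}(\cdot)$, so the open interval $(r(\cdot), r_{n-1}(\cdot))$ contains no $\mathtt{a}$ in either word.

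A case analysis on $(\ord(x^u, r(u)), \ord(x^v, r(v)))$ now shows: either the letters at $x^u$ and $x^v$ already disagree (Samson wins at move $0$, notably when one side has $\ord = {=}$, forcing $x$ to be an $\mathtt{a}$ there, while the other side has $\ord = {>}$, placing $x$ strictly inside the $\mathtt{a}$-free interval $(r, r_{n-1})$), or they agree, in which case exactly one of the two words --- the one with $\ord(x, r) = {<}$ --- contains an $\mathtt{a}$ in the open interval $(x, r_{n-1})$, namely $r$ itself. Samson plays $y$ at $r(\cdot)$ in that word. Delilah must respond with an $\mathtt{a}$ on the same side of $x$ in the other word, but that word has no $\mathtt{a}$ in its corresponding open interval $(x, r_{n-1})$, so her pebble lands at position $\geq r_{n-1}(\cdot)$. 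The resulting configuration satisfies $\ord(y^u, r_{n-1}(u)) \neq \ord(y^v, r_{n-1}(v))$, and the induction hypothesis applied to the $(n-1)$-ranker $r_{n-1}$ with the roles of $x$ and $y$ interchanged --- legitimate because the \ef{} game is symmetric in its two pebble labels --- finishes the remaining $n - 1$ moves.

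The main obstacle is not a single hard step but keeping the case analysis in order: one has to track which $\ord$-pairs force a letter mismatch at $x$ (won at move $0$) versus which demand the single play of $y$ at $r$, and one has to note explicitly that the lemma transfers between the pebbles $x$ and $y$ by the symmetry of the \ef{} game.
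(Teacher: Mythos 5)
Your proof is correct and follows essentially the same route as the paper's: induct on $n$ via the prefix ranker $r_{n-1}$, have Samson place $y$ on $r$ in the word where $r$ lies strictly between $x$ and $r_{n-1}$, so that Delilah's matching $\mathtt{a}$ is forced onto the far side of $r_{n-1}$, and then apply the induction hypothesis to $r_{n-1}$ with the two pebbles exchanged. Your explicit enumeration of order-type pairs is in fact somewhat more careful than the paper's blanket appeal to symmetry (which, read literally, does not reduce the pair $\ord(x^u,r(u))={=}$, $\ord(x^v,r(v))={>}$ to the treated case), but the underlying argument is identical.
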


\begin{proof}
  We only look at the case where $x^u \ge r(u)$ and $x^v < r(v)$ since all
  other cases are symmetric to this one. For $n = 1$ Samson has a winning
  strategy: If $r$ is the first occurrence of a letter, then Samson places
  $y$ on $r(u)$ and Delilah cannot reply. If $r$ marks the last occurrence
  of a letter in the whole word, then Samson places $y$ on $r(v)$. Again,
  Delilah cannot reply with any position and thus loses.

  \piccaption{\label{fig:ranker-sides-2}The case $r_{n-1}(u) < r(u)$}
  \parpic(7cm,3.3cm)[fr]{
      \begin{tikzpicture}
        \word[6cm]{\wordu}{$u$}
        \word[6cm]{\wordv}{$v$}
        \dwordpos{4cm}{$r$}
        \dwordpos{2cm}{$r_{n-1}$}
        \wordupoint{5cm}{$x$}
        \wordvpoint{3cm}{$x$}
        \wordupoint{4cm}{$S:y$}
        \wordvpoint{1cm}{$D:y$}
      \end{tikzpicture}
    }

  For $n > 1$, we look at the prefix ranker $r_{n-1}$ of $r$. One of the
  following two cases applies.
  \begin{enumerate}[(1)]
  \item
    \parpic(7cm,2.8cm)[r]{} $r_{n-1}(u) < r(u)$, as shown in
    Fig.~\ref{fig:ranker-sides-2}. Samson places pebble $y$ on $r(u)$, and
    Delilah has to reply with a position that is
    to the left of $x^v$. She cannot
    choose a position in the interval $(r_{n-1}(v), r(v))$, because this
    section does not contain the letter $u_{r(u)}$. Thus she has to choose a
    position left of or equal to $r_{n-1}(v)$. By induction Samson wins the
    remaining game.\medskip

  \item
    \piccaption{\label{fig:ranker-sides-3}The case $r(u) < r_{n-1}(u)$}
    \parpic(7cm,3.3cm)[fr]{
      \begin{tikzpicture}
        \word[6cm]{\wordu}{$u$}
        \word[6cm]{\wordv}{$v$}
        \dwordpos{2cm}{$r$}
        \dwordpos{4cm}{$r_{n-1}$}
        \wordupoint{3cm}{$x$}
        \wordvpoint{1cm}{$x$}
        \wordvpoint{2cm}{$S:y$}
        \wordupoint{5cm}{$D:y$}
      \end{tikzpicture}
    }
    $r(u) < r_{n-1}(u)$, as shown in Fig.
    \ref{fig:ranker-sides-3}. Samson places $y$ on $r(v)$, and Delilah has to
    reply with a position to the right of $x^u$ and thus to the right of $r(u)$.
    She cannot choose any position in
    $(r(u), r_{n-1}(u))$, because this interval does not
    contain the letter $v_{r(v)}$, thus Delilah has to choose a position
    to the right of or equal to $r_{n-1}(u)$. By induction Samson wins the
    remaining game.\qedhere
  \end{enumerate}
\end{proof}

\pagebreak[3]

\begin{lem}[expressing the definedness of a ranker] \label{lem:ranker-def}
  Let $n$ be a positive integer, and let $r \in R_n$. There is a
  formula $\varphi_r \in \FOVD[$<$]{2}{n}$ such that for all $w \in
  \Sigma^\star$, $w \models \varphi_r \iff r \in R_n(w)$.
\end{lem}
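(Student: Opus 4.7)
The plan is to proceed by induction on $n$. The base case $n=1$ is direct: for $r = (d_\mathtt{a})$, the formula $\varphi_r := \exists x\, Q_\mathtt{a}(x)$ lies in $\FOVD[$<$]{2}{1}$ and says exactly that $\mathtt{a}$ occurs in $w$, i.e.\ that $r \in R_1(w)$.

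For the inductive step, rather than construct $\varphi_r$ syntactically (which gets awkward at positions where the ranker changes direction, because one must forbid intermediate occurrences of the relevant letter using an additional variable), I would apply Lemma~\ref{lem:fin-equiv} with $L = \FOVD[$<$]{2}{n}$ and $P = \{w : r \in R_n(w)\}$. By the remark following that lemma, $L$ has only finitely many inequivalent formulas, so it is enough to exhibit, for each pair $(u,v)$ with $r \in R_n(u)$ and $r \notin R_n(v)$, some distinguishing formula in $L$. By Fact~\ref{fact:ef1}, this is equivalent to showing Samson wins the game $\FOVD{2}{n}(u,v)$.

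Given such $u, v$, let $k \in [1, n]$ be minimal with $r_k \notin R_k(v)$. If $k < n$, the induction hypothesis applied to the $k$-ranker $r_k$ yields a formula $\varphi_{r_k} \in \FOVD[$<$]{2}{k} \subseteq \FOVD[$<$]{2}{n}$ that already separates $u$ from $v$, since $r_k \in R_k(u)$ as a prefix of the defined ranker $r$. The interesting case is $k = n$: then $r_{n-1}$ is defined in both $u$ and $v$, while, writing $p_n = d_n^{\mathtt{a}_n}$, no $\mathtt{a}_n$ lies in direction $d_n$ from $r_{n-1}(v)$. Samson's strategy is to play pebble $x$ on $r_n(u)$ in $u$ on move~1. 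If no $\mathtt{a}_n$-position in $v$ furnishes Delilah a legal response she loses immediately; otherwise any legal reply $q$ satisfies $v_q = \mathtt{a}_n$ and hence sits on the wrong side of $r_{n-1}(v)$, giving $\ord(x^u, r_{n-1}(u)) \neq \ord(x^v, r_{n-1}(v))$. At that point Samson invokes Lemma~\ref{lem:ranker-sides} with the $(n-1)$-ranker $r_{n-1}$ to win the remaining $n-1$ moves, so Samson wins $\FOVD{2}{n}(u,v)$ in at most $n$ moves total.

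The main delicate point will be verifying that the order-type hypothesis of Lemma~\ref{lem:ranker-sides} actually holds after move~1, irrespective of which $\mathtt{a}_n$-position Delilah chooses. This is a short case split on $d_n$: for $d_n = \rright$, the definition of $r_n(u)$ forces $r_n(u) > r_{n-1}(u)$, while $r_n \notin R_n(v)$ forces every $\mathtt{a}_n$-position $q$ in $v$ to satisfy $q \leq r_{n-1}(v)$; the case $d_n = \rleft$ is symmetric. Once this is checked the inductive step closes cleanly with no further bookkeeping.
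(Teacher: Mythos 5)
Your proof is correct and follows essentially the same route as the paper's: reduce via Lemma~\ref{lem:fin-equiv} and Fact~\ref{fact:ef1} to an \ef{} game, locate the shortest prefix ranker undefined over $v$, have Samson play on its position in $u$, and finish with Lemma~\ref{lem:ranker-sides} applied to the preceding prefix ranker. The only (cosmetic) difference is that you dispatch the case where the failure occurs at a proper prefix by invoking the induction hypothesis to get a lower-depth formula, whereas the paper runs the same game argument uniformly for any failure index $i \le n$.
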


\begin{proof}
  Using Lemma $\ref{lem:fin-equiv}$ it suffices to consider arbitrary $u, v
  \in \Sigma^\star$ with $r \in R_n(u)$ and $r \notin R_n(v)$, and using
  Fact \ref{fact:ef1}, it suffices to show that Samson wins the game
  $\FOVD{2}{n}(u,v)$. If $r_1$, the shortest prefix ranker of $r$, is not
  defined over $v$, the letter referred to by $r_1$ occurs in $u$ but does
  not occur in $v$. Thus Samson easily wins in one move.

  \piccaption{\label{fig:rankpos-4}$r_i(v)$ is undefined}
  \parpic(5cm,3.3cm)[fr]{
    \begin{tikzpicture}
      \word[4cm]{\wordu}{$u$}
      \word[4cm]{\wordv}{$v$}
      \wordpos[$r_i$]{\wordu}{1cm}
      \dwordpos{2cm}{$r_{i-1}$}
      \wordupoint{1cm}{$S:x$}
      \wordvpoint{3cm}{$D:x$}
    \end{tikzpicture}
  }
  Otherwise we let $r_i = (p_1, \ldots, p_i)$ be the shortest prefix ranker of
  $r$ that is undefined over $v$. Thus $r_{i-1}$ is defined over both words.
  Without loss of generality we assume that $p_i = \rleft_\mathtt{a}$. This
  situation is illustrated in Fig.~\ref{fig:rankpos-4}. Notice that $v$
  does not contain any \texttt{a}'s to the left of $r_{i-1}(v)$, otherwise
  $r_i$ would be defined over $v$. Samson places $x$ in $u$ on $r_i(u)$, and
  Delilah has to reply with a position right of or equal to $r_{i-1}(v)$.
  Now Lemma \ref{lem:ranker-sides} applies and Samson wins in $i-1$ more
  moves.\qedconf
\end{proof}

\begin{lem}[position of a ranker]
  \label{lem:ranker-exp}
  Let $n$ be a positive integer and let $r \in R_n$. There is a formula
  $\psi_r \in \FOVD[$<$]{2}{n}$ such that for all $w \in \Sigma^\star$ and
  for all $i \in [1, |w|]$, $(w, i) \models \psi_r$ $\iff$ $i = r(w)$.
\end{lem}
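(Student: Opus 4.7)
My plan is to mirror the proof structure of Lemma \ref{lem:ranker-def}: reduce to pairwise distinguishability using Lemma \ref{lem:fin-equiv}, convert via \ef{} games, and dispatch the two resulting cases with Lemma \ref{lem:ranker-def} and Lemma \ref{lem:ranker-sides}, respectively.

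First, I would apply Lemma \ref{lem:fin-equiv} to the class of pointed word structures $(w,i)$, noting that $\FOVD[$<$]{2}{n}$ viewed as formulas with one free variable $x$ still has only finitely many inequivalent members, so the same proof of the lemma goes through in this enlarged setting. It then suffices to produce, for each $u$ with $r(u) = i$ and each $v$ with $r(v)$ undefined or $r(v) \neq j$, a depth-$n$ formula in the free variable $x$ separating $(u,i)$ from $(v,j)$. By the standard adaptation of Fact \ref{fact:ef1} where the $x$-pebble is preplaced at the given positions, this is equivalent to exhibiting a winning strategy for Samson in $\FOVD{2}{n}((u,i),(v,j))$.

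The argument then splits into two cases, each handled entirely by a previous lemma. If $r \notin R_n(v)$, the depth-$n$ sentence $\varphi_r$ from Lemma \ref{lem:ranker-def} already distinguishes $u$ from $v$, and this sentence distinguishes $(u,i)$ from $(v,j)$ as well since it ignores the preplaced $x$. Otherwise $r(v)$ is defined and differs from $j$, so $\ord(x^u, r(u))$ is ${=}$ while $\ord(x^v, r(v)) \in \{<,>\}$; the order-type hypothesis of Lemma \ref{lem:ranker-sides} is met, and Samson wins in at most $n$ moves. Combining the case-wise distinguishing formulas via Lemma \ref{lem:fin-equiv} yields the desired $\psi_r \in \FOVD[$<$]{2}{n}$.

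I do not anticipate any real obstacle: the only genuinely new point is the very mild extension of Lemma \ref{lem:fin-equiv} and Fact \ref{fact:ef1} to word structures with $x$ preplaced, which is routine and can be dispatched in a single sentence. All of the combinatorial work is already packaged inside Lemmas \ref{lem:ranker-sides} and \ref{lem:ranker-def}; this lemma is essentially a corollary obtained by recognizing that matching $x$ with $r(w)$ is exactly the ``equal order type'' case of Lemma \ref{lem:ranker-sides}.
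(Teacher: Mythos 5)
Your proposal is correct and follows essentially the same route as the paper: reduce to pairwise distinguishability via Lemma \ref{lem:fin-equiv} and the game characterization with $x$ preplaced at $r(u)$ in $u$ and at $j \ne r(v)$ in $v$, then split on whether $r$ is defined over $v$, invoking Lemma \ref{lem:ranker-sides} when it is and the strategy of Lemma \ref{lem:ranker-def} when it is not. The paper's proof is exactly this two-line case analysis, leaving the extension to pointed structures implicit, so your added remark about adapting Lemma \ref{lem:fin-equiv} and Fact \ref{fact:ef1} is just a slightly more explicit rendering of the same argument.
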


\begin{proof}
  As in the proof of Lemma \ref{lem:ranker-def}, it suffices to show that 
  for arbitrary $u, v \in \Sigma^\star$, Samson wins the game
  $\FOVD{2}{n}(u, v)$ where initially $x^u = r(u)$
  and $x^v \ne r(v)$. If $r(v)$ is
  defined over $v$, then we can apply Lemma \ref{lem:ranker-sides}
  immediately to get the desired strategy for Samson. Otherwise we use the
  strategy from Lemma \ref{lem:ranker-def}.\qedconf
\end{proof}

\begin{thm}[structure of {\FOVD[$<$]{2}{n}}] \label{thm:ranker-char}
  Let $u$ and $v$ be finite words, and let $n \in \N$. The following two
  conditions are equivalent.
  \begin{enumerate}[\em(i)]
  \item
    \begin{enumerate}[\em(a)]
    \item $R_n(u) = R_n(v)$, and,
    \item for all $r \in R_n^\star(u)$ and $r'
      \in R_{n-1}^\star(u)$, $\ord(r(u), r'(u)) = \ord(r(v), r'(v))$
    \end{enumerate}
  \item $u \equiv^2_n v$
  \end{enumerate}
\end{thm}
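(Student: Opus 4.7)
The plan is to prove the two directions separately. Direction (ii) $\Rightarrow$ (i) is a consequence of the expressibility lemmas: for (a), Lemma~\ref{lem:ranker-def} provides, for each $k$-ranker $r$ with $k \le n$, a $\FOVD[$<$]{2}{k} \subseteq \FOVD[$<$]{2}{n}$ sentence $\varphi_r$ expressing definedness, so $u \equiv^2_n v$ forces $R_k(u) = R_k(v)$ and in particular (a). For (b), I would construct, for each pair $r \in R_k^\star$ with $k \le n$ and $r' \in R_{k'}^\star$ with $k' \le n-1$, and each $c \in \{<, =, >\}$, a $\FOVD[$<$]{2}{n}$ sentence asserting both rankers are defined with $\ord(r(w), r'(w)) = c$. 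This is done by extending the chained-quantifier construction of $\varphi_r$ with a final block that locates $r'(w)$ via the other variable; the asymmetric budget ($n$ vs $n-1$) is precisely what allows nesting the $r'$-witness inside the $r$-witness without exceeding depth $n$, by reusing the two variables alternately along the combined chain.

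For (i) $\Rightarrow$ (ii), the harder direction, I would induct on $n$ using a pebbled generalization of the statement: if pebbled structures $(u, i, j)$ and $(v, i', j')$ agree on (i) at depth $n$, on the letters and mutual order at their placed pebbles, and on the order type of each placed pebble against every ranker in $R_{n-1}^\star$, then Delilah wins the $n$-move \ef{} game. The theorem is the no-pebbles case. For the inductive step, after Samson places (say) pebble $x$ on position $p \in u$, Delilah seeks $q \in v$ with $v_q = u_p$, with matching order vs.\ the other pebble, and with $\ord(q, r'(v)) = \ord(p, r'(u))$ for every $r' \in R_{n-2}^\star$. Existence of such $q$ is the key combinatorial step: for each $r' \in R_{n-2}^\star$ and each letter-and-side profile, the positions realizing that profile are described by an $(n-1)$- or $n$-ranker obtained from $r'$ by appending a boundary position of the form $\rright_{u_p}$ or $\rleft_{u_p}$, and its definedness and position in $v$ are controlled by (a) and (b) at depth $n$.

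The main obstacle is verifying that the strengthened invariant propagates after Delilah's move, so the inductive hypothesis applies at depth $n - 1$. Lemma~\ref{lem:ranker-sides} is the key lever: contrapositively, any surviving order-type mismatch between a placed pebble and a ranker gives Samson a winning continuation in at most $n - 1$ moves, so Delilah must simultaneously satisfy all the new pebble-vs-$R_{n-2}^\star$-ranker conditions. The bookkeeping reduces each required order type at depth $n - 1$ to an $n$-ranker witness in the original structures controlled by (a) and (b); the challenge lies not in any individual check but in their combined consistency, which is what makes the inductive step non-trivial.
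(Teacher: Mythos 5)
Your overall architecture for the hard direction matches the paper's: a pebbled generalization (the paper's Theorem~\ref{thm:structure-variables}) proved by induction on $n$, with Delilah answering an off-ranker move by appending a boundary position $\rright_{u_p}$ or $\rleft_{u_p}$ to a neighboring ranker, and Lemma~\ref{lem:ranker-sides} supplying the contrapositive pressure. However, your strengthened invariant is mis-calibrated, and this is a genuine gap rather than a bookkeeping slip. You require the placed pebbles to agree in order type with every ranker in $R_{n-1}^\star$ in the depth-$n$ statement; the correct requirement (condition (i)(d) of Theorem~\ref{thm:structure-variables}) is agreement with every ranker in $R_n^\star$, i.e.\ up to the full remaining quantifier budget. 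With your weaker hypothesis the generalized statement is already false at $n=1$: take $u=\mathtt{ba}$ with both pebbles on position $1$ and $v=\mathtt{ab}$ with both pebbles on position $2$. Both words define exactly the same $1$-rankers, condition (i)(b) and your pebble-vs-$R_0^\star$ condition are vacuous, and both pebbled positions carry the letter $\mathtt{b}$ with $x=y$; yet Samson wins in one move by placing $x$ on the $\mathtt{a}$ of $u$, which lies to the right of $y^u$, since $v$ has no $\mathtt{a}$ to the right of $y^v$. What detects this configuration is precisely the order of the pebble against the $1$-ranker $\rright_{\mathtt{a}}$, which your invariant omits. The same off-by-one propagates into your inductive step, where you only ask Delilah's new pebble to agree with $R_{n-2}^\star$ instead of $R_{n-1}^\star$, so the induction does not close.

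Once the invariant is corrected, the ``combined consistency'' you flag as the main difficulty resolves itself in a way your sketch does not quite capture: Delilah does not satisfy the new pebble-vs-ranker constraints profile by profile, but places her pebble on the position of a \emph{single} ranker $s=(\lambda,\rright_{\mathtt{a}})\in R_n^\star$ (or $s=(\rho,\rleft_{\mathtt{a}})$ when the other pebble lies inside the same gap $(\lambda,\rho)$ and she must also land on its correct side, which is where condition (i)(d) for the old pebble is used); conditions (i)(a) and (i)(b) at depth $n$ then govern $s$ against all of $R_{n-1}^\star$ simultaneously. Two smaller caveats on your easy direction: deriving (i)(b) from a directly constructed depth-$n$ sentence needs care, since conjoining the depth-$n$ formula $\psi_r$ with a further existential block locating $r'(w)$ naively costs depth $n+1$; the paper instead obtains this half by exhibiting explicit Samson strategies (routed through the prefix ranker $r'_{n-1}$) and invoking Lemma~\ref{lem:fin-equiv}, which is the cleaner non-circular route.
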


Notice that condition (i)(a) is equivalent to $R_{n}^\star(u) =
R_{n}^\star(v)$. 
Instead of proving Theorem \ref{thm:ranker-char} directly, we prove the
following more general version on words with two interpreted variables.

\begin{thm} \label{thm:structure-variables}
  Let $u$ and $v$ be finite words, let $i_1, i_2 \in [1,|u|]$, let $j_1,
  j_2 \in [1,|v|]$, and let $n \in \N$. The following two conditions are
  equivalent.
  \begin{enumerate}[\em(i)]
  \item
    \begin{enumerate}[\em(a)]
    \item $R_n(u) = R_n(v)$, and,
    \item for all $r \in R_n^\star(u)$ and $r'
      \in R_{n-1}^\star(u)$, $\ord(r(u), r'(u)) = \ord(r(v), r'(v))$, and,
    \item $(u, i_1, i_2) \equiv^2_0 (v, j_1, j_2)$, and,
    \item for all $r \in R_n^\star(u)$, $\ord(i_1, r(u)) = \ord(j_1,
      r(v))$ and $\ord(i_2, r(u)) = \ord(j_2, r(v))$
    \end{enumerate}
  \item $(u, i_1, i_2) \equiv^2_n (v, j_1, j_2)$
  \end{enumerate}
\end{thm}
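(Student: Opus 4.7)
The plan is to prove Theorem \ref{thm:structure-variables} by induction on $n$. The base case $n = 0$ is immediate: $R_0$, $R_0^\star$, and $R_{-1}^\star$ are all empty, so conditions (i)(a), (b), and (d) are vacuous and (i) collapses to (c), which is the definition of $\equiv^2_0$. For the inductive step I assume the theorem at $n - 1$ and establish it at $n$.

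The direction $(ii) \Rightarrow (i)$ I handle contrapositively. If clause (a) fails via some $r \in R_n$ defined on only one of the two words, the sentence $\varphi_r$ of Lemma \ref{lem:ranker-def} separates them. If clause (c) fails, a quantifier-free formula already separates. For clauses (b) and (d), the proper separating formulas are more delicate to write down explicitly but can be obtained via \ef{} games along the lines of Lemma \ref{lem:ranker-sides}: Samson exploits an order-type discrepancy between $r \in R_n^\star$ and an auxiliary anchor (either $r' \in R_{n-1}^\star$ for clause (b), or the already-placed pebble $i_1$ or $i_2$ for clause (d)). The asymmetric length bounds in (b) -- $r$ of length at most $n$ but $r'$ of length at most $n - 1$ -- are precisely what makes the move budget work out, since the auxiliary anchor can be navigated to with the spare variable in a way that still leaves the $n$ moves Lemma \ref{lem:ranker-sides} needs to dispose of $r$.

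The direction $(i) \Rightarrow (ii)$ I handle by exhibiting Delilah's winning strategy. Without loss of generality, Samson places $x$ on some position $k \in [1, |u|]$; the other three symmetric cases (playing $y$, or playing in $v$) are analogous. Delilah will respond with a position $k' \in [1, |v|]$ chosen so that the new configuration $(u, k, i_2)$ versus $(v, k', j_2)$ continues to satisfy (i), now at parameter $n - 1$, after which the inductive hypothesis supplies her strategy for the remaining $n - 1$ moves. Conditions (a) and (b) at level $n - 1$ descend from (a) and (b) at level $n$ (using the paper's observation that (i)(a) is equivalent to $R_n^\star(u) = R_n^\star(v)$), and the $y$-half of (d) at level $n - 1$ descends from (d) at level $n$. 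So Delilah's real task is to select $k'$ satisfying $v_{k'} = u_k$, $\ord(k, i_2) = \ord(k', j_2)$, and $\ord(k, r(u)) = \ord(k', r(v))$ for every $r \in R_{n-1}^\star(u)$. Setting $T_u := \{r(u) : r \in R_{n-1}^\star(u)\} \cup \{i_2\}$ and $T_v$ analogously, the hypotheses (i)(a), (b), (d) make the tautological map $\phi : T_u \to T_v$ a well-defined, strictly order-preserving bijection. If $k \in T_u$, then $k' := \phi(k)$ trivially works; otherwise, letting $a := u_k$ and letting $\ell^* < k < r^*$ be the closest elements of $T_u$ to $k$ (with $0$ and $|u|+1$ as defaults), I seek an $a$-position of $v$ inside the corresponding open gap $(\phi(\ell^*), \phi(r^*))$.

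The main obstacle is this last gap-filling existence claim. My plan is to pick a shortest ranker $s \in R_{n-1}^\star(u)$ witnessing $\ell^*$ when $\ell^*$ is itself a ranker position -- with a symmetric treatment using $r^*$ when $\ell^* = i_2$ is only the variable anchor, and a direct appeal to the 1-rankers $\rright_a$, $\rleft_a$ together with condition (i)(d) in the remaining edge cases (including the case $n = 1$ and the cases $\ell^* = 0$ or $r^* = |u|+1$) -- and then to consider the extension $s \cdot \rright_a$, which has length at most $n$ and so lies in $R_n^\star$. Its $u$-value is the first $a$-position strictly past $\ell^*$ and so lies in $(\ell^*, k]$. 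A short minimality argument then forces $s \cdot \rright_a$ to have length exactly $n$: otherwise $s \cdot \rright_a$ would itself lie in $R_{n-1}^\star$, so its value would either equal $k$ (placing $k$ in $T_u^{\rm rank}$, contrary to the case assumption) or lie strictly between $\ell^*$ and $k$ (contradicting the maximality of $\ell^*$). Hence $s \cdot \rright_a \in R_n(u)$, so by (i)(a) also in $R_n(v)$, and (i)(b) together with the $i_2$-clause of (i)(d) pins its $v$-value inside $(\phi(\ell^*), \phi(r^*))$, yielding a valid $k'$ and completing the inductive step. The delicate interaction here -- that genuine $n$-rankers (and not merely $(n-1)$-rankers) are needed as witnesses even though the gaps are defined by $(n-1)$-ranker anchors -- is the precise reason the theorem couples the two length bounds in condition (i)(b).
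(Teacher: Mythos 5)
Your direction (i) $\Rightarrow$ (ii) is correct and is essentially the paper's construction: the paper likewise has Delilah echo the ranker when Samson plays on a pinned position, and otherwise answer with the value of a freshly built ranker $(\lambda, \rright_{\mathtt{a}})$ or $(\rho, \rleft_{\mathtt{a}})$ anchored at the nearest pinned position; the only cosmetic difference is that you fold the second pebble $i_2$ into the anchor set $T_u$, where the paper instead case-splits on whether $y^u$ lies in the interval $(\lambda(u),\rho(u))$ and flips the direction of the appended boundary position accordingly. Your observation that the witness ranker genuinely needs length $n$, so that clause (i)(b) must couple $r \in R_n^\star$ with $r' \in R_{n-1}^\star$, matches the paper's use of these conditions exactly.

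The gap is in the contrapositive direction, in the sub-case of clause (b) where $r$ has length exactly $n$. Your budget claim --- that ``the auxiliary anchor can be navigated to with the spare variable in a way that still leaves the $n$ moves Lemma~\ref{lem:ranker-sides} needs to dispose of $r$'' --- does not add up: pinning the anchor $r'$ costs one move (backed by a threat of at most $n-1$ further moves, which is fine), but that leaves only $n-1$ moves, whereas Lemma~\ref{lem:ranker-sides} needs up to $n$ moves to exploit a side discrepancy with a length-$n$ ranker; pinning $r$ first is worse, since the threat backing that pin already consumes the whole budget. This is precisely why the paper splits the argument: the two-pin strategy is used only when both rankers have length at most $n-1$, and every case involving a ranker of maximal length (whether a definedness discrepancy or an order discrepancy against a shorter ranker) is first reduced to the claim that some letter occurs in the gap between two \emph{consecutive} shorter-ranker positions in $u$ but not in the corresponding gap in $v$; Samson then spends one move placing a pebble on that letter, and whichever bounding ranker Delilah flees past, Lemma~\ref{lem:ranker-sides} applied to that ranker (of length at most $n-1$) finishes within the remaining $n-1$ moves. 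That reduction --- and the verification that it is always available when the two-pin case does not apply --- is the missing idea; your use of Lemma~\ref{lem:ranker-def} covers the pure definedness part of clause (a), but without the gap argument the order discrepancies involving full-length rankers are not handled and the contrapositive direction does not close.
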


\begin{proof}
  For $n=0$, (i)(a), (i)(b) and (i)(d) are vacuous, and (i)(c) is equivalent
  to (ii). For $n \ge 1$, we prove the two implications individually using
  induction on $n$.

  We first show ``$\neg \text{(i)} \Rightarrow \neg \text{(ii)}$''. Assuming
  that (i) holds for $n \in \N$ but fails for $n+1$, we show that $(u,
  i_1, i_2) \not\equiv^2_{n+1} (v, j_1, j_2)$ by giving a winning
  strategy for Samson in the \FOVD{2}{n+1} game on the two structures. If (i)(c)
  does not hold, then Samson wins immediately. If (i)(d) does not
  hold for $n+1$, then Samson wins by Lemma \ref{lem:ranker-sides}. If
  (i)(a) or (i)(b) do not hold for $n+1$, then one of the following three
  cases applies.
  \begin{enumerate}[(1)]
  \item There is an $(n+1)$-ranker that is defined over one word but not
    over the other.
  \item There are two $n$-rankers that do not agree on their ordering in $u$
    and $v$.
  \item There is an $(n+1)$-ranker that does not appear in the same order on
    both structures with respect to a $k$-ranker where $k \le n$.
  \end{enumerate}

  We first look at case (2) where there are two rankers $r, r' \in
  R_n^\star(u)$ that disagree on their ordering in $u$ and $v$.
  Without loss of generality
  we assume that $r(u) \le r'(u)$ and $r(v) > r'(v)$, and present a winning
  strategy for Samson in the $\FOVD{2}{n+1}$ game. In the first move he places
  $x$ on $r(u)$ in $u$. Delilah has to reply with $r(v)$ in $v$, otherwise
  she would lose the remaining $n$-move game as shown in Lemma
  \ref{lem:ranker-sides}. Let $r'_{n-1}$ be the $(n-1)$-prefix-ranker of
  $r'$. We look at two different cases depending on the ordering of
  $r'_{n-1}$ and $r'$.

  \piccaption{\label{fig:char-order-1}Two $n$-rankers appear in different
    order and $r'$ ends with $\rright$.}
  \parpic(7cm,3.3cm)[fr]{
    \begin{tikzpicture}
      \word[6cm]{\wordu}{$u$}
      \word[6cm]{\wordv}{$v$}
      \dwordpos{2cm}{$r'_{n-1}$}
      \wordupos{3cm}{$r$}
      \dwordpos{4cm}{$r'$}
      \wordvpos{5cm}{$r$}
      \wordupoint{3cm}{$S:x$}
      \wordvpoint{5cm}{$D:x$}
      \wordvpoint{4cm}{$S:y$}
      \wordupoint{1cm}{$D:y$}
    \end{tikzpicture}
  } 
  
  For $r'_{n-1}(u) < r'(u)$, the situation is illustrated in Fig.
  \ref{fig:char-order-1}. In his second move, Samson places $y$ on $r'(v)$.
  Delilah has to reply with a position to the left of $x^u$, but she cannot
  choose any position from the interval $(r'_{n-1}(u), r'(u))$ because it
  does not contain the letter $v_{y^v}$. So she has to reply with
  a position left of or equal to $r'_{n-1}(u)$, and Samson wins the
  remaining \FOVD{2}{n-1} game as shown in Lemma \ref{lem:ranker-sides}.\bigskip

  \piccaption{\label{fig:char-order-2}Two $n$-rankers appear in different
    order and $r'$ ends with $\rleft$.}
  \parpic(7cm,3.3cm)[fr]{
    \begin{tikzpicture}
      \word[6cm]{\wordu}{$u$}
      \word[6cm]{\wordv}{$v$}
      \dwordpos{4cm}{$r'_{n-1}$}
      \wordupos{1cm}{$r$}
      \dwordpos{2cm}{$r'$}
      \wordvpos{3cm}{$r$}
      \wordupoint{1cm}{$S:x$}
      \wordvpoint{3cm}{$D:x$}
      \wordupoint{2cm}{$S:y$}
      \wordvpoint{5cm}{$D:y$}
    \end{tikzpicture}
  }

  For $r'_{n-1}(u) > r'(u)$, the situation is illustrated in Fig.
  \ref{fig:char-order-2}. In his second move, Samson places pebble $y$ on $r'(u)$, and
  Delilah has to reply with a position to the right of $x^v$, but she cannot
  choose anything from the interval $(r'(v), r'_{n-1}(v))$ because this
  section does not contain the letter $u_{y^u}$. Thus she has to reply with
  a position right of or equal to $r'_{n-1}(v)$, and Samson wins the
  remaining \FOVD{2}{n-1} game as shown in Lemma \ref{lem:ranker-sides}.
  
  \piccaption{\label{fig:char-final}A letter \texttt{a} occurs between
    $n$-rankers $r, r'$ in $u$ but not in $v$}
  \parpic(5.8cm,3.3cm)[fr]{
    \begin{tikzpicture}
      \word[4cm]{\wordu}{$u$}
      \word[4cm]{\wordv}{$v$}
      \dwordpos{1cm}{$r$}
      \dwordpos{3cm}{$r'$}
      \wordupoint{2cm}{$S:x$}
      \worduletter{2cm}{\texttt{a}}
    \end{tikzpicture}
  }

  \picskip{11}
  Now we look at cases (1) and (3), assuming that case (2) does not apply.
  We know that condition (i) from the statement of the theorem fails, but
  still all $n$-rankers agree on their ordering. In both case (1) and case (3),
  there are two
  consecutive $n$-rankers $r, r' \in R_{n}(u)$ with $r(u) < r'(u)$ and a
  letter $\mathtt{a} \in \Sigma$ such that without loss of generality
  \texttt{a} occurs in the segment $u_{((r(u), r'(u))}$ but not in the
  segment $v_{(r(v), r'(v))}$. We describe a winning strategy for Samson in
  the game $\FOVD{2}{n+1}(u,v)$. He places $x$ on an \texttt{a} in the
  segment $(r(u), r'(u))$ of $u$, as shown in Fig.~\ref{fig:char-final}.
  Delilah cannot reply with anything in the interval $(r(v), r'(v))$. If she
  replies with a position left of or equal to $r(v)$, then $x$ is on
  different sides of the $n$-ranker $r$ in the two words. Thus Lemma
  \ref{lem:ranker-sides} applies and Samson wins the remaining $n$-move
  game. If Delilah replies with a position right of or equal to $r'(v)$,
  then we can apply Lemma \ref{lem:ranker-sides} to $r'$ and get a winning
  strategy for the remaining game as well. This concludes the proof of
  ``$\neg \text{(i)} \Rightarrow \neg \text{(ii)}$''.

  To show ``(i) $\Rightarrow$ (ii)'', we assume (i) for $n+1$, and present a
  winning strategy for Delilah in the $\FOVD{2}{n+1}$ game on the two
  structures. In his first move Samson picks up one of the two pebbles, and
  places it on a new position. Without loss of generality we assume that
  Samson picks up $x$ and places it on $u$ in his first move. If $x^u =
  r(u)$ for any ranker $r \in R_{n+1}^\star(u)$, then Delilah replies with
  $x^v = r(v)$. This establishes (i)(c) and (i)(d) for $n$, and thus
  Delilah has a winning strategy for the remaining \FOVD{2}{n} game by
  induction.

  If Samson does not place $x^u$ on any ranker from $R_{n+1}^\star(u)$,
  then we look at the closest rankers from $R_n^\star(u)$ to the left and
  right of $x^u$, denoted by $\lambda$ and $\rho$, respectively. Let
  $\mathtt{a} := u_{x^u}$ and define the $(n+1)$-ranker $s = (\lambda,
  \rright_\mathtt{a})$. On $u$ we have $\lambda(u) < s(u) < \rho(u)$. Because
  of (i)(a) $s$ is defined on $v$ as well, and because of (i)(b), we have
  $\lambda(v) < s(v) < \rho(v)$. If $y^u$ is not contained in the interval
  $(\lambda(u), \rho(u))$, then Delilah places $x$ on $s(v)$, which
  establishes (i)(c) and (i)(d) for $n$. Thus by induction Delilah has a
  winning strategy for the remaining \FOVD{2}{n} game.

  \piccaption{\label{fig:rankchar-1}$x$ and $y$ are in the same section}
  \parpic(7cm,3.3cm)[fr]{
    \begin{tikzpicture}
      \word[6cm]{\wordu}{$u$}
      \word[6cm]{\wordv}{$v$}
      \dwordpos{1cm}{$\lambda$}
      \dwordpos{4cm}{$s$}
      \dwordpos{5cm}{$\rho$}
      \wordupoint{2cm}{$y$}
      \wordvpoint{2cm}{$y$}
      \wordupoint{3cm}{$S:x$}
    \end{tikzpicture}
  }

  \picskip{10}
  If both pebbles $x^u$ and $y^u$ occur in the interval $(\lambda(u),
  \rho(u))$, then we need to be more careful. Without loss of generality we
  assume $y^u < x^u$ as illustrated in Fig.~\ref{fig:rankchar-1}. Thus
  Delilah has to place $x$ in the interval $(y^v, \rho(v))$ and at
  a position with letter $\mathtt{a} := u_{x^u}$. We define the
  $n+1$-ranker $s = (\rho, \rleft_\mathtt{a})$. From (i)(d) we know that $s$
  appears on the same side of $y$ in both structures, thus we have $y^v <
  s(v) < \rho(v)$. Delilah places her pebble $x$ on $s(v)$, and thus
  establishes (i)(c) and (i)(d) for $n$. By induction, Delilah has a winning
  strategy for the remaining \FOVD{2}{n} game.\qedconf
\end{proof}

A fundamental property of an $n$-ranker is that it uniquely describes a
position in a given word. Now we show that the converse holds as well: Any
position in a word that can be uniquely described with an \FOV[$<$]{2}
formula can also be described by a ranker (Lemma \ref{lem:unique-rankers}).
Furthermore, any \FOV[$<$]{2} formula that describes a unique position in
any given word is equivalent to a boolean combination of rankers (Theorem
\ref{thm:unique-rankers}).

\begin{defi}[unique position formula] \label{def:unique-pos}
  A formula $\varphi \in \FOV[$<$]{2}$ with $x$ as a free variable is a
  \emph{unique position formula} if for all $w \in \Sigma^\star$ there is at
  most one $i \in [1, |w|]$ such that $(w, i) \models \varphi$.
\end{defi}

\begin{lem} \label{lem:unique-rankers} 
  Let $n$ be a positive integer and let $\varphi \in \FOVD[$<$]{2}{n}$ be a
  unique position formula. Let $u \in \Sigma^\star$ and let $i \in [1, |u|]$
  such that $(u, i) \models \varphi$. Then $i = r(u)$ for some ranker $r \in
  R_n^\star$.
\end{lem}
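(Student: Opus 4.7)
I would argue by contradiction: assume $i \neq r(u)$ for every $r \in R_n^\star$, and exhibit a word $v$ together with two distinct positions $j_1, j_2 \in [1, |v|]$ such that $(v, j_1, j_1) \equiv^2_n (u, i, i) \equiv^2_n (v, j_2, j_2)$. Since $(u, i) \models \varphi$, Theorem \ref{thm:structure-variables} would then give both $(v, j_1) \models \varphi$ and $(v, j_2) \models \varphi$, contradicting the unique-position hypothesis. Setting $a := u_i$ and $S := \{r(u) : r \in R_n^\star(u)\}$, I note that $\rright_a, \rleft_a \in R_1 \subseteq R_n^\star$ are both defined on $u$; combined with $i \notin S$ this forces $\rright_a(u) < i < \rleft_a(u)$. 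Hence $L := \max(S \cap [1, i-1])$ and $R := \min(S \cap [i+1, |u|])$ are well-defined with $L < i < R$ and $(L, R) \cap S = \emptyset$, so every position in $(L, R)$ shares the $R_n^\star(u)$-profile of $i$.

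If some $j \in (L, R) \setminus \{i\}$ satisfies $u_j = a$, then I would take $v = u$, $j_1 = i$, $j_2 = j$; Theorem \ref{thm:structure-variables} immediately gives $(u, i, i) \equiv^2_n (u, j, j)$, completing the contradiction. Otherwise $i$ is the unique $a$-position in $(L, R)$, and I would let $v$ be $u$ with an extra $a$ inserted right after position $i$, define the monotone injection $\iota : [1, |u|] \to [1, |v|]$ by $\iota(k) = k$ for $k \leq i$ and $\iota(k) = k+1$ for $k > i$, and take $j_1 = i$, $j_2 = i+1$. The core technical step is the \emph{commutation claim}: for every $r \in R_n^\star$, $r$ is defined on $u$ iff on $v$, in which case $r(v) = \iota(r(u))$. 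I would prove this by induction on the length $k$ of $r$; the crucial input is that for $k \geq 2$ the $(k-1)$-prefix $r'$ lies in $R_{n-1}^\star \subseteq R_n^\star$, so $r'(u) \in S$ and in particular $r'(u) \neq i$. A short case analysis on the direction of the final boundary position, on whether it seeks letter $a$ or another letter, and on the sign of $r'(u) - i$, then confirms the commutation; the only configurations in which the newly inserted $a$ at position $i+1$ could disrupt a step are precisely those in which the ranker would evaluate to $i$ in $u$, contradicting $i \notin S$.

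Granted the commutation claim, the four clauses of Theorem \ref{thm:structure-variables} for each of the equivalences $(u, i, i) \equiv^2_n (v, i, i)$ and $(u, i, i) \equiv^2_n (v, i+1, i+1)$ follow easily: (a) is the definedness correspondence; (b) follows from monotonicity of $\iota$; (c) holds because $v_i = v_{i+1} = a = u_i$; and (d) holds because for every $r \in R_n^\star(u)$ the shifted position $r(v) = \iota(r(u))$ lies on the same side of both $i$ and $i+1$ in $v$ as $r(u)$ does of $i$ in $u$ (using $r(u) \neq i$ again). The main obstacle is thus the case analysis underpinning the commutation claim, specifically ruling out the ``shortcut'' through the inserted $a$; the rest of the argument is a direct bookkeeping exercise with the structure theorem.
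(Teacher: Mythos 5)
Your proof is correct and follows essentially the same route as the paper: the paper's proof also argues by contradiction, forms $v$ by doubling the letter $u_i$ at position $i$, shows that all $n$-rankers are insensitive to this insertion because no prefix ranker can evaluate to $i$ (or to $i$, $i+1$ in $v$), and then invokes Theorem \ref{thm:structure-variables} to conclude $(u,i)\equiv^2_n(v,i)\equiv^2_n(v,i+1)$, contradicting uniqueness. Your Case 1 (another occurrence of $a$ inside the ranker-free interval $(L,R)$) is valid but redundant, since your Case 2 commutation argument uses only $i\notin S$ and therefore covers that situation as well.
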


\begin{full}
\begin{proof}
  Suppose for the sake of a contradiction that there is no ranker $r \in
  R_n^\star$ such that $(u, i) \models \varphi_r$. Because the first and
  last positions in $u$ are described by 1-rankers, we know that $i \notin
  \{1, |u|\}$.
  We construct a new word $v$ by doubling the symbol at position $i$ in $u$,
  $v = u_1 \ldots u_{i-1} u_i u_i u_{i+1} \ldots u_{|u|}$. By assumption,
  there is no $n$-ranker that describes position $i$ in $u$. A brief
  argument by contradiction shows that there are also no $n$-rankers that
  describe positions $i$ or $i+1$ in $v$: Assuming that such a ranker
  exists, let $r$ be the shortest such ranker. Thus none of the prefix
  rankers of $r$ point to either positions $i$ or $i+1$ in $v$. This means
  that all prefix rankers of $r$ are interpreted in exactly the same way on
  both $u$ and $v$, and irrespective of whether $r(v)$ points to $i$ or
  $i+1$, we have have $r(u) = i$, a
  contradiction. Hence all $n$-rankers are insensitive to the doubling of
  $u_i$, and the two
  words $u$ and $v$ agree on the definedness of all $n$-rankers and on their
  ordering. 
  By
  Theorem \ref{thm:structure-variables}, we thus have $(u, i) \equiv^2_n
  (v, i) \equiv^2_n (v, i+1)$, which contradicts the fact that $\varphi$
  is a unique position formula.\qedconf
\end{proof}
\end{full}

\begin{thm} \label{thm:unique-rankers}
  Let $n$ be a positive integer and let $\varphi \in \FOVD[$<$]{2}{n}$ be a
  unique position formula. There is a $k \in \N$, and there
  are mutually exclusive formulas $\alpha_i \in \FOVD[$<$]{2}{n}$ and rankers
  $r_i \in R_n^\star$ such that
  \begin{displaymath}
    \varphi \equiv 
    \bigvee_{i \in [1,k]} \left( \alpha_i \land \varphi_{r_i} \right)
  \end{displaymath}
  where $\varphi_{r_i} \in \FOVD[$<$]{2}{n}$ is the formula from Lemma
  \ref{lem:ranker-exp} that uniquely describes the ranker $r_i$.
\end{thm}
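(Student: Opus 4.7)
My plan is to decompose $\varphi$ along the $\equiv^2_n$-equivalence classes of pointed words, and then to show that within each such class the distinguished position is picked out uniformly by a single ranker in $R_n^\star$.

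Since $\FOVD[$<$]{2}{n}$ has only finitely many inequivalent formulas in the free variable $x$, the relation $\equiv^2_n$ on pointed words $(w,i)$ has only finitely many classes. Let $C_1, \ldots, C_k$ enumerate those classes whose members satisfy $\varphi$, and for each $j$ let $\gamma_j \in \FOVD[$<$]{2}{n}$ be a characteristic formula of $C_j$ (for example, the conjunction, up to equivalence, of all $\FOVD[$<$]{2}{n}$-formulas true at some fixed representative $(w_j, i_j) \in C_j$). Then the $\gamma_j$ are mutually exclusive and $\varphi \equiv \bigvee_{j=1}^k \gamma_j$. Each $\gamma_j$ implies $\varphi$ and is therefore itself a unique position formula, so by Lemma \ref{lem:unique-rankers} there is a ranker $r_j \in R_n^\star$ with $i_j = r_j(w_j)$.

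The key step is to check that this single ranker $r_j$ captures the witness for every member of $C_j$: whenever $(w, i) \models \gamma_j$, one has $i = r_j(w)$. I would prove this by a one-line application of Theorem \ref{thm:structure-variables} in the direction (ii)\,$\Rightarrow$\,(i). Since $(w, i) \models \gamma_j$ we have $(w, i, 1) \equiv^2_n (w_j, i_j, 1)$, so condition (i)(d), instantiated with $r := r_j \in R_n^\star$, gives $\ord(i, r_j(w)) = \ord(i_j, r_j(w_j))$. Because $i_j = r_j(w_j)$ the right-hand side is the equality order type, forcing $r_j$ to be defined on $w$ and $i = r_j(w)$. Hence $\gamma_j \equiv \gamma_j \wedge \varphi_{r_j}$, with $\varphi_{r_j} \in \FOVD[$<$]{2}{n}$ provided by Lemma \ref{lem:ranker-exp}.

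Setting $\alpha_j := \gamma_j$ then yields $\varphi \equiv \bigvee_{j=1}^k (\alpha_j \wedge \varphi_{r_j})$; the $\alpha_j$ are mutually exclusive by construction, and both they and the $\varphi_{r_j}$ lie in $\FOVD[$<$]{2}{n}$. The only nontrivial ingredient is the uniform-ranker claim, which reduces immediately to Theorem \ref{thm:structure-variables}; the remainder is bookkeeping about finite types and Lemmas \ref{lem:unique-rankers} and \ref{lem:ranker-exp}.
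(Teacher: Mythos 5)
Your proof is correct and takes essentially the same route as the paper: decompose $\varphi$ into the finitely many complete $\fo^2_n$-types that satisfy it, apply Lemma \ref{lem:unique-rankers} to a representative of each type to obtain a ranker $r_i$, and then transfer $\varphi_{r_i}$ to every member of the type. The paper handles that last transfer step a bit more cheaply---since $\alpha_i$ is a complete $\fo^2_n$ type and $\varphi_{r_i}\in\fo^2_n$, the implication $\alpha_i\rightarrow\varphi_{r_i}$ is immediate---whereas you re-derive the same fact via Theorem \ref{thm:structure-variables} (using (i)(a) for definedness of $r_i$ on $w$ and (i)(d) for the order type); both are fine.
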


\begin{full}
\begin{proof}
  Let $\mathcal{T}$ be the set of all \FOVD[$<$]{2}{n} types of words over
  $\Sigma$ with one interpreted variable. Because there are only finitely
  many inequivalent formulas in \FOVD[$<$]{2}{n}, $\mathcal{T}$ is finite. Let
  $\mathcal{T}' \subseteq \mathcal{T}$ be the set of all types that satisfy
  $\varphi$. We set $\mathcal{T'} = \{T_1, \ldots, T_k\}$ and let $\alpha_i
  \in \FOVD[$<$]{2}{n}$ be a description of type $T_i$.  Thus $\varphi
  \equiv \bigvee_{i \in [1,k]} \alpha_i$.

  Now suppose that $(u, j) \models \varphi$. Thus $(u, j) \models \alpha_i$
  for some $i$. By Lemma \ref{lem:unique-rankers} $(u, j) \models
  \varphi_{r_i}$ for some $r_i\in R_n^\star$. Thus $\alpha_i \rightarrow
  \varphi_{r_i}$ since $\varphi_{r_i}\in \fo^2_n$ and $\alpha_i$ is a complete
  $\fo^2_n$ formula. Thus $\alpha_i \equiv \alpha_i \land \varphi_{r_i}$ so
  $\varphi$ is in the desired form.\qedconf
\end{proof}
\end{full}

\section{Alternation hierarchy for \texorpdfstring{\FOV[$<$]{2}}{FOV[<]2}} \label{sec:FO2-alt}

We define alternation rankers and prove our structure theorem (Theorem
\ref{thm:alt-ranker-char}) for $\FOVDA[$<$]{2}{n}{m}$. Surprisingly the number
of alternating blocks of $\rleft$ and $\rright$ in the rankers corresponds
exactly to the number of alternating quantifier blocks. The main ideas from
our proof of Theorem \ref{thm:ranker-char} still apply here, but keeping
track of the number of alternations does add complications.

\begin{defi}[$m$-alternation $n$-ranker] \label{def:alt-ranker} Let
  $m,n \in \N$ with $m \le n$. An \emph{$m$-alternation $n$-ranker}, or
  $(m,n)$-ranker, is an $n$-ranker with exactly $m$ blocks of boundary
  positions that alternate between $\rright$ and $\rleft$.
\end{defi}

We use the following notation for alternation rankers.
\begin{align*}
  R_{m,n}(w) &:= \{r \mid \text{$r$ is an $m$-alternation $n$-ranker and
    defined over the word $w$}\}\\
  R_{m\rright,n}(w) 
  &:= \{r \in R_{m,n}(w) \mid \text{$r$ ends with $\rright$}\}\\
  R^\star_{m,n}(w)
  &:= \bigcup_{i \in [1,m], j \in [1,n]} R_{i,j}(w)\\
  R^\star_{m\rright,n}(w)
  &:= R^\star_{m-1,n}(w) \cup \bigcup_{i \in [1,n]} R_{m\rright,i}(w)
\end{align*}

\begin{full}
\begin{lem} \label{lem:alt-ranker-sides} Let $m$ and $n$ be positive
  integers with $m \le n$, let $u, v \in \Sigma^\star$, and let $r \in
  R_{m,n}(u) \cap R_{m,n}(v)$. Samson wins the game $\FOVDA{2}{n}{m}(u,v)$
  where initially $\ord(r(u), x^u) \ne \ord(r(v), x^v)$.

  Furthermore, Samson can start the game with a move on $u$ if $r$ ends with
  $\rright$, $r(u) \le x^u$ and $r(v) \ge x^v$, or if $r$ ends with
  $\rleft$, $r(u) \ge x^u$ and $r(v) \le x^v$. He can start the game with
  a move on $v$ if $r$ ends with $\rright$, $r(u) \ge x^u$ and $r(v) \le
  x^v$, or if $r$ ends with $\rleft$, $r(u) \le x^u$ and $r(v) \ge x^v$.
\end{lem}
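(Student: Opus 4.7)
The plan is induction on $n$, extending the argument of Lemma \ref{lem:ranker-sides} while carefully threading through, at each step, the word on which Samson plays so that his total number of word switches fits inside the $m-1$ alternation budget of $\FOVDA{2}{n}{m}$.

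\textbf{Base case} $n = 1$ (so $m = 1$): For $r = \rright_\mathtt{a}$ with $r(u) \le x^u$, $r(v) \ge x^v$ and at least one strict, Samson places $y$ on $r(u)$ in $u$. Delilah must match letter $\mathtt{a}$ in $v$ and keep $y^v$ on the same side of $x^v$ as $r(u)$ is of $x^u$; since $v$ has no $\mathtt{a}$ in $v_{[1, r(v) - 1]}$ and $y^v = r(v)$ forces $x^v = r(v)$ (contradicting the strict inequality), she cannot respond. The three remaining direction/side configurations are symmetric.

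\textbf{Inductive step} $n \ge 2$: Let $r$ have final boundary position $d_\mathtt{a}$ with $(n{-}1)$-prefix $r_{n-1}$. By symmetry I treat $d = \rright$ with $r(u) \le x^u$, $r(v) \ge x^v$ (one strict), so Samson plays on $u$. He places $y$ on $r(u)$ in $u$. Since there is no $\mathtt{a}$ in $v_{(r_{n-1}(v), r(v))}$ and Delilah must match both the letter and the order of $y$ with respect to $x$, she is either beaten immediately or forced to play $y^v \le r_{n-1}(v)$, yielding $r_{n-1}(u) < y^u$ and $r_{n-1}(v) \ge y^v$, so that $\ord(r_{n-1}(u), y^u) \ne \ord(r_{n-1}(v), y^v)$.

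The main point, and where the alternation bookkeeping lives, is matching the ending direction of $r_{n-1}$ to the word on which Samson must continue. If $r$'s final block of $\rright$'s has length $\ge 2$, then $r_{n-1}$ is an $(m, n{-}1)$-ranker also ending in $\rright$; the inductive hypothesis, applied with $y$ in place of $x$, supplies a strategy starting on $u$, so Samson continues on the same word, incurring no new switch and matching the budget of $\FOVDA{2}{n-1}{m}$. If the final block has length $1$, then $r_{n-1}$ is an $(m{-}1, n{-}1)$-ranker ending in $\rleft$, and the inductive hypothesis supplies a strategy starting on $v$; this consumes exactly one of the $m-1$ switches, reducing the remaining game to $\FOVDA{2}{n-1}{m-1}$ with its $m-2$ remaining switches. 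The hard part is not the pebble argument itself, which is already in Lemma \ref{lem:ranker-sides}, but verifying that in each of the four starting direction/side configurations the ending direction of $r_{n-1}$ always lines up with the word Samson should continue on, so that the alternation budget decreases by exactly as many as the switches actually used.
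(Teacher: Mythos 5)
Your proof is correct and follows essentially the same route as the paper's: the one-move base case inherited from Lemma \ref{lem:ranker-sides}, then placing $y$ on $r(u)$ to force Delilah past $r_{n-1}(v)$, and splitting on whether $r_{n-1}$ ends in the same direction as $r$ (continue on the same word, budget $\FOVDA{2}{n-1}{m}$) or the opposite one (switch words, budget $\FOVDA{2}{n-1}{m-1}$). The alternation bookkeeping you identify as the crux is exactly the content of the paper's inductive step.
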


\begin{proof}
  If $m = n = 1$, then we can immediately apply the base case from the proof
  of Lemma \ref{lem:ranker-sides}. Samson wins in one move, placing his
  pebble on $u$ or $v$ as specified.

  For the remaining cases, we assume without loss of generality that $r$
  ends with $\rright$ and that $x^u \ge r(u)$ and $x^v \le r(v)$. Let
  $r_{n-1}$ be the $(n-1)$-prefix ranker of $r$. This situation is
  illustrated in Fig.~\ref{fig:ranker-sides-2} of Lemma
  \ref{lem:ranker-sides}. Samson places $y$ on $r(u)$, and creates a
  situation where $y^u > r_{n-1}(u)$ and $y^v \le r_{n-1}(v)$. If
  $r_{n-1}$ ends with $\rleft$, then by induction Samson wins the remaining
  \FOVDA{2}{n-1}{m-1} game and thus he has a winning strategy for the
  \FOVDA{2}{n}{m} game. If $r_{n-1}$ ends with $\rright$, then by induction
  Samson wins the remaining \FOVDA{2}{n-1}{m} game starting with a move on
  $u$, and thus he has a winning strategy for the \FOVDA{2}{n}{m} game.\qedconf
\end{proof}

\begin{lem} \label{lem:alt-ranker-def} Let $m$ and $n$ be positive
  integers with $m \le n$ and let $r \in R_{m,n}$. There is a 
  $\varphi_r \in \FOVDA[$<$]{2}{n}{m}$ such that for all $w \in \Sigma^\star$,
  $w \models \varphi_r \iff r \in R_{m,n}(w)$.
\end{lem}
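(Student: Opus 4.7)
The plan is to mimic the proof of Lemma \ref{lem:ranker-def}, replacing Lemma \ref{lem:ranker-sides} by Lemma \ref{lem:alt-ranker-sides}, and to pay careful attention to the number of alternations used. First, since $\FOVDA[$<$]{2}{n}{m}$ has only finitely many inequivalent formulas, Lemma \ref{lem:fin-equiv} reduces the statement to showing that for every $u, v \in \Sigma^\star$ with $r \in R_{m,n}(u)$ and $r \notin R_{m,n}(v)$, Samson has a winning strategy in $\FOVDA{2}{n}{m}(u,v)$.

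Let $r_i = (p_1,\dots,p_i)$ be the shortest prefix ranker of $r$ that is undefined over $v$. If $i = 1$, then the letter in $p_1$ occurs in $u$ but not in $v$, so Samson wins in a single move (and no alternation is used). Otherwise $r_{i-1}$ is defined over both words. Assume without loss of generality that $p_i = \rleft_\mathtt{a}$ (the case $p_i = \rright_\mathtt{a}$ is symmetric). Then $v$ contains no $\mathtt{a}$ to the left of $r_{i-1}(v)$. Samson plays $x$ on $r_i(u)$; since $r_i(u) < r_{i-1}(u)$ and Delilah must match the letter $\mathtt{a}$, her only option is a position $x^v \ge r_{i-1}(v)$. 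We now have $\ord(r_{i-1}(u), x^u) = {>}$ and $\ord(r_{i-1}(v), x^v) \in \{=, <\}$, so Lemma \ref{lem:alt-ranker-sides} applies to the ranker $r_{i-1}$.

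The key bookkeeping is the alternation count. Write $m'$ for the number of alternation blocks of $r_{i-1}$, so $r_{i-1}$ is an $(m', i-1)$-ranker with $m' \le m$. Two subcases arise. If $r_{i-1}$ ends with $\rleft$ (the same direction as $p_i$), then $r_{i-1}$ has the same number of blocks as $r_i$, hence $m' \le m$, and the ``furthermore'' clause of Lemma \ref{lem:alt-ranker-sides} lets Samson continue his strategy on $u$; since his first move was on $u$, the total number of switches in the entire game equals the $m' - 1$ switches used in the subgame, which is at most $m - 1$. If $r_{i-1}$ ends with $\rright$ (opposite direction), then $p_i$ starts a new block, so $m' + 1 \le m$, i.e.\ $m' \le m - 1$. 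Lemma \ref{lem:alt-ranker-sides} now instructs Samson to start the subgame on $v$, costing one switch, for a total of $1 + (m'-1) = m' \le m - 1$ switches overall. In both subcases the move count is $1 + (i-1) = i \le n$, so Samson wins within the $\FOVDA{2}{n}{m}$ budget.

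The only real obstacle is the alternation accounting just sketched; once one observes that crossing from $r_{i-1}$ to $r_i$ either preserves the terminal direction (giving a cheaper subgame that Samson can initiate without a switch) or flips it (saving one alternation block in $r_{i-1}$, which exactly compensates for the switch needed to start the subgame on $v$), the budget matches perfectly.
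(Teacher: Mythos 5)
Your proof is correct and follows essentially the same route as the paper's: reduce via Lemma \ref{lem:fin-equiv} to a game argument, have Samson play $x$ on $r_i(u)$ for the shortest undefined prefix ranker $r_i$, and then invoke Lemma \ref{lem:alt-ranker-sides} on $r_{i-1}$, distinguishing whether $r_{i-1}$ ends with the same direction as $p_i$ (continue on $u$, no extra switch) or the opposite direction (start the subgame on $v$, paying one switch that is covered because $p_i$ opened a new block). Your alternation bookkeeping in terms of $m'$ is a slightly more explicit version of the paper's ``remaining $\FOVDA{2}{n-1}{m}$ versus $\FOVDA{2}{n-1}{m-1}$ game'' dichotomy, and your explicit treatment of the $i=1$ base case is a small but welcome addition.
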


\begin{proof}
  Using Lemma $\ref{lem:fin-equiv}$ it suffices to consider arbitrary $u, v
  \in \Sigma^\star$ with $r \in R_{m,n}(u)$ and $r \notin R_{m,n}(v)$, and
  using Fact \ref{fact:ef1}, it suffices to show that Samson wins the game
  $\FOVDA{2}{n}{m}(u,v)$. 
  Let $r_i = (p_1, \ldots, p_i)$ be the shortest prefix ranker
  of $r$ that is undefined over $v$, and we assume without loss of
  generality that this ranker ends with the boundary position $p_i =
  \rleft_\mathtt{a}$ for some $\mathtt{a} \in \Sigma$. This situation is illustrated in Fig.
  \ref{fig:rankpos-4} for Lemma \ref{lem:ranker-exp}. In his first move
  Samson places $x$ on $r_i(u)$ and thus forces a situation where $x^u <
  r_{i-1}(u)$ and $x^v \ge r_{i-1}(v)$. If $r_{i-1}$ ends with $\rleft$,
  then according to Lemma \ref{lem:alt-ranker-sides}, Samson wins the
  remaining \FOVDA{2}{n-1}{m} game starting with a move on $u$. Otherwise
  $r_{i-1}$ ends with $\rright$, and thus by Lemma
  \ref{lem:alt-ranker-sides} Samson wins the remaining
  \FOVDA{2}{n-1}{m-1} game
  starting with a move on $v$.\qedconf
\end{proof}

\begin{lem} \label{lem:alt-ranker-exp} Let $m$ and $n$ be positive
  integers with $m \le n$ and let $r \in R_{m,n}$. There is a formula
  $\psi_r \in \FOVDA[$<$]{2}{n}{m}$ such that for all $w \in \Sigma^\star$
  and for all $i \in [1, |w|]$, $(w, i) \models \psi_r \iff i = r(w)$.
\end{lem}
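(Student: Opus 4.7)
The plan is to imitate the proof of Lemma \ref{lem:ranker-exp} essentially verbatim, substituting the alternation-sensitive Lemmas \ref{lem:alt-ranker-sides} and \ref{lem:alt-ranker-def} for their quantifier-depth-only counterparts. Specifically, by Lemma \ref{lem:fin-equiv} (applied to the logic $\FOVDA[$<$]{2}{n}{m}$, which has only finitely many inequivalent formulas) together with Fact \ref{fact:ef2}, it suffices to show that for every pair $u, v \in \Sigma^\star$ with $x^u = r(u)$ and $x^v \ne r(v)$, Samson has a winning strategy in the game $\FOVDA{2}{n}{m}(u,v)$.

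I would case-split on whether $r$ is defined on $v$. If $r(v)$ is defined, then $\ord(r(u), x^u) = {=}$ while $\ord(r(v), x^v) \ne {=}$, so Lemma \ref{lem:alt-ranker-sides} immediately supplies Samson with a winning strategy that fits inside the $\FOVDA{2}{n}{m}$ budget. If $r(v)$ is undefined, I would invoke the strategy already constructed in the proof of Lemma \ref{lem:alt-ranker-def}: Samson's first move places $x$ on $r_i(u)$, where $r_i$ is the shortest prefix of $r$ undefined on $v$. This move is legal regardless of where $x$ was sitting initially (here at $r(u)$ rather than at the default position $1$), and the rest of the strategy — which reduces to an instance of Lemma \ref{lem:alt-ranker-sides} — consumes at most $n-1$ further moves and the appropriate number of alternations, so the whole play stays within the $\FOVDA{2}{n}{m}$ budget by design.

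The only point requiring care is the bookkeeping on quantifier depth and on alternation blocks. Since both supporting lemmas were set up to deliver $\FOVDA{2}{n}{m}$ strategies in precisely the configurations we reduce to, and since the first move in the undefined case coincides with the first move of the strategy in Lemma \ref{lem:alt-ranker-def}, the accounting is immediate. There is no substantial obstacle: this lemma is a near-mechanical corollary of the two preceding alternation-ranker lemmas, entirely parallel to how Lemma \ref{lem:ranker-exp} followed from Lemmas \ref{lem:ranker-def} and \ref{lem:ranker-sides}.
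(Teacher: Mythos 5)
Your proposal is correct and follows essentially the same route as the paper: reduce via Lemma \ref{lem:fin-equiv} and the \ef{} game characterization to showing Samson wins the $\FOVDA{2}{n}{m}$ game with $x^u = r(u)$ and $x^v \ne r(v)$, then split on whether $r$ is defined over $v$ and invoke Lemma \ref{lem:alt-ranker-sides} or the strategy from Lemma \ref{lem:alt-ranker-def} accordingly. The paper's own proof is exactly this two-case dispatch, stated even more tersely than yours.
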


\begin{proof}
  As in the proof of Lemma \ref{lem:alt-ranker-def}, it suffices to show
  that Samson wins the game
  $\FOVDA{2}{n}{m}(u,v)$ where initially 
  $x^u=r(u)$ and $x^v \ne r(v)$. Depending on
  whether $r$ is defined over $v$, we use the strategies from Lemma
  \ref{lem:alt-ranker-sides} or Lemma \ref{lem:alt-ranker-def}.\qedconf
\end{proof}
\end{full}

\begin{thm}[structure of {\FOVDA[$<$]{2}{n}{m}}]
  \label{thm:alt-ranker-char}
  Let $u$ and $v$ be finite words, and let $m, n \in \N$ with $m \le n$. The
  following two conditions are equivalent.
  \begin{enumerate}[\em(i)]
  \item
    \begin{enumerate}[\em(a)]
    \item $R_{m,n}(u) = R_{m,n}(v)$, and,
    \item for all $r \in R_{m,n}^\star(u)$ and for all $r' \in
      R_{m-1,n-1}^\star(u)$, we have\\
      $\ord(r(u),r'(u)) = \ord(r(v), r'(v))$, and,
    \item for all $r \in R_{m,n}^\star(u)$ and $r' \in R_{m,n-1}^\star(u)$
      such that $r$ and $r'$ end with different directions,
      $\ord(r(u),r'(u)) = \ord(r(v), r'(v))$
    \end{enumerate}
  \item $u \equiv^2_{m,n} v$
  \end{enumerate}
\end{thm}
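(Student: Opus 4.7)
The plan is to mirror the proof of Theorem~\ref{thm:ranker-char}: first generalize the statement to words with two interpreted variables on each side (the alternation analog of Theorem~\ref{thm:structure-variables}, adding clauses $(u,i_1,i_2) \equiv^2_0 (v,j_1,j_2)$ and appropriate order-type conditions restricted by alternation), and then prove that generalization by induction on $n$. The inductive machinery is already in place: Lemma~\ref{lem:alt-ranker-sides} is the alternation-sensitive analog of Lemma~\ref{lem:ranker-sides}, and its strengthened second clause (specifying which word Samson can start on, depending on the final direction of $r$) is exactly what is needed to account for alternations charged to a change of side.

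For $\neg(\text{i}) \Rightarrow \neg(\text{ii})$, I would give Samson a winning strategy in $\FOVDA{2}{n+1}{m}(u,v)$, following the case analysis from Theorem~\ref{thm:structure-variables} but paying attention to alternation. If (i)(a) fails via some $(m,n+1)$-ranker defined on one side but not the other, apply Lemma~\ref{lem:alt-ranker-def}. If (i)(b) or (i)(c) fail with two rankers $r, r'$ of the relevant kinds disagreeing in order, Samson plays on (a position equivalent to) $r$; by Lemma~\ref{lem:alt-ranker-sides} Delilah is forced to respond at $r(v)$, after which Samson plays on one of the prefix-ranker-defined sides of $r'$ and Lemma~\ref{lem:alt-ranker-sides} wins the remainder. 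If instead only (i)(a) is violated and a letter $\mathtt{a}$ distinguishes corresponding segments between consecutive $(m,n+1)$-rankers $r, r'$, Samson plays on that $\mathtt{a}$; Delilah's response is forced to the opposite side of either $r$ or $r'$, and Lemma~\ref{lem:alt-ranker-sides} applies. In each case, the bookkeeping to check is that the second clause of Lemma~\ref{lem:alt-ranker-sides} (which word Samson starts on, depending on the final direction of the chosen ranker) fits the number of side-changes already consumed.

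For $(\text{i}) \Rightarrow (\text{ii})$ I would give Delilah a strategy. Suppose Samson lifts $x$ and plays on $u$ (other cases symmetric). If $x^u = r(u)$ for some $r \in R_{m,n+1}^\star(u)$, Delilah plays $x$ on $r(v)$, which is defined by (i)(a); this preserves all four conditions at the lower index and induction finishes. Otherwise, let $\lambda$ and $\rho$ be the closest rankers in $R_{m,n}^\star(u)$ to the left and right of $x^u$, set $\mathtt{a} = u_{x^u}$, and form $s = (\lambda, \rright_\mathtt{a})$ or $s = (\rho, \rleft_\mathtt{a})$. The crucial choice is \emph{which} of these two to use: we must pick the one that lies in $R_{m,n+1}^\star(u)$, i.e.\ whose new direction does not increase the alternation count beyond $m$. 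If $\lambda$ ends with $\rright$ then $s = (\lambda,\rright_\mathtt{a}) \in R_{m,n+1}^\star$ is safe; otherwise we must switch to $s = (\rho,\rleft_\mathtt{a})$, which is safe exactly when $\rho$ ends with $\rleft$.

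The main obstacle is showing that a safe choice always exists and that the chosen $s$ has the same value across $u$ and $v$. This is precisely the role of clause (i)(c): it guarantees that any $(m,n+1)$-ranker obtained by extending an $R_{m,n}^\star$ ranker with a boundary position in a \emph{different} direction occurs at matching positions in $u$ and $v$. Combined with (i)(a) and (i)(b), this lets Delilah locate $s(v)$ in the corresponding interval $(\lambda(v),\rho(v))$ of $v$, restoring all four conditions for $(u,x^u,y^u)$ and $(v,x^v,y^v)$ at parameters $(m,n)$ so induction applies. The subcase where $y^u$ also lies in $(\lambda(u),\rho(u))$ is handled as in Theorem~\ref{thm:structure-variables}, using the other direction of flanking ranker, again chosen so that the final boundary position does not increase alternation.
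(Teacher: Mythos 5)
Your overall plan (generalize to two interpreted variables, induct on $n$, reuse the alternation-sensitive side lemma) matches the paper's strategy, but there are two genuine gaps, both at the heart of why this theorem is harder than Theorem~\ref{thm:ranker-char}.

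First, the induction hypothesis you propose is the wrong one. You suggest a symmetric generalization (``adding clauses $(u,i_1,i_2)\equiv^2_0(v,j_1,j_2)$ and appropriate order-type conditions''), but the paper's generalized statement is essentially \emph{asymmetric} in $u$ and $v$: condition (ii) becomes ``Delilah wins the $\FOVDA{2}{n}{m}$ game \emph{if Samson starts with a move on $u$}'', and the order conditions between pebbles and $m$-alternation rankers are one-sided implications (the paper's condition (i)(f): e.g.\ if $r$ ends on $\rright$ and $r(u)=i$ then only $r(v)\le j$ is required). A symmetric requirement of full order agreement between the pebbles and all $(m,n)$-rankers cannot be re-established after Delilah's move, because two $m$-alternation rankers ending in the \emph{same} direction need not appear in the same order in $u$ and $v$ (conditions (i)(b),(i)(c) only control mixed-alternation or opposite-direction pairs). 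Without the asymmetric formulation the inductive step does not close, since the number of alternations left also depends on which word Samson plays next.

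Second, your description of Delilah's move in the non-ranker case would fail. ``The closest rankers $\lambda,\rho$ in $R^\star_{m,n}(u)$ to the left and right of $x^u$'' is not a usable notion here, again because same-direction $m$-alternation rankers can be ordered differently in $v$; and your ``safe choice'' need not exist: it is entirely possible that $\lambda$ ends with $\rleft$ and $\rho$ ends with $\rright$ while both already have $m$ blocks, so that both $(\lambda,\rright_\mathtt{a})$ and $(\rho,\rleft_\mathtt{a})$ exceed $m$ alternations. The paper avoids this by restricting the flanking sets by final direction, $R_\ell\subseteq R^\star_{m\rright,n}(u)$ and $R_r\subseteq R^\star_{m\rleft,n}(u)$ (so that appending $\rright_\mathtt{a}$, resp.\ $\rleft_\mathtt{a}$, never adds a block), using (i)(c) only to guarantee that all of $R_\ell$ stays left of all of $R_r$ in $v$, and then having Delilah play the \emph{rightmost} element of the whole family $\{r\rright_\mathtt{a}\mid r\in R_\ell\}$ (plus the rankers landing exactly at $x^u$) \emph{as evaluated on $v$}. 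Relatedly, your gloss that (i)(c) makes opposite-direction extensions ``occur at matching positions in $u$ and $v$'' is not what (i)(c) says; it only controls order types between pairs of rankers ending in different directions. The $\neg(\mathrm{i})\Rightarrow\neg(\mathrm{ii})$ direction of your sketch is closer to the paper's, though it too needs the same $R_\ell,R_r$ machinery in cases (1), (3) and (5) rather than ``consecutive rankers''.
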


Just as before with Theorem \ref{thm:ranker-char}, instead of proving
Theorem \ref{thm:alt-ranker-char} directly, we prove a more general version
that applies to words with two interpreted variables. The statement of the
general version is asymmetric with respect to the roles of the two
structures $u$ and $v$. This is necessary because of the correspondence
between quantifier alternations (i.e. alternations between $u$ and $v$ in
the game) and alternations of directions in the rankers. This asymmetry
already affected the statement of Lemma \ref{lem:alt-ranker-sides}, where
Samson's winning strategy starts with a move on the specified structure. In
fact, as the proof of the following theorem shows, he does not have a
winning strategy that starts with a move on the other structure. We remark
that conditions (i)(a) through (i)(e) of the general theorem are completely
symmetric with respect to the roles of $u$ and $v$, and only conditions
(i)(f) and (ii) are asymmetric. Theorem \ref{thm:alt-ranker-char} follows
directly from the general theorem, since here $i_1 = i_2 = j_1 = j_2 = 1$,
thus conditions (i)(e) and (i)(f) or trivially true, and the equivalence
holds with the roles of $u$ and $v$ reversed as well.
\pagebreak[4]

\begin{thm}
  Let $u$ and $v$ be finite words, let $i_1, i_2 \in [1,|u|]$, let $j_1, j_2
  \in [1,|v|]$, and let $m, n \in \N$ with $m \le n$. The following two
  conditions are equivalent.

  \begin{enumerate}[\em(i)]
  \item
    \begin{enumerate}[\em(a)]
    \item $R_{m,n}(u) = R_{m,n}(v)$, and,
    \item for all $r \in R_{m,n}^\star(u)$ and for all $r' \in
      R_{m-1,n-1}^\star(u)$, we have\\
      $\ord(r(u),r'(u)) = \ord(r(v), r'(v))$, and,
    \item for all $r \in R_{m,n}^\star(u)$ and $r' \in R_{m,n-1}^\star(u)$
      such that $r$ and $r'$ end with different directions,
      $\ord(r(u),r'(u)) = \ord(r(v), r'(v))$
    \item $(u, i_1, i_2) \equiv^2_0 (v, j_1, j_2)$, and,
    \item for all $r \in R_{m-1,n}^\star(u)$, $\ord(r(u), i_1) = \ord(r(v),
      j_1)$ and $\ord(r(u), i_2) = \ord(r(v), j_2)$, and,
    \item for all $r \in R_{m,n}^\star(u)$, and $(i,j) \in \{(i_1, j_1),
      (i_2, j_2)\}$,
      \begin{enumerate}[\em(f${}_1$)]
      \item if $r$ ends on $\rright$ and $r(u) = i$, then $r(v) \le j$
      \item if $r$ ends on $\rright$ and $r(u) < i$, then $r(v) < j$
      \item if $r$ ends on $\rleft$ and $r(u) = i$, then $r(v) \ge j$
      \item if $r$ ends on $\rleft$ and $r(u) > i$, then $r(v) > j$
      \end{enumerate}
    \end{enumerate}
  \item Delilah wins the game $\FOVDA[$<$]{2}{n}{m}((u, i_1, i_2), (v, j_1,
    j_2))$ if Samson starts with a move on $(u, i_1, i_2)$.
  \end{enumerate}
\end{thm}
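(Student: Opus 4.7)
The plan is to induct on $n$, mirroring the structure of the proof of Theorem \ref{thm:structure-variables} while carefully accounting for the alternation budget and the asymmetric starting constraint. The base case $n = 0$ is routine: clauses (a), (b), (c), (e), and (f) are vacuous, and (d) is equivalent to (ii). For the inductive step, I prove the two implications separately.

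For $\neg\text{(i)} \Rightarrow \neg\text{(ii)}$ I describe winning strategies for Samson by cases on which clause of (i) fails. Failures of (a), (b), (c) are handled by the three-case analysis from the proof of Theorem \ref{thm:structure-variables}, replacing Lemma \ref{lem:ranker-sides} with Lemma \ref{lem:alt-ranker-sides} and verifying that the direction of the terminal boundary position of the ranker chosen to catch Delilah is compatible both with the remaining alternation budget and with Samson's constraint to start on $u$. If (d) fails Samson wins in zero moves. If (f) fails for some $r \in R_{m,n+1}^\star(u)$, the four sub-clauses $(f_1)$--$(f_4)$ are designed so that in each failure case the hypotheses of Lemma \ref{lem:alt-ranker-sides} hold with Samson allowed to start on $u$, and he wins directly. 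If (e) fails for $r \in R_{m-1,n+1}^\star(u)$, Samson uses his first move on $u$ to set up a pebble-swapped application of Lemma \ref{lem:alt-ranker-exp} for $r$; the alternation arithmetic works out because a $(k,\ell)$-ranker with $k \le m-1$ costs at most one alternation beyond the forced starting move, so the combined strategy fits within the budget $(n+1, m)$.

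For $\text{(i)} \Rightarrow \text{(ii)}$ I describe Delilah's reply to Samson's first move on $u$. Without loss of generality Samson moves pebble $x$ to some new position $k$ in $u$. If $k = r(u)$ for some $r \in R_{m,n+1}^\star(u)$, Delilah mirrors with $x^v = r(v)$, which is well-defined by (a). Otherwise she builds a witness from the closest rankers $\lambda, \rho \in R_{m,n}^\star(u)$ straddling $k$, using either $s = (\lambda, \rright_\mathtt{a})$ or $t = (\rho, \rleft_\mathtt{a})$ with $\mathtt{a} = u_k$. The choice between $s$ and $t$ is controlled by two constraints: the witness must fall on the correct side of $y^u$ when both of Samson's pebbles lie in the same inter-ranker interval (as in Theorem \ref{thm:structure-variables}), and the appended direction must not push the witness out of $R_{m,n+1}^\star$. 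One then verifies that the new pebble positions satisfy (a)--(f) at depth $n$ with the (possibly decremented) value of $m$, so that the inductive hypothesis finishes the game.

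The main obstacle is the asymmetric clause (f) together with the interplay between alternations and the restriction that Samson must start on $u$. The sub-clauses $(f_1)$--$(f_4)$ are written to match, term for term, the configurations in which Lemma \ref{lem:alt-ranker-sides} lets Samson begin on $u$; this is what makes (f) both necessary (driving Samson's strategy when it fails) and preserved by Delilah's witness choices. Carrying this directional bookkeeping through every case---in particular checking, for the (e)-failure case, that the alternation saved by $r \in R_{m-1,n+1}^\star$ compensates for the non-Lemma-compatible starting side, and on the Delilah side that she never needs a witness whose appended direction violates the alternation budget---is the heart of the argument and where the asymmetry of the theorem statement earns its keep.
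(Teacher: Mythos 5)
Your top-level plan---induction on $n$, the same case split on which clause of (i) fails, and Delilah answering by appending a boundary position to a nearby ranker---is indeed the paper's plan, but the sketch elides the one genuinely new difficulty that separates this theorem from Theorem \ref{thm:structure-variables}, and the elision hides a real gap. In Theorem \ref{thm:structure-variables} the hypothesis guarantees that \emph{all} rankers of the relevant length appear in the same order in $u$ and $v$, so ``two consecutive $n$-rankers'' and ``the closest ranker to the left of $x^u$'' are meaningful in both words simultaneously. Here conditions (i)(b)--(c) only control the order of pairs with suitable alternation counts or opposite ending directions; two rankers in $R^\star_{m\rright,n}(u)$ may well be ordered differently in $u$ and in $v$. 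Consequently the ``three-case analysis from Theorem \ref{thm:structure-variables}'' does not port. For the definedness case and the order-with-shorter-ranker cases, the paper must introduce the sets $R_\ell \subseteq R^\star_{m\rright,n}(u)$ and $R_r \subseteq R^\star_{m\rleft,n}(u)$, argue that every member of $R_\ell$ lies left of every member of $R_r$ in \emph{both} words (using that the pure order-disagreement cases have already been excluded), and take $\lambda$ and $\rho$ to be the extremal elements \emph{with respect to their order on $v$}. The same issue recurs on Delilah's side: she cannot choose between the single witnesses $(\lambda,\rright_{\mathtt{a}})$ and $(\rho,\rleft_{\mathtt{a}})$ built from the closest straddling rankers, because ``closest'' is not transferable between the two words; the paper has her append $\rright_{\mathtt{a}}$ to \emph{every} ranker in $R_\ell$ (together with the same-direction rankers sitting exactly at $x^u$) and play the rightmost resulting position in $v$, which is what makes all four sub-clauses of (f) checkable against every ranker at once. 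Your proposal never identifies this loss of global order-agreement, and a proof that literally follows the Theorem \ref{thm:structure-variables} recipe breaks at exactly these steps.

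A second, smaller problem is your treatment of a failure of (e). The witnessing ranker $r \in R^\star_{m-1,n+1}(u)$ may have length $n+1$, so Lemma \ref{lem:alt-ranker-sides} consumes all $n+1$ moves; if its strategy must begin on $v$ and you first spend a ``forced starting move'' on $u$, you are at $n+2$ moves. Your alternation arithmetic addresses the block budget but not the move budget. The paper disposes of this case differently, by observing that conditions (i)(a)--(e) are symmetric in $u$ and $v$ and reversing the roles of the two structures rather than burning a move.
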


\begin{full}
\begin{proof}

  As in the proof of Theorem \ref{thm:ranker-char}, we use induction on $n$.
  For $n = 0$, condition (i)(d) just by itself is equivalent to (ii), and
  all other conditions of (i) are vacuous. For $n \ge 1$, we we first show
  ``$\neg$ (i) $\Rightarrow$ $\neg$ (ii)''.

  Suppose that (i) holds for $(m,n)$, but fails for $(m,n+1)$. If (i)(d)
  does not hold then Samson wins immediately. If (i)(e) does not hold for
  $(m,n+1)$, then by Lemma \ref{lem:alt-ranker-sides}, Samson wins the
  $(m,n+1)$-game on $(u,v)$, starting with a move on either $u$ or $v$. If
  Samson can start with a move on $u$, we have established that (ii) is
  false. Otherwise, we reverse the roles of $u$ and $v$, and observe that
  condition (i)(e) still remains the same. Thus, even if Samson needs to
  start with a move on $v$, he still has a winning strategy, and (ii) does
  not hold for $(m,n+1)$. If (i)(f) does not hold for $(m,n+1)$, then again
  by using Lemma \ref{lem:alt-ranker-sides}, Samson wins the $(m,n+1)$-game
  on $(u,v)$ starting with a move on $u$.

  If one of (i)(a), (i)(b) or (i)(c) fail, then we show that Samson has a
  winning strategy for
  the game $\FOVDA{2}{n+1}{m}(u,v)$. We observe that it does not matter what
  structure Samson chooses for his first move, since all of (i)(a), (i)(b)
  and (i)(c) are completely symmetric with respect to the roles of $u$ and
  $v$. Thus if Samson's winning strategy starts with a move on $v$, we can
  reverse the roles of $u$ and $v$ and get a winning strategy starting with
  move on $u$. One of the following cases applies.
  \begin{enumerate}[(1)]
  \item There is a ranker $r \in R_{m,n+1}$ that is defined over one
    structure but not over the other.
  \end{enumerate}
  This first case applies if (a) fails for $(m,n+1)$. If condition (2) fails
  for $(m, n+1)$, then there are two $n$-rankers for which it fails, or an
  $(n+1)$-ranker and an $n$-ranker. This leads to the following two cases.
  \begin{enumerate}[(1)]
    \setcounter{enumi}{1}
  \item There are two rankers $r \in R_{m,n}(u)$ and $r' \in R_{m-1,n}(u)$
    that disagree on their order, i.e. $\ord(r(u),r'(u)) \ne \ord(r(v),r'(v))$. 
  \item There are two rankers $r \in R_{m,n+1}(u)$ and $r' \in R_{m-1,n}(u)$
    that disagree on their order.
  \end{enumerate}
  In a similar fashion, we obtain the remaining two cases if condition (3)
  fails for $(m, n+1)$.
  \begin{enumerate}[(1)]
    \setcounter{enumi}{3}
  \item There are rankers $r,r' \in R_{m,n}(u)$ that end on different
    directions and disagree on their order.
  \item There are rankers $r \in R_{m,n+1}(u)$ and $r' \in R_{m,n}(u)$ that end
    on different directions and disagree on their order.
  \end{enumerate}

  We look at the
  cases (2) and (4) first, then deal with case (1) assuming that cases (2)
  and (4) do not apply, and finally look at cases (3) and (5).

  For case (2), we assume that $r(u) \le r'(u)$, as
  illustrated in Fig.~\ref{fig:alt-rankchar-order}. The situation for
  $r(u) \ge r'(u)$ is completely symmetric. Depending on the last boundary
  position of $r$, one of the following two subcases applies.

  \begin{enumerate}[$\bullet$]
    \piccaption{\label{fig:alt-rankchar-order}$r$ and $r'$ appear in different
      order}
    \parpic(5cm,2.7cm)[fr]{
      \begin{tikzpicture}
        \word[4cm]{\wordu}{$u$}
        \word[4cm]{\wordv}{$v$}
        \wordupos{1cm}{$r$}
        \dwordpos{2cm}{$r'$}
        \wordvpos{3cm}{$r$}
      \end{tikzpicture}
    }
  \item
    $r$ ends with $\rright$. Samson places $x$ on $r(u)$ in his first
    move. If Delilah replies with a position to the left of $r'(v)$ 
    or equal to $r'(v)$, then $x^v < r(v)$. Thus we
    can apply Lemma \ref{lem:alt-ranker-sides} to get a winning strategy for
    Samson in the remaining \FOVDA{2}{n}{m} game that starts with a move on $u$.
    If Delilah replies with a position to the right of $r'(v)$, Samson has a
    winning strategy for the remaining \FOVDA{2}{n}{m-1} game. Thus we have a
    winning strategy for Samson in the \FOVDA{2}{n+1}{m} game.
  \item
    \picskip{0}
    $r$ ends with $\rleft$. This is similar to the previous case, but
    now Samson places $x$ on $r(v)$ in his first move. If Delilah replies
    with a position to the right of $r'(u)$, or equal to $r'(u)$,
    then as above we get a winning
    strategy for Samson in the remaining \FOVDA{2}{n}{m} game that starts with a
    move on $v$. Otherwise we get a winning strategy for Samson with only
    $m-1$ alternations for the remaining game. Thus again he has a winning
    strategy for the \FOVDA{2}{n+1}{m} game.
  \end{enumerate}

  \picskip{0}
  For case (4), Samson's winning strategy is very similar to the previous
  case. If $r(u) \le r'(u)$ and $r$ ends with $\rright$, then Samson places
  $x$ on $r(u)$ in his first move. If Delilah replies with a position to the
  right of $r(u)$, then Samson's winning strategy is as above. Otherwise $x$
  is on different sides of $r'$ and Samson has a winning strategy for the
  remaining \FOVDA{2}{n}{m} game 
  that starts with a move on $u$. All together, he
  has a winning strategy for the \FOVDA{2}{n+1}{m} game.
  The remaining three cases (ordering of $r(u)$ and $r'(u)$ and ending
  direction of $r$) work in the same way.

  Similar to what we did in the proof of Theorem \ref{thm:ranker-char}, we
  can reduce the remaining cases to an easier situation where a certain
  segment contains a certain letter in one structure, but not in the other
  structure, and then apply Lemma \ref{lem:alt-ranker-sides} to obtain a
  winning strategy for Samson.\pagebreak[2]

  To deal with case (1), we assume that the previous two cases, (2) and (4),
  do not apply. Without loss of generality, say that the $(m,n+1)$-ranker $r$ is
  defined over $u$ but not over $v$. Let $\mathtt{a} := u_{r(u)}$ be the
  letter in $u$ at position $r(u)$. We define the following sets of rankers.
  \begin{align*}
    R_\ell &:= \{s \in R^\star_{m\rright, n}(u) \mid s(u) < r(u)\}\\
    R_r &:= \{s \in R^\star_{m\rleft, n}(u) \mid s(u) > r(u)\}
  \end{align*}
  Notice that all rankers from $R_\ell$ appear to the left of all rankers
  from $R_r$ in $u$. From the inductive hypothesis, and from the fact that
  both cases (2) and (4) do not apply, it follows that over $v$, all rankers
  from $R_\ell$ appear to the left of all rankers from $R_r$ as well.
  However, the rankers from $R_\ell$ and $R_r$ by themselves do
  not necessarily appear in the same order in both structures. We look at
  the ordering of these rankers in $v$, and let $\lambda$ be the rightmost
  ranker from $R_\ell$ and $\rho$ be the leftmost ranker from $R_r$.
  By construction, we have $\lambda(u) < r(u) < \rho(u)$, so
  the segment $(\lambda, \rho)$ in $u$ contains the letter $\mathtt{a}$. Let
  $r_n$ be the $n$-prefix-ranker of $r$, and observe that $r_n$ is defined
  on both structures and that $r_n$ is contained in either $R_\ell$ or
  $R_r$. Because $r$ is not defined on $v$, the letter $\mathtt{a}$ does not
  occur in $v$ either to the right of $r_n$ if $r_n \in R_\ell$, 
  or to the left of
  $r_n$ if $r_n \in R_r$. Thus the segment $(\lambda, \rho)$ does not contain the
  letter $\mathtt{a}$ in $v$.

  \piccaption{\label{fig:alt-ranker-char-middle}A letter occurs between
    rankers $r$, $r'$ in $u$ but not in $v$}
  \parpic(5.7cm,3.3cm)[fr]{
    \begin{tikzpicture}
      \word[4cm]{\wordu}{$u$}
      \word[4cm]{\wordv}{$v$}
      \dwordpos{1cm}{$\lambda$}
      \dwordpos{3cm}{$\rho$}
      \wordupoint{2cm}{$S:x$}
      \worduletter{2cm}{\texttt{a}}
    \end{tikzpicture}
  }

  Now we know that $\mathtt{a}$ occurs in the segment $(\lambda, \rho)$ in $u$
  but not in $v$, and thus we have established the situation illustrated in
  Fig.~\ref{fig:alt-ranker-char-middle}. Samson places his first pebble on
  an $\mathtt{a}$ within this section of $u$, and Delilah has to reply with
  a position outside of this section. No matter what side of the segment
  she chooses, with Lemma \ref{lem:alt-ranker-sides} Samson has a winning
  strategy for the remaining game and thus wins the \FOVDA{2}{n+1}{m} game.
  
  \picskip{3}
  In cases (3) and (5), we again assume that cases (2) and (4) do not apply, and
  we look at the same sets of rankers, $R_\ell$ and $R_r$, and
  at $r_n$, the $n$-prefix-ranker of $r$. We assume that $r(u) \le r'(u)$
  and that $r$ ends with $\rright$, all three other cases are completely
  symmetric. Notice that $r_n$ is an $(m-1,n)$-ranker, or an
  $(m,n)$-ranker that ends with $\rright$. Thus both structures agree on the
  ordering of $r_n$ and $r'$. The relative positions of all these rankers
  are illustrated in Fig.~\ref{fig:alt-ranker-char-closest}. As above, let
  $\lambda$ be the rightmost ranker from $R_\ell$ and let $\rho$ be the
  leftmost ranker from $R_r$, with respect to the ordering of these rankers
  on $v$. Again we know that $\lambda(u) < r(u) < \rho(u)$ and therefore the
  segment $(\lambda, \rho)$ of $u$ contains an $\mathtt{a}$. Notice that $r_n
  \in R_\ell$ and $r' \in R_r$, thus $r_n(v) \le \lambda(v) < \rho(v) \le
  r'(v)$. Thus the segment $(\lambda, \rho)$ does not contain the letter
  $\mathtt{a}$ in $v$, providing Samson with a winning strategy as argued above.

  \piccaption{\label{fig:alt-ranker-char-closest}Ranker positions, case (4)}
  \parpic(6.2cm,2.5cm)[fr]{
    \begin{tikzpicture}
      \word[5cm]{\wordu}{$u$}
      \word[5cm]{\wordv}{$v$}
      \wordupos{2cm}{$r$}
      \dwordpos{3cm}{$r'$}
      \wordvpos{4cm}{$r$}
      \dwordpos{1cm}{$r_n$}
    \end{tikzpicture}
  }

  To prove ``(i) $\Rightarrow$ (ii)'', we assume that the theorem holds for
  $n$, and that (i) holds for $(m,n+1)$, and we present a winning strategy
  for Delilah in the game $\FOVDA{2}{n+1}{m}(u,v)$ where Samson starts with
  a move on $u$.

  \picskip{3}
  If Samson places $x$ on a ranker $r \in R^\star_{m-1,n}(u)$, then Delilah
  replies by placing $x$ on the same ranker on $v$. Since (i)(b) holds for
  $(m,n+1)$, this establishes (i)(e) and (i)(f) for $(m,n)$. It also
  establishes (i)(e) and (i)(f) for $(m-1,n)$ with reversed roles of $u$ and
  $v$. Thus we can apply the inductive hypothesis to get a winning strategy
  for Delilah in the remaining game.

  If $x^u = y^u$ after Samson's first move, then Delilah replies with $x^v =
  y^v$. We use the inductive hypothesis to argue that Delilah wins the
  remaining $n$-move game, no matter what structure Samson chooses for his
  next move. If he chooses to play on $u$, then the remaining game is an
  $(m,n)$-game. Since in the first move Delilah set $x^v = y^v$, we have
  (i)(e) and (i)(f) for $(m,n)$, and thus the inductive hypothesis applies
  and Delilah wins the remaining game. On the other hand, if Samson chooses
  to play on $v$ for the next move, the remaining game is an $(m-1,n)$-game,
  since he started with a move on $u$. Because Delilah set $x^v = y^v$ in
  the first move, (i)(e) for $(m,n+1)$ implies both (i)(e) and (i)(f) for
  $(m-1,n)$ with reversed roles of $u$ and $v$. Thus we can again use the
  inductive hypothesis to get a winning strategy for Delilah in the
  remaining game.
  
  Otherwise we assume that $x^u < y^u$ after Samson's first move, the case
  for $x^u > y^u$ is completely symmetric. We look at the following two sets
  of rankers.
  \begin{align*}
    R_\ell &:= \{r \in R^\star_{m\rright, n}(u) \mid r(u) < x^u\}\\    
    R_r &:= \{r \in R^\star_{m\rleft, n}(u) \mid r(u) > x^u\}
  \end{align*}
  On $u$, all rankers from $R_\ell$ occur to the left of all rankers from
  $R_r$. Since (i)(c) holds for
  $(m,n+1)$, this is also true for the positions of these rankers on $v$.
  Let $\mathtt{a}$ be the letter Samson places his pebble on. To establish
  both (i)(e) and (i)(f) for $(m,n)$, Delilah needs to find an $\mathtt{a}$
  in $v$ that is to the right of all rankers from $R_\ell$ and to the left
  of all rankers from $R_r$. We define 
  \begin{align*}
    R^0_\ell &= \{r \in R^\star_{m\rright,n}(u) - R^\star_{m-1,n}(u) 
    \mid r(u) = x^u\}\\
    R^0_r &= \{r \in R^\star_{m\rleft,n}(u) - R^\star_{m-1,n}(u)
    \mid r(u) = x^u\}\\
    R'_\ell &:= \{r\rright_\mathtt{a} \mid r \in R_\ell\} \cup R^0_\ell
  \end{align*}
  and have Delilah place her pebble $x^v$ on the rightmost ranker from
  $R'_\ell$ on $v$. This position of course is labeled with an $\mathtt{a}$.
  Since on $u$ all rankers from $R_\ell'$ occur to the left of or at $x^u$,
  all of them occur strictly to the left of $y^u$. Since all rankers in
  $R'_\ell$ are from $R^\star_{m-1,n+1}(u)$ or $R^\star_{m\rright,n+1}(u)$,
  we can apply (i)(e) and (i)(f$_2$), and we see that all of these rankers
  also appear to the left of $y^v$. Therefore we have $x^v < y^v$, which
  makes sure that Delilah does not lose in this move, and also establishes
  (i)(d).
  
  To complete the inductive step, we need to argue that Delilah's move also
  establishes (i)(e) and (i)(f), both for $(m,n)$, and for $(m-1,n)$ with
  reversed roles of $u$ and $v$. Then, using the inductive hypothesis,
  Delilah has a winning strategy for the remaining game, no matter what side
  Samson chooses for his next move.

  We observe that all rankers from $R'_\ell$ appear to the right of the
  rankers from $R_r$. This is true by definition on $u$, and holds for $v$
  because (i)(b) and (i)(c) hold for $(m,n+1)$. Since Delilah placed $x^v$
  on a ranker from $R'_\ell$, we have (i)(e), (i)(f$_2$) and (i)(f$_4$) for
  $(m,n)$ for all all rankers from $R_r$. And since Delilah placed $x^v$ on
  the rightmost of the rankers from $R'_\ell$, we know that all rankers from
  $R_\ell$ appear to the left of $x^v$, just as they do on $u$. Thus we have
  (i)(e), (i)(f$_2$) and (i)(f$_4$) for the rankers from $R_\ell$ as well, and
  therefore for all rankers mentioned in those conditions.

  All rankers from $R^\star_{m\rright,n}$ that appear at $x^u$ are in
  $R^0_\ell$, since we already dealt with the case where $x^u$ does
  appear at a ranker from $R^\star_{m-1,n}$. Since Delilah chose $x^v$ as
  the rightmost ranker from $R'_\ell$, all of these rankers appear to the
  left of or at $x^v$, and we have established (i)(f$_1$) for $(m,n)$. For
  condition (i)(f$_3$), we need to argue about $R^0_r$. From (i)(b) and
  (i)(c) for $(m,n+1)$, we know that all rankers from $R^0_r$ appear to the
  right of or at the same position as the rankers from $R'_\ell$ on $v$,
  just as they do on $u$. Thus (i)(f$_3$) holds as well.

  Now that we have established (i) for $(m,n)$, we use the inductive
  hypothesis to get a winning strategy for Delilah for the remaining game if
  Samson's next move is on $u$. For the case where his next move is on $v$,
  we only need to establish (i) for $(m-1,n)$, but with reversed roles of
  $u$ and $v$. Reversing the roles of the two structures only affects
  condition (i)(f), and (i)(f) for $(m-1,n)$ follows immediately from (i)(e)
  for $(m,n)$. Thus Delilah also wins the remaining game if Samson's next
  move is on $v$.
\end{proof}
\end{full}

Using Theorem \ref{thm:alt-ranker-char}, we show that for any fixed alphabet
$\Sigma$, at most $|\Sigma| + 1$ alternations are useful. Intuitively, each
boundary position in a ranker says that a certain letter does not occur in
some part of a word. Alternations are only useful if they visit one of these
previous parts again. Once we visited one part of a word $|\Sigma|$ times,
this part cannot contain any more letters and thus is empty.

\begin{thm} \label{thm:alt-alphabet}
  Let $\Sigma$ be a finite alphabet, let $u, v \in \Sigma^\star$ and $n
  \in \N$. If $u \equiv^2_{|\Sigma|+1,n} v$, then $u \equiv^2_n v$.
\end{thm}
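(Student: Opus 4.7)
The plan is to use the two structure theorems to translate the claim into a statement about rankers, and then prove the key combinatorial fact that every ranker is behaviourally equivalent to one using at most $|\Sigma|+1$ alternation blocks.

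First I would apply Theorem \ref{thm:alt-ranker-char} with $m = |\Sigma|+1$ to unpack the hypothesis $u \equiv^2_{|\Sigma|+1,n} v$ into the ranker conditions (i)(a), (i)(b), (i)(c) for that value of $m$. On the target side, Theorem \ref{thm:ranker-char} says that $u \equiv^2_n v$ amounts to $R_n(u) = R_n(v)$ together with the single order condition between $R_n^\star$ and $R_{n-1}^\star$, where alternation counts are unbounded. The proof then reduces to a single combinatorial lemma: every $n$-ranker $r$ is \emph{behaviourally equivalent} to some $(m',n')$-ranker with $m' \le |\Sigma|+1$ and $n' \le n$, meaning that $r$ and the replacement agree on definedness over every word and on denoted position where both are defined. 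Given this lemma, both the definedness equality $R_n(u) = R_n(v)$ and the order condition of Theorem \ref{thm:ranker-char} follow by substituting each ranker with its low-alternation counterpart and invoking the corresponding hypothesis supplied by Theorem \ref{thm:alt-ranker-char}.

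I would prove the equivalence lemma by induction on the number of alternation blocks, using the intuition stated just before the theorem: every boundary position $d_\mathtt{a}$ in a ranker certifies the absence of the letter $\mathtt{a}$ from a specific segment of the word. When an alternation block stays strictly inside the segment already traversed rather than extending past a previously reached extreme, that block's final boundary position excludes a fresh letter from a nested sub-segment. After $|\Sigma|$ such nested exclusions, the sub-segment must be empty since it cannot contain any letter of $\Sigma$, so any subsequent boundary position aimed into this sub-segment is either undefined on every word, in which case the ranker is universally undefined and trivially replaceable, or else forced to coincide with a position already reached, permitting the intermediate alternation blocks to be collapsed into a ranker with strictly fewer alternations.

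The main obstacle will be the bookkeeping required to isolate the removable block. Alternation blocks come in two qualitatively different kinds: \emph{escaping} blocks whose final position lies beyond every previously visited extreme in the relevant direction, and \emph{refining} blocks that remain strictly inside the region already traversed. Only refining blocks contribute to the letter-exclusion count on a fixed nested segment, and escaping blocks can reset the segment against which exclusions accumulate, so a careful case analysis on the trajectory's shape is needed before the pigeonhole argument on $\Sigma$ applies. Once the lemma is in place, the theorem follows directly by the chain of equivalences outlined in the first paragraph.
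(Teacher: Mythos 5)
Your reduction rests on a key lemma that is substantially stronger than what the paper proves, and the natural examples suggest it is false: you ask for a single ranker with at most $|\Sigma|+1$ alternation blocks that agrees with a given $n$-ranker $r$ in both definedness and position \emph{over every word}. Take $\Sigma = \{\mathtt{a},\mathtt{b}\}$ and $r = \rright_\mathtt{a}\rleft_\mathtt{b}\rright_\mathtt{b}\rleft_\mathtt{a}$, which has four alternation blocks while $|\Sigma|+1=3$. Since no $\mathtt{b}$ lies strictly between the last $\mathtt{b}$ preceding the first $\mathtt{a}$ and the first $\mathtt{a}$ itself, one computes that $r$ is defined exactly when $w$ contains a $\mathtt{b}$ before its first $\mathtt{a}$ and a $\mathtt{b}$ after its first $\mathtt{a}$, and that where defined $r(w) = \rright_\mathtt{a}\rright_\mathtt{b}\rleft_\mathtt{a}(w)$. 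So the definedness of $r$ is governed by the three-block ranker $\rright_\mathtt{a}\rleft_\mathtt{b}\rright_\mathtt{b}$ while its position is governed by the two-block ranker $\rright_\mathtt{a}\rright_\mathtt{b}\rleft_\mathtt{a}$, and these two pieces of data do not package into one low-alternation ranker (for instance $\rright_\mathtt{a}\rright_\mathtt{b}\rleft_\mathtt{a}$ is defined on $\mathtt{ab}$ but $r$ is not). The underlying reason your induction cannot produce a uniform replacement is the very obstacle you flag: whether a block is ``escaping'' or ``refining'' is a property of the trajectory on a particular word, not of the ranker syntactically, so the collapse you describe is necessarily word-dependent.

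The paper avoids constructing any universal replacement. It argues by minimal counterexample: assuming $u \equiv^2_{|\Sigma|+1,n} v$ but $u \not\equiv^2_n v$, it takes the \emph{shortest} ranker on which the two words disagree, which must have more than $|\Sigma|$ blocks. If on $u$ every block end stays inside the nested interval left by the previous block, the pigeonhole on $\Sigma$ shows the ranker is undefined --- this is your nesting argument, but applied to one fixed word. Otherwise some block escapes its interval, and the paper builds a shorter ranker $s$ with $s(u) = r(u)$ \emph{and} $s(v) = r(v)$, crucially using the hypothesis (via Theorem \ref{thm:alt-ranker-char}) that $u$ and $v$ agree on the orderings of the relevant low-alternation prefix rankers, so that the same word-dependent shortening is valid on both words simultaneously and contradicts minimality. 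That is the idea your sketch is missing: the replacement only needs to work for the specific pair $(u,v)$, and it is precisely the hypothesis of the theorem that licenses it. A smaller additional gap: conditions (i)(b) and (i)(c) of Theorem \ref{thm:alt-ranker-char} do not give order agreement for arbitrary pairs of $(|\Sigma|+1)$-alternation rankers (e.g.\ two such rankers both ending in $\rright$), so even granting your lemma, the wholesale substitution into the order condition of Theorem \ref{thm:ranker-char} would not follow directly from ``the corresponding hypothesis.''
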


\begin{full}
\begin{proof}
  Suppose for the sake of a contradiction that $u \equiv^2_{|\Sigma|+1,n} v$
  and $u \not\equiv^2_n v$. Thus, using Theorem \ref{thm:alt-ranker-char},
  $u$ and $v$ agree on the definedness of all $(|\Sigma|+1,n)$-rankers, and
  on their order with respect to all $(|\Sigma|,n-1)$-rankers and some
  $(|\Sigma|+1,n-1)$-rankers. But since $u \not\equiv^2_n v$, $u$ and $v$
  need to disagree on the properties of some other ranker. Let $r = (p_1,
  \ldots, p_t)$ with $t \in \N$ be the shortest such ranker. We know that
  $r$ has more than $|\Sigma|$ blocks of alternating directions, say $r$ is
  an $m$-alternation ranker for some $m > |\Sigma|$. Let $1 \le k_1, \ldots,
  k_m \le t$ be the indices of the boundary positions at the end of each
  block, i.e. where $p_{k_i}$, $1 \le i < m$ points to a different direction
  than $p_{k_i+1}$. For the last of those indices we have $k_m = t$.

  We look at the prefix rankers of $r$ up to the
  end of each alternating block, $r_{k_i} := (p_1, \ldots, p_{k_i})$,
  and the
  intervals defined by these prefix rankers. We set $I_0(u) := [1, \abs{u}]$,
  $r_0(u) = 0$ if $p_1$ points to the right, and $r_0(u) = \abs{u}+1$ if $p_1$
  points to the left. For all $i \in [1, m]$ let,
  \begin{displaymath}
    I_i(u) :=
    \begin{cases}
      [r_{k_i-1}(u)+1, r_{k_i}(u)-1] & \text{if $p_{k_i}$ points to the right}\\
      [r_{k_i}(u)+1, r_{k_i-1}(u)-1] & \text{if $p_{k_i}$ points to the left}
    \end{cases}
  \end{displaymath}
  Notice that by definition the letter mentioned in $p_{k_i}$ does not occur
  in the interval $I_i$.

  Suppose that for all $i \in [1, m]$ we have $r_{k_i}(u) \in
  I_{i-1}(u)$.  Then the
  letter mentioned in $p_{k_i}$ has to occur in the interval $I_{i-1}(u)$ of $u$, but
  the interval $I_{|\Sigma|}(u)$ of $u$ cannot contain any of the $|\Sigma|$
  distinct letters. Therefore $r_{k_{|\Sigma|+1}} \notin I_{|\Sigma|}$ and
  we have a contradiction.

  Otherwise there is an $i \in [1,m]$ such that $r_{k_i}(u) \notin
  I_{i-1}(u)$. We will construct a ranker $r'$ that is shorter than $r$,
  does not have more alternations than $r$ and occurs at exactly the same
  position as $r$ in both $u$ and $v$. The main idea for this construction
  is that if $r_{k_i}(u) \notin I_{i-1}(u)$, then it is not useful to enter
  this interval at all. By our assumption, $u$ and $v$
  disagree on some property of the ranker $r$, and thus on some property of
  the shorter ranker $r'$. This contradicts our assumption that $r$ was the
  shortest such ranker.

  Now we show how to construct a shorter ranker $r'$ that occurs at the same
  position as $r$. We assume
  without loss of generality that $p_{k_i}$ points to the left. In this case
  we have 
  $r_{k_i}(u) \notin I_{i-1}(u) = [r_{k_{i-1}-1}(u)+1, r_{k_{i-1}}(u)-1]$.
  We look at the
  relative positions of the rankers $r_{k_{i-1}+1}, \ldots, r_{k_i}$ with
  respect to the ranker $r_{k_{i-1}-1}$. We know that $r_{k_i}(u) \le
  r_{k_{i-1}-1}(u)$, and we are interested in the right-most of the rankers
  $r_{k_{i-1}+1}, \ldots, r_{k_i}$ that is still outside of the interval
  $I_{i-1}(u)$. Let $r_j$ be this ranker.
  Thus we have
  \begin{displaymath}
    r_{k_i}(u) < \ldots < r_j(u) \le r_{k_{i-1}-1}(u) < r_{j-1}(u) < \ldots <
    r_{k_{i-1}+1}(u) < r_{k_{i-1}}(u)
  \end{displaymath}
  We know that $u \equiv^2_{|\Sigma|+1,n} v$, thus by Theorem
  \ref{thm:alt-ranker-char}, these rankers occur in exactly the same order
  in $v$. Now we set $s := (r_{k_{i-1}-1}, p_j, \ldots, p_{k_i})$.
  Because $u$ and $v$ agree on the ordering of the relevant rankers, we have
  $s(u) = r_{k_i}(u)$ and $s(v) = r_{k_i}(v)$. Therefore we have reduced the
  size of a prefix of $r$ without increasing the number of alternations, and
  thus have a shorter ranker $r'$ that occurs at the same position as $r$ in
  both structures.\qedconf
\end{proof}
\end{full}

In order to prove that the alternation hierarchy for $\FOV{2}$ is strict,
we define example languages that can be separated by a formula of a
given alternation depth $m$, but that cannot be separated by any formula of
lower alternation depth. As Theorem \ref{thm:alt-alphabet} shows,
we need to increase
the size of the alphabet with increasing alternation depth.
We inductively define the example words $u_{m,n}$ and $v_{m,n}$ and the
example languages $K_m$ and $L_m$ over finite alphabets $\Sigma_m = \{\mathtt{a}_0,
\ldots, \mathtt{a}_{m-1}\}$. Here $i$, $m$ and $n$ are positive integers.
\begin{align*}
  u_{1,n} &:= \mathtt{a}_0 & v_{1,n} &:= \varepsilon\\
  u_{2,n} &:= \mathtt{a}_0 (\mathtt{a}_1\mathtt{a}_0)^{2n} & v_{2,n} &:= (\mathtt{a}_1\mathtt{a}_0)^{2n}\\
  u_{2i+1,n} &:= (\mathtt{a}_0 \ldots \mathtt{a}_{2i})^{n} \; u_{2i,n}
  & v_{2i+1,n} &:= (\mathtt{a}_0 \ldots \mathtt{a}_{2i})^{n} \; v_{2i,n}\\
  u_{2i+2,n} &:= u_{2i+1,n} \; (\mathtt{a}_{2i+1} \ldots \mathtt{a}_0)^{n}
  & v_{2i+2,n} &:= v_{2i+1,n} \; (\mathtt{a}_{2i+1} \ldots \mathtt{a}_0)^{n}
\end{align*}
Notice that $u_{m,n}$ and $v_{m,n}$ are almost identical -- if we delete
only one $\mathtt{a}_0$ from $u_{m, n}$, we get $v_{m,n}$. Finally, we set $K_m :=
\bigcup_{n \ge 1} \{u_{m,n}\}$ and $L_m := \bigcup_{n \ge 1} \{v_{m,n}\}$.

\begin{defi}
  A formula $\varphi$ \emph{separates} two languages $K, L \subseteq
  \Sigma^\star$ if for all $w \in K$ we have $w \models \varphi$ and for all
  $w \in L$ we have $w \not\models \varphi$ or vice versa.
\end{defi}

\begin{full}

\begin{lem} 
  \label{lem:alt-hierarchy-exp}
  For all $m \in \N$, there is a formula $\varphi_m \in \FOVA[$<$]{2}{m}$ that
  separates $K_m$ and $L_m$. 
\end{lem}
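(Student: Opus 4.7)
The plan is to construct, for each $m \geq 1$, a specific $m$-alternation $m$-ranker $r^m = p_m p_{m-1} \cdots p_1$, in which $p_i$ uses the letter $\mathtt{a}_{i-1}$ and the directions alternate so that the total alternation count of $r^m$ equals $m$. Explicitly, $r^1 = \rright_{\mathtt{a}_0}$, $r^2 = \rright_{\mathtt{a}_1}\rleft_{\mathtt{a}_0}$, $r^3 = \rleft_{\mathtt{a}_2}\rright_{\mathtt{a}_1}\rleft_{\mathtt{a}_0}$, $r^4 = \rright_{\mathtt{a}_3}\rleft_{\mathtt{a}_2}\rright_{\mathtt{a}_1}\rleft_{\mathtt{a}_0}$, and so on, with $p_1 = \rleft_{\mathtt{a}_0}$ whenever $m \geq 2$ and $p_i$ alternating in direction for $i \geq 2$.

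For the base cases $m \in \{1,2\}$, a direct inspection shows that $r^m$ is defined on $u_{m,n}$ but undefined on $v_{m,n}$: indeed, $v_{2,n}$ begins with $\mathtt{a}_1$, so $\rleft_{\mathtt{a}_0}$ applied from position $1$ finds no $\mathtt{a}_0$ to the left, whereas $u_{2,n}$ starts with $\mathtt{a}_0$. Lemma \ref{lem:alt-ranker-def} then immediately yields $\varphi_m := \varphi_{r^m} \in \FOVDA[$<$]{2}{m}{m} \subseteq \FOVA[$<$]{2}{m}$ that separates $K_m$ from $L_m$.

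For $m \geq 3$, the ranker $r^m$ is defined on both words, so the plan switches to a position comparison. Let $s^m := p_m p_{m-1} \cdots p_3$ denote the $(m-2)$-alternation outer prefix of $r^m$. The key claim to establish is
\[
s^m(u_{m,n}) = s^m(v_{m,n}),\qquad r^m(u_{m,n}) > s^m(u_{m,n}),\qquad r^m(v_{m,n}) < s^m(v_{m,n}).
\]
Once this is proved, the separating formula $\varphi_m$ expresses the order relation $s^m < r^m$: since $s^m \in R^\star_{m-2,m-2}$ and $r^m \in R^\star_{m,m}$, this comparison is covered by condition~(i)(b) of Theorem \ref{thm:alt-ranker-char} and hence is expressible in $\FOVDA[$<$]{2}{m}{m} \subseteq \FOVA[$<$]{2}{m}$ (alternatively, one can build it directly from the position formulas of Lemma \ref{lem:alt-ranker-exp}).

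The main obstacle will be the navigation analysis verifying the claim. The essential observation is that the letter $\mathtt{a}_{i-1}$ used by $p_i$ with $i \geq 3$ appears only in the outer common blocks prepended or appended by the recursive definitions of $u_{m,n}$ and $v_{m,n}$ (never inside the innermost $u_{2,n}$ or $v_{2,n}$), so each outer boundary $p_m, \ldots, p_3$ lands at exactly the same position in both words, placing $s^m$ at a common anchor inside the shared $(\mathtt{a}_0\mathtt{a}_1\mathtt{a}_2)$-periodic layer embedded in the $u_{3,n}$-part. From this anchor, the inner tail $p_2 p_1 = \rright_{\mathtt{a}_1}\rleft_{\mathtt{a}_0}$ will reproduce the $m = 2$ behavior: in $u_{m,n}$ the next substring begins $\mathtt{a}_0\mathtt{a}_1\ldots$, so $\rleft_{\mathtt{a}_0}$ lands just past the anchor; in $v_{m,n}$ the corresponding substring begins $\mathtt{a}_1\ldots$, forcing $\rleft_{\mathtt{a}_0}$ to backtrack into the preceding $(\mathtt{a}_0\mathtt{a}_1\mathtt{a}_2)$-block and land strictly before the anchor. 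Carrying out this bookkeeping across all the nested layers of concatenation is the core technical work of the proof.
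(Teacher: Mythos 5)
Your overall strategy is the same as the paper's: exhibit rankers whose relative order differs on $u_{m,n}$ and $v_{m,n}$, and invoke condition (i)(b) of Theorem \ref{thm:alt-ranker-char} (together with Lemma \ref{lem:fin-equiv}) to get a single \FOVDA[$<$]{2}{m}{m} separator. The difference is in the choice of witnesses: the paper compares two \emph{sibling} $(m-1)$-alternation $(m-1)$-rankers $r_m$ and $s_m$, which share all outer instructions and differ only in the innermost one ($\rright_{\mathtt{a}_0}$ versus $\rright_{\mathtt{a}_1}$), whereas you compare a single $m$-alternation ranker $r^m$ against its own $(m-2)$-alternation prefix $s^m$, handling $m \le 2$ by definedness via Lemma \ref{lem:alt-ranker-def} (the paper uses $\exists x (x=x)$ for $m=1$ and again an order flip for $m=2$). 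Both choices work; yours has the small advantage that the anchor $s^m$ ends up in the literal common prefix of the two words, so $s^m(u_{m,n}) = s^m(v_{m,n})$ as integers and the order flip is very concrete. What you have not done is precisely what you flag as ``the core technical work'': the layer-by-layer navigation. The paper discharges this with an explicit recurrence, $r_{2i+2}(u_{2i+2,n}) = r_{2i+1}(u_{2i+1,n}) = (2i+1)n + r_{2i}(u_{2i,n})$, proved by observing that $\mathtt{a}_{2i+1}$ does not occur in $u_{2i+1,n}$ and that the remaining navigation never leaves the inner copy; you need the analogous induction for $r^m$ and $s^m$. When writing it, be careful with the assertion that ``each outer boundary $p_m, \ldots, p_3$ lands at exactly the same position in both words'': intermediate landing points to the right of the extra $\mathtt{a}_0$ (e.g.\ the $\rright_{\mathtt{a}_3}$ step for $m \ge 4$) are off by one between $u_{m,n}$ and $v_{m,n}$; only the final anchor, which lies to the left of the extra $\mathtt{a}_0$, has coinciding absolute positions. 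This is harmless, but the induction hypothesis must be phrased to accommodate the shift, e.g.\ by tracking positions relative to the inner copy as the paper's recurrence does.
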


\begin{proof}
  For $m = 1$, we can easily separate $K_1 = \{\mathtt{a}_0\}$ and $L_1 =
  \{\varepsilon\}$ with the formula $\exists x (x=x)$.
  For all larger $m$, we show that the two languages $K_m$ and $L_m$ differ on
  the ordering of two $(m-1)$-alternation rankers. Then by Theorem
  \ref{thm:alt-ranker-char} there is an \FOVDA[$<$]{2}{m}{m} formula that
  separates $K_m$ and $L_m$. We inductively define the rankers
  \begin{align*}
    r_2 &:= \rright_{\mathtt{a}_0}
    & s_2 &:= \rright_{\mathtt{a}_1}\\
    r_{2i+1} &:= \rleft_{\mathtt{a}_{2i}} r_{2i}
    &s_{2i+1} &:= \rleft_{\mathtt{a}_{2i}} s_{2i}\\
    r_{2i+2} &:= \rright_{\mathtt{a}_{2i+1}} r_{2i+1}
    & s_{2i+2} &:= \rright_{\mathtt{a}_{2i+1}} s_{2i+1}
  \end{align*}

  For $m = 2$, it is easy to see that $r_2(u_{2,n}) < s_2(u_{2,n})$, but
  $r_2(v_{2,n}) > s_2(v_{2,n})$. For $m > 2$, these rankers disagree on
  their order as well. To prove this, we prove the following two equalities.
  \begin{equation*}
    r_{2i+2}(u_{2i+2,n}) = r_{2i+1}(u_{2i+1,n}) = (2i+1)n + r_{2i}(u_{2i,n})
  \end{equation*}
  To prove this, we first use the definitions above and write
  \begin{equation*}
    r_{2i+2}(u_{2i+2,n}) = (\rright_{\mathtt{a}_{2i+1}} r_{2i+1})
    (u_{2i+1,n} \; (\mathtt{a}_{2i+1} \ldots \mathtt{a}_0)^{n})
  \end{equation*}
  The letter $\mathtt{a}_{2i+1}$ does not occur in the word $u_{2i+1,n}$,
  and thus $\rright_{\mathtt{a}_{2i+1}}(u_{2i+2,n})$ points to the first
  position in $u_{2i+2,n}$ right after the copy of $u_{2i+1,n}$. We observe
  that $r_{2i+1}$ starts with $\rleft$, and that $r_{2i+1}$ is defined on
  $u_{2i+1,n}$. Thus the evaluation of the remainder of $r_{2i+2}$ on
  $u_{2i+2,n}$ never leaves the copy of $u_{2i+1,n}$, and we have
  \begin{equation*}
    r_{2i+2}(u_{2i+2,n}) = r_{2i+1}(u_{2i+1,n})
  \end{equation*}
  For the second part of the equality, we have
  \begin{equation*}
    r_{2i+1} (u_{2i+1,n})
    = (\rleft_{\mathtt{a}_{2i}} r_{2i}) (
    (\mathtt{a}_0 \ldots \mathtt{a}_{2i})^{n} \; u_{2i,n})\\
  \end{equation*}
  As above, the letter $\mathtt{a}_{2i}$ does not occur in the word
  $u_{2i,n}$, and thus $\rleft_{\mathtt{a}_{2i}}(u_{2i+1,n})$ points to the
  position in $u_{2i+1,n}$ right before the copy of $u_{2i,n}$. The ranker
  $r_{2i}$ starts with $\rright$, and $r_{2i}$ is defined on $u_{2i,n}$.
  Thus, just as above, the evaluation of the remainder of $r_{2i+1}$ on
  $u_{2i+1,n}$ never leaves the copy of $u_{2i,n}$, and we have
  \begin{equation*}
    r_{2i+1}(u_{2i+1,n}) = (2i+1)n + r_{2i}(u_{2i,n})
  \end{equation*}
  Exactly the same holds for the other rankers ($s_2, \ldots$) and words
  $(v_{2,n}, \ldots$). We have
  \begin{align*}
    r_{2i+2}(u_{2i+2,n}) = r_{2i+1}(u_{2i+1,n}) = (2i+1)n + r_{2i}(u_{2i,n})\\
    s_{2i+2}(u_{2i+2,n}) = s_{2i+1}(u_{2i+1,n}) = (2i+1)n + s_{2i}(u_{2i,n})\\
    r_{2i+2}(v_{2i+2,n}) = r_{2i+1}(v_{2i+1,n}) = (2i+1)n + r_{2i}(v_{2i,n})\\
    s_{2i+2}(v_{2i+2,n}) = s_{2i+1}(v_{2i+1,n}) = (2i+1)n + s_{2i}(v_{2i,n})
  \end{align*}

  Now an easy inductive argument, based on the two equalities we just
  proved, shows that the rankers disagree on their order. Therefore
  condition (i)(b) of Theorem \ref{thm:alt-ranker-char} fails for any pair
  of words, and there is a formula in \FOVDA[$<$]{2}{m}{m} that separates
  $K_m$ and $L_m$.\qedconf
\end{proof}\pagebreak[2]

\begin{lem} 
  \label{lem:alt-hierarchy-nexp}
  For $m \in \N$, $m \ge 1$,
  and all $n \in \N$, we have $u_{m,n} \equiv^2_{m-1,n} v_{m,n}$.
\end{lem}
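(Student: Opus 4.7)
The plan is to prove the lemma by induction on $m$. The base case $m=1$ asserts $u_{1,n} = \mathtt{a}_0 \equiv^2_{0,n} \varepsilon = v_{1,n}$; with zero blocks of alternating quantifiers there are no meaningful formulas to check, so the claim holds as a degenerate case. The substance of the lemma is in the inductive step.

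The key structural observation I would exploit is that $u_{m+1,n}$ is obtained from $u_{m,n}$ (and likewise $v_{m+1,n}$ from $v_{m,n}$) by adding an identical ``envelope'': the block $(\mathtt{a}_0 \ldots \mathtt{a}_m)^n$ prepended when $m+1$ is odd, or $(\mathtt{a}_m \ldots \mathtt{a}_0)^n$ appended when $m+1$ is even. This envelope is literally the same sequence of letters in both words, and it contains the only occurrences of the fresh letter $\mathtt{a}_m$ in the entire word. All prior letters also appear in the core, but the crucial asymmetry is that $\mathtt{a}_m$ is envelope-only.

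The strategy for the inductive step is to apply Theorem~\ref{thm:alt-ranker-char} with parameters $(m, n)$ and verify its three ranker conditions on the pair $(u_{m+1,n}, v_{m+1,n})$. The core sub-claim I would establish is: for every $(m,n)$-ranker $r$, the position $r(u_{m+1,n})$ corresponds under a natural order-preserving bijection to $r(v_{m+1,n})$, the bijection being the identity on the envelope combined with the inner correspondence guaranteed by the inductive hypothesis on the core. Rankers that never leave the envelope trivially evaluate to the same position in both words. Rankers that enter the core can only do so at the expense of a direction reversal, since crossing into the core must follow a boundary position mentioning a letter that first anchors the ranker inside the envelope (typically via $\mathtt{a}_m$) and then reverses to step into the core. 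After such a crossing the ranker has at most $m-1$ alternation blocks remaining, and the inductive hypothesis $u_{m,n} \equiv^2_{m-1,n} v_{m,n}$---combined with Theorem~\ref{thm:alt-ranker-char} applied in the reverse direction on the core---ensures its behavior inside the core agrees on both words.

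The main obstacle is the bookkeeping of alternations and directions across the envelope-core boundary. Concretely, one has to enumerate how each prefix ranker of a given $(m,n)$-ranker is placed (envelope vs.~core, which side of the core), demonstrate that an $(m,n)$-ranker reaching into the core effectively induces an $(m-1,n')$-ranker on the core to which the inductive hypothesis applies, and then verify that ordering conditions (i)(b) and (i)(c) of Theorem~\ref{thm:alt-ranker-char} transfer across this boundary. The symmetry between the even and odd cases of the construction lets one handle both in a single case analysis up to swapping $\rright$ with $\rleft$. Once all three ranker conditions are established, Theorem~\ref{thm:alt-ranker-char} yields $u_{m+1,n} \equiv^2_{m,n} v_{m+1,n}$, completing the induction.
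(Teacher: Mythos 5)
Your central sub-claim---that there is a single order-preserving bijection matching $r(u_{m+1,n})$ with $r(v_{m+1,n})$ for \emph{every} $(m,n)$-ranker $r$---is false, and cannot be repaired: the rankers $r_{m+1}$ and $s_{m+1}$ from Lemma \ref{lem:alt-hierarchy-exp} are each $m$-alternation rankers, hence $(m,n)$-rankers for $n \ge m$, and their relative order is \emph{swapped} between $u_{m+1,n}$ and $v_{m+1,n}$; that swap is the entire point of the construction. The correct statement is strictly weaker than what you assert: the two words agree on the definedness of $(m,n)$-rankers and on the order of exactly those pairs covered by conditions (i)(b) and (i)(c) of Theorem \ref{thm:alt-ranker-char}, while pairs of $m$-block rankers ending in the same direction (such as $r_{m+1}, s_{m+1}$) are permitted to disagree. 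So the proof cannot proceed by matching all rankers; it must identify precisely which rankers are sensitive to the one extra $\mathtt{a}_0$ and check that all of them fall outside the order conditions of the theorem.

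A second, related gap is in your alternation bookkeeping. You claim a ranker can only cross from the envelope into the core after a direction reversal, so that the induced core ranker has at most $m-1$ blocks and the inductive hypothesis $u_{m,n} \equiv^2_{m-1,n} v_{m,n}$ applies. This is false: when the envelope is prepended the core sits at the right end of the word and contains occurrences of every letter $\mathtt{a}_0, \ldots, \mathtt{a}_{m-1}$, so the one-block ranker $\rleft_{\mathtt{a}_0}$ already lands inside the core with no reversal, and an $(m,n)$-ranker entering this way can spend all $m$ of its blocks there---beyond what the inductive hypothesis controls. (Rankers can also exit the core via $\rleft_{\mathtt{a}_m}$ and re-enter, so ``the induced core ranker'' is not well defined without further care.) The paper's proof avoids both problems: it localizes the difference between the two words to the single middle $\mathtt{a}_0$, proves by induction on $m$---using the envelope/core structure much as you intend, but for this reachability claim rather than for the equivalence itself---that any ranker pointing to that position needs at least $m$ alternation blocks with the final block directed $\rright$, and then observes that the only order disagreements this can cause are between pairs of maximal-alternation rankers ending in the same direction, exactly the pairs that conditions (i)(b) and (i)(c) of Theorem \ref{thm:alt-ranker-char} do not constrain.
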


\begin{proof}
  Because we do not have constants, there are no quantifier-free sentences.
  Thus $\FOVDA[$<$]{2}{n}{0}$ 
  does not contain any formulas and the statement holds
  trivially for $m = 1$.

  For $m \ge 2$ and any $n \ge m$, we claim that exactly the same
  $(m-1,n)$-rankers are defined over $u_{m,n}$ and $v_{m,n}$, and that all
  $(m-1,n)$-rankers appear in the same order with respect to all
  $(m-2,n-1)$-rankers and all $(m-1,n-1)$-rankers that end on a different
  direction. Once we established this claim, the lemma follows immediately
  from Theorem \ref{thm:alt-ranker-char}. We already observed that $u_{m,n}$
  and $v_{m,n}$ are almost identical. The only difference between the two
  words is that $u_{m,n}$ contains the letter $\mathtt{a}_0$ in the middle whereas
  $v_{m,n}$ does not. Thus we only have to consider rankers that are
  affected by this middle $\mathtt{a}_0$.

  We claim that any ranker that points to the middle $\mathtt{a}_0$ of $u_{m,n}$
  requires at least $m-1$ alternations. Furthermore, we claim that any such
  ranker needs to start with $\rright$ for even $m$ and with $\rleft$ for
  odd $m$. We prove this by induction on $m$.

  For $m=2$ we have $u_{2,n} = \mathtt{a}_0 (\mathtt{a}_1\mathtt{a}_0)^n$. Any $n$-ranker that starts
  with $\rleft$ cannot reach the first $\mathtt{a}_0$, thus we need a ranker that
  starts with $\rright$.

  For odd $m > 2$ we have $u_{m,n} = (\mathtt{a}_0 \ldots \mathtt{a}_{m-1})^n u_{m-1,n}$. Any
  $n$-ranker that starts with $\rright$ cannot leave the first block of $n
  \cdot m$ symbols of this word and thus not reach the middle $\mathtt{a}_0$.
  Therefore we need to start with $\rleft$, and in fact use
  $\rleft_{\mathtt{a}_{m-1}}$ at some point, because we would not be able to leave
  the last section of $u_{m-1,n}$ otherwise. But with $\rleft_{\mathtt{a}_{m-1}}$ we
  move past all of $u_{m-1,n}$, and we need one alternation to turn around
  again. By induction, we need at least $m-2$ alternations within
  $u_{m-1,n}$, and thus $m-1$ alternations total.

  The argument for even $m$ is completely symmetric. Thus we showed that we
  need at least $m-1$ alternation blocks to point to the middle $\mathtt{a}_0$.
  Furthermore, we showed that if we have exactly $m-1$ alternation blocks,
  then the last of these blocks uses $\rright$.
  Therefore we only need to consider
  $(m-1)$-alternation rankers that end on $\rright$ and pass through the
  middle $\mathtt{a}_0$. It is easy to see that all of these rankers agree on their
  ordering with respect to all other $(m-2)$-alternation rankers, and with
  respect to all $(m-1)$-alternation rankers that end on $\rleft$.

  To summarize, we showed that $u_{m,n}$ and $v_{m,n}$ satisfy condition (i)
  from Theorem \ref{thm:alt-ranker-char} for $m-1$ alternations. Thus the
  two words agree on all formulas from \FOVDA[$<$]{2}{n}{m-1}.\qedconf
\end{proof}
\end{full}

\begin{conference}
  Our example words are constructed such that for $m \ge 3$, $u_{m,n}$ and
  $v_{m,n}$ can be distinguished by the ordering of two $(m-1)$-rankers. In
  the case $m=3$ for example, we can use the two rankers $r_3 :=
  \rleft_{\mathtt{a}_2} \rright_{\mathtt{a}_0}$ and $s_3 := \rleft_{\mathtt{a}_2} \rright_{\mathtt{a}_1}$. A
  formal argument for all $m$ is given in \cite{WI07}. There we also argue
  that the example words $u_{m,n}$ and $v_{m,n}$
  agree on the definedness of all $(m-1,n)$-rankers, and that these rankers
  appear in exactly the same order with respect to shorter rankers. Thus
  the two languages $K_m$ and $L_m$ cannot be separated by any
  $\FOVA[$<$]{2}{m-1}$ formula. Thus we have the following theorem.
\end{conference}

\begin{thm}[alternation hierarchy for {\FOV[$<$]{2}}]
  \label{thm:alt-hierarchy}
  For any positive integer $m$, there is a $\varphi_m \in \FOVA[$<$]{2}{m}$
  and there are two languages $K_m, L_m$ such that $\varphi_m$ separates
  $K_m$ and $L_m$, but no $\psi \in \FOVA[$<$]{2}{m-1}$ separates $K_m$ and
  $L_m$.
\end{thm}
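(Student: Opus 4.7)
The plan is to derive both halves of the theorem from the structure theorem (Theorem \ref{thm:alt-ranker-char}) applied to the explicit language pair $(K_m, L_m)$, via the two companion lemmas \ref{lem:alt-hierarchy-exp} and \ref{lem:alt-hierarchy-nexp}.

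For the existence of a separating $\varphi_m \in \FOVA[$<$]{2}{m}$, I would take $\varphi_m$ from Lemma \ref{lem:alt-hierarchy-exp}. The approach there is to inductively build parallel $(m-1)$-alternation rankers $r_m, s_m$ whose only difference is an exchange of two letter labels, and then verify by induction on $m$ that $r_m$ and $s_m$ occur in opposite orders on $u_{m,n}$ versus $v_{m,n}$. The recursive shape $u_{2i+2,n} = u_{2i+1,n}(\mathtt{a}_{2i+1} \ldots \mathtt{a}_0)^n$ (and symmetrically for the odd case) guarantees that each outer block shifts both rankers by the same additive constant, so the order comparison collapses to the base case at depth $2$, where $r_2 = \rright_{\mathtt{a}_0}$ and $s_2 = \rright_{\mathtt{a}_1}$ visibly flip between $u_{2,n}$ and $v_{2,n}$. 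Condition (i)(b) of Theorem \ref{thm:alt-ranker-char} then produces a single formula in $\FOVDA[$<$]{2}{m}{m}$ that separates $K_m$ and $L_m$.

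For the non-separation direction, suppose for contradiction that some $\psi \in \FOVA[$<$]{2}{m-1}$ separates $K_m$ and $L_m$, and let $n$ be its quantifier depth. Choose any $n' \geq \max(n, m)$; then $u_{m,n'} \in K_m$ and $v_{m,n'} \in L_m$, so $\psi$ must distinguish them. By Lemma \ref{lem:alt-hierarchy-nexp}, $u_{m,n'} \equiv^2_{m-1,n'} v_{m,n'}$, and since $\psi \in \FOVDA[$<$]{2}{n'}{m-1}$ this forces $u_{m,n'} \models \psi \iff v_{m,n'} \models \psi$, a contradiction.

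The main obstacle, hidden inside Lemma \ref{lem:alt-hierarchy-nexp}, is verifying conditions (i)(a)--(i)(c) of Theorem \ref{thm:alt-ranker-char} at alternation depth $m-1$ for the pair $(u_{m,n}, v_{m,n})$. Since the two words differ only by a single middle $\mathtt{a}_0$, it suffices to show that no $(m-1,n)$-ranker can reach that middle position. I would prove this by induction on $m$, simultaneously tracking that any ranker reaching the middle of $u_{m,n}$ must start with $\rright$ for even $m$ and with $\rleft$ for odd $m$. For odd $m > 2$, any $\rright$-first ranker is trapped in the leading $(\mathtt{a}_0 \ldots \mathtt{a}_{m-1})^n$ block, while a $\rleft$-first ranker must use $\rleft_{\mathtt{a}_{m-1}}$ to bypass $u_{m-1,n}$ and then turn around to re-enter that inner copy, spending one alternation on top of the $m-2$ required inductively inside $u_{m-1,n}$; the even case is symmetric. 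Once this reachability bound is in hand, the ordering conditions of Theorem \ref{thm:alt-ranker-char} follow straightforwardly.
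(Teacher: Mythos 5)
Your proposal is correct and follows essentially the same route as the paper: the paper's proof of Theorem \ref{thm:alt-hierarchy} is exactly the combination of Lemma \ref{lem:alt-hierarchy-exp} (separation via the order-flip of the rankers $r_m, s_m$ and condition (i)(b) of Theorem \ref{thm:alt-ranker-char}) with Lemma \ref{lem:alt-hierarchy-nexp} (the reachability/alternation-count argument for the middle $\mathtt{a}_0$), and your sketches of both lemmas match the paper's arguments. Your explicit handling of the quantifier depth in the non-separation direction (choosing $n' \ge \max(n,m)$) is a detail the paper leaves implicit, but it is not a different approach.
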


\begin{full}
\begin{proof}
  The theorem immediately follows from Lemma \ref{lem:alt-hierarchy-exp}
  and Lemma \ref{lem:alt-hierarchy-nexp}.\qedconf
\end{proof}
\end{full}

Theorem \ref{thm:alt-hierarchy} resolves an open question from
\cite{EVW97,EVW02}.

\section{Structure Theorem and Alternation Hierarchy for \texorpdfstring{\FOV[$<,\suc$]{2}}{FOV[<,suc]{2}}}
\label{sec:FO2-suc}

We extend our definitions of boundary positions and rankers from 
Sect.~\ref{sec:FO2} to include the substrings of a given length that occur
immediately before and after the position of the ranker.

\begin{defi}
  A \emph{$(k,\ell)$-neighborhood boundary position} denotes the first or
  last occurrence of a substring in a word. More precisely, a
  $(k,\ell)$-neighborhood boundary position is of the form $d_{(s, \mathtt{a}, t)}$
  with $d \in \{\rright, \rleft\}$, $s \in \Sigma^k$, $\mathtt{a} \in \Sigma$ and $t
  \in \Sigma^\ell$. The interpretation of a $(k,\ell)$-neighborhood boundary
  position $p = d_{(s,\mathtt{a},t)}$ on a word $w = w_1 \ldots w_{|w|}$ is defined
  as follows.
  \begin{displaymath}
    p(w) =
    \begin{cases}
      \min\{i \in [k+1, |w|-\ell] \mid w_{i-k} \ldots w_{i+\ell}
      = s \, \mathtt{a} \, t\}
      & \text{if } d = \rright\\
      \max\{i \in [k+1, |w|-\ell] \mid w_{i-k} \ldots w_{i+\ell}
      = s \, \mathtt{a} \, t\}
      & \text{if } d = \rleft
    \end{cases}
  \end{displaymath}
  Notice that $p(w)$ is undefined if the sequence $s \mathtt{a} t$
  does not occur in
  $w$. A $(k,\ell)$-neighborhood boundary position can also be specified
  with respect to a position $q \in [1, |w|]$.
  \begin{displaymath}
    p(w, q) =
    \begin{cases}
      \min\{i \in [\max\{q+1,k+1\}, |w|-\ell]
      \mid w_{i-k} \ldots w_{i+\ell} = s \, \mathtt{a} \, t\}
      & \text{if } d = \rright\\
      \max\{i \in [k+1, \min\{q-1, |w|-\ell\}] 
      \mid w_{i-k} \ldots w_{i+\ell} = s \, \mathtt{a} \, t\}
      & \text{if } d = \rleft
    \end{cases}
  \end{displaymath}
\end{defi}

Observe that $(0,0)$-neighborhood boundary positions are identical to the
boundary positions from Definition \ref{def:boundary-position}. As before in
the case without successor, we build rankers out of these boundary
positions. The size of the boundary position neighborhoods grows linearly
from the first boundary position to the last one, reflecting the remaining
quantifier depth for successor moves at those positions.

\begin{defi}
  An \emph{$n$-successor-ranker} $r$ is a sequence of $n$ neighborhood
  boundary positions, $r = (p_1, \ldots, p_n)$, where $p_i$ is a
  $(k_i,\ell_i)$-neighborhood boundary position and $k_i, \ell_i \in
  [0,i-1]$. The interpretation of an $n$-successor-ranker $r$ on a word
  $w$ is defined as follows.
  \begin{displaymath}
    r(w) :=
    \begin{cases}
      p_1(w) & \text{if } r = (p_1)\\
      \text{undefined} & \text{if $(p_1, \ldots, p_{n-1})(w)$ is
        undefined}\\
      p_n(w, (p_1, \ldots, p_{n-1})(w)) & \text{otherwise}
    \end{cases}
  \end{displaymath}
  We denote the set of all $n$-successor-rankers that are defined over a
  word $w$ by $\mathit{SR}_n(w)$, and set $\mathit{SR}_n^\star(w) :=
  \bigcup_{i \in [1, n]} \mathit{SR}_i(w)$.
\end{defi}

Because we now have the additional atomic relation $\suc$, we need to extend
our definition of order type as well.

\begin{defi} \label{def:suc-order-type} Let $i, j \in \N$. The
  \emph{successor order type} of $i$ and $j$ is defined as
  \begin{displaymath}
    \sucord(i,j) =
    \begin{cases}
      \ll & \text{if } i < j-1\\
      -1 & \text{if } i = j-1\\
      = & \text{if } i = j\\
      +1 & \text{if } i = j+1\\
      \gg & \text{if } i > j+1\\
    \end{cases}
  \end{displaymath}
\end{defi}

With this new definition of $n$-successor-rankers, our proofs for Lemmas
\ref{lem:ranker-sides}, \ref{lem:ranker-def}, \ref{lem:ranker-exp} and
Theorem \ref{thm:ranker-char} go through with only minor modifications.
Instead of working through all the details again, we simply point out the
differences.

First we notice that $1$-successor-rankers are simply $1$-rankers, so the
base case of all inductions remains unchanged. In the proofs of Lemmas
\ref{lem:ranker-sides}, \ref{lem:ranker-def} and \ref{lem:ranker-exp}, and
in the proof of ``(ii) $\Rightarrow$ (i)'' from Theorem \ref{thm:ranker-char},
we argued that Delilah cannot reply with a position in a given section
because it does not contain a certain ranker and therefore it does not
contain the symbol used to define this ranker. Now we need to know more --
we need to show that Delilah cannot reply with a certain letter in a given
section that is surrounded by a specified neighborhood, given that this
section does not contain the corresponding successor-ranker. Whenever
Samson's winning strategy depends on the fact that an $n$-successor-ranker
does not occur in a given section, he has $n-1$ additional moves left. So if
Delilah does not reply with a position with the same letter and the same
neighborhood, Samson can point out a difference in the neighborhood with at
most $(n-1)$ additional moves.

For the other direction of Theorem \ref{thm:ranker-char}, we need to make
sure that Delilah can reply with a position that is contained in the correct
interval, has the same symbol and is surrounded by the same neighborhood.
Where we previously defined the $n$-ranker $s := (\lambda,
\rright_\mathtt{a})$ or $s := (\rho, \rleft_\mathtt{a})$, we now include the
$(n-1)$-neighborhood of the respective positions chosen by Samson. Thus we
make sure that Samson cannot point out a difference in the two words, and
Delilah still has a winning strategy. Thus we have the following three
theorems for \FOV[$<,\suc$]{2}.

\begin{thm}[structure of {\FOVD[$<,\suc$]{2}{n}}]
  \label{thm:suc-structure}
  Let $u$ and $v$ be finite words, and let $n \in \N$. The following two
  conditions are equivalent.
  \begin{enumerate}[\em(i)]
  \item
    \begin{enumerate}[\em(a)]
    \item $\mathit{SR}_n(u) = \mathit{SR}_n(v)$, and,
    \item for all $r \in \mathit{SR}_n^\star(u)$ and for all $r' \in
      \mathit{SR}_{n-1}^\star(u)$,\\
      $\sucord(r(u), r'(u)) = \sucord(r(v), r'(v))$
    \end{enumerate}
  \item $u \equiv^2_n v$
  \end{enumerate}
\end{thm}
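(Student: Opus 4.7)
The plan is to mirror the proof of Theorem \ref{thm:ranker-char} (really, of its stronger form Theorem \ref{thm:structure-variables} for words with two interpreted variables) and adapt every step to accommodate neighborhoods. First I will state and prove the strengthened version that carries two free positions $i_1,i_2$ on $u$ and $j_1,j_2$ on $v$, with the extra clauses: $(u,i_1,i_2)\equiv^2_0 (v,j_1,j_2)$ in the successor vocabulary, and $\sucord(i_1,r(u))=\sucord(j_1,r(v))$, $\sucord(i_2,r(u))=\sucord(j_2,r(v))$ for all $r\in \mathit{SR}_n^\star(u)$. The original statement then follows by taking $i_1=i_2=j_1=j_2=1$. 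The argument proceeds by induction on $n$, with the base case $n=0$ being literally (i)(a)$\land$(i)(b) trivially vacuous and the atomic successor type being $\equiv^2_0$.

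For the direction $\neg(\text{i})\Rightarrow\neg(\text{ii})$, I will essentially reuse the three cases from the proof of Theorem \ref{thm:structure-variables}: (1) some $(n{+}1)$-successor-ranker is defined on one word but not the other; (2) two $n$-successor-rankers disagree on their order; (3) an $(n{+}1)$-successor-ranker and a $k$-successor-ranker ($k\leq n$) disagree on order. In each case I will describe Samson's move exactly as before, but now appealing to analogues of Lemmas \ref{lem:ranker-sides}, \ref{lem:ranker-def}, \ref{lem:ranker-exp} adapted to successor-rankers. The only subtlety is this: in the original proof one argued ``Delilah cannot answer in the gap because the letter $\mathtt{a}$ does not occur there.'' For successor-rankers I must argue ``Delilah cannot answer in the gap with the correct $(k,\ell)$-neighborhood because that neighborhood does not occur there.'' Whenever such a forced-move step is invoked, it comes from a boundary position whose index $i$ is strictly larger than the total neighborhood width $k+\ell$ (by definition of $n$-successor-ranker, $k,\ell\le i-1$). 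So Samson always has enough moves remaining — namely $i-1$ moves — to walk across the mismatched neighborhood using the successor relation and pin down a distinguishing letter. This is the only new ingredient in the backward direction.

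For the forward direction (i)$\Rightarrow$(ii), I will describe Delilah's reply to any move by Samson placing $x$ on, say, $u$. If Samson lands on the position of some $r\in \mathit{SR}_{n+1}^\star(u)$, Delilah copies the move via $r(v)$, which preserves the inductive invariant by (i)(a) and (i)(b) at level $n$. Otherwise, just as in Theorem \ref{thm:structure-variables}, Delilah computes the nearest $\mathit{SR}_n^\star(u)$-rankers $\lambda$ and $\rho$ on either side of $x^u$, reads off the letter $\mathtt{a}=u_{x^u}$, and needs to produce a position on $v$ strictly between $\lambda(v)$ and $\rho(v)$ carrying the right local structure. Where the old proof used the $(n{+}1)$-ranker $s=(\lambda,\rright_\mathtt{a})$ or $(\rho,\rleft_\mathtt{a})$, I will now use instead the $(n{+}1)$-successor-ranker whose final boundary position is a $(k,\ell)$-neighborhood position recording the full radius-$n$ neighborhood of $x^u$ in $u$ (permitted since $k,\ell\le n$). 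By (i)(a) this ranker is defined over $v$, by (i)(b) it lies in the prescribed open interval, and by construction Delilah's reply has exactly the same $n$-neighborhood, so it is indistinguishable from $x^u$ using any remaining $n$ successor-moves. The case where $x^u$ and $y^u$ lie in the same $(\lambda,\rho)$-interval is handled symmetrically using $(\rho,\rleft_{\mathtt{a}'})$ with its neighborhood.

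The main obstacle I anticipate is purely bookkeeping: verifying that the neighborhood radii $(k,\ell)$ attached to the final boundary position of the auxiliary rankers $s$ never exceed $n$, so that $s$ is still a legitimate $(n{+}1)$-successor-ranker, and dually that in Samson's strategies the remaining number of moves is sufficient to complete the forced neighborhood-walk after each forced reply. Both constraints align neatly because the bound $k_i,\ell_i\le i-1$ in the definition of $n$-successor-ranker was chosen precisely to track the depth remaining in the game. Once this accounting is done, every step in the proof of Theorem \ref{thm:structure-variables} transfers unchanged, and Theorem \ref{thm:suc-structure} drops out by specialising $i_1=i_2=j_1=j_2=1$.
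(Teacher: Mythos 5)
Your proposal follows essentially the same route as the paper, which itself proves Theorem \ref{thm:suc-structure} only by indicating how the proofs of Lemmas \ref{lem:ranker-sides}--\ref{lem:ranker-exp} and Theorem \ref{thm:structure-variables} are adapted: Samson spends his remaining moves walking the successor relation to expose a neighborhood mismatch, and Delilah's auxiliary ranker $s$ additionally records the neighborhood of the position Samson chose. One small slip: the justification for the forced-move steps should be that $\max(k,\ell)\le i-1$, so that $i-1$ further moves reach any discrepancy inside the $(k,\ell)$-neighborhood, rather than that $i$ exceeds $k+\ell$ (which can fail, e.g.\ $k=\ell=i-1$); the conclusion you draw from it is nevertheless the correct one.
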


\begin{thm}[structure of {\FOVDA[$<,\suc$]{2}{n}{m}}]
  \label{thm:suc-alt-structure} 
  Let $u$ and $v$ be finite words, and let $m, n \in \N$ with $m \le n$. The
  following two conditions are equivalent.
  \begin{enumerate}[\em(i)]
  \item
    \begin{enumerate}[\em(a)]
    \item $\mathit{SR}_{m,n}(u) = \mathit{SR}_{m,n}(v)$, and,
    \item for all $r \in \mathit{SR}_{m,n}^\star(u)$ and for all $r' \in
      \mathit{SR}_{m-1,n-1}^\star(u)$,\\
      $\sucord(r(u),r'(u)) = \sucord(r(v), r'(v))$, and,
    \item for all $r \in \mathit{SR}_{m,n}^\star(u)$ and $r' \in
      \mathit{SR}_{m,n-1}^\star(u)$ such that $r$ and $r'$ end with
      different directions, $\sucord(r(u),r'(u)) = \sucord(r(v), r'(v))$
    \end{enumerate}
  \item $u \equiv^2_{m,n} v$
  \end{enumerate}
\end{thm}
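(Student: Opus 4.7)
The plan is to mirror the proof of Theorem \ref{thm:alt-ranker-char} while overlaying the neighborhood-tracking modifications that were used to lift Theorem \ref{thm:ranker-char} to Theorem \ref{thm:suc-structure}. As before, I would state and prove a more general version for words with two interpreted variables $(u, i_1, i_2)$ and $(v, j_1, j_2)$, augmenting condition (i) by successor-sensitive analogs of clauses (i)(d), (i)(e), (i)(f) from the general form of Theorem \ref{thm:alt-ranker-char} — using $\sucord$ in place of $\ord$ whenever we compare positions against $i_1, i_2, j_1, j_2$ — and then specialize to $i_1 = i_2 = j_1 = j_2 = 1$. The induction runs on $n$, with $n = 0$ handled by the zero-move clause exactly as in Theorem \ref{thm:alt-ranker-char}.

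First I would prove three successor-analogs of Lemmas \ref{lem:alt-ranker-sides}, \ref{lem:alt-ranker-def}, \ref{lem:alt-ranker-exp}, replacing rankers by $n$-successor-rankers and demanding that Samson's strategies respect the alternation budget $m$. In the successor analog of Lemma \ref{lem:alt-ranker-sides}, when Samson plays $y$ on the position of an $(m,n)$-successor-ranker $r$ in one structure, Delilah must either match the letter together with its $(k,\ell)$-neighborhood — in which case Samson applies the inductive strategy at the $(n-1)$-prefix ranker $r_{n-1}$ — or she mismatches the neighborhood, in which case, since by definition $k, \ell \le n-1$, Samson has the $n-1$ remaining moves needed to step to a discrepant successor position. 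Crucially, the alternation budget is tracked exactly as in Lemma \ref{lem:alt-ranker-sides}: whether Samson's forced continuation stays on the same structure or crosses over depends only on whether $r_{n-1}$ ends in the same direction as $r$, so the block structure of the successor-ranker is preserved by the game.

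For the direction $\neg\text{(i)} \Rightarrow \neg\text{(ii)}$, I would partition into the same cases as in Theorem \ref{thm:alt-ranker-char}: a missing successor-ranker (violating (i)(a)); successor-rankers disagreeing on order (violating (i)(b) or (i)(c)); and violations of the new variable-sensitive clauses, which reduce directly to the successor sides-lemma applied at $i_1$ or $i_2$. The middle-letter construction (Figure \ref{fig:alt-ranker-char-middle}) needs to be upgraded: rather than a symbol $\mathtt{a}$ appearing in $(\lambda, \rho)$ of $u$ but not $v$, we now find a triple $(s, \mathtt{a}, t)$ with $|s|, |t| \le n-1$ that occurs as a window in $u$ within that interval but not in $v$. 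The existence of such a window follows from the failure of (i)(a) for the appropriate neighborhood sizes; Samson opens by playing on the central position and then, whichever side Delilah replies on, the successor sides-lemma finishes the game within budget. For $\text{(i)} \Rightarrow \text{(ii)}$, Delilah's strategy is the lift of her strategy in Theorem \ref{thm:alt-ranker-char}: when Samson plays at $x^u$ which is not the value of any $(m-1,n)$-successor-ranker and is distinct from the other pebble, she forms the sets $R_\ell, R_r$ of appropriately-directed successor-rankers and then — the key new ingredient — augments them by appending the neighborhood boundary position $\rright_{(s,\mathtt{a},t)}$ (or $\rleft_{(s,\mathtt{a},t)}$) for the actual $(k,\ell)$-neighborhood around $x^u$, where the widths are chosen to be exactly what is needed to handle at most $n$ further moves. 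She then pebbles the rightmost (say) such extended ranker in $v$; ordering is preserved by (i)(b) and (i)(c), and the neighborhood agreement is built in.

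The main obstacle is the simultaneous bookkeeping of alternation blocks and neighborhood widths. Every time Delilah's construction extends a ranker by one boundary position, we must verify that the resulting object is still an $(m,n+1)$-successor-ranker — i.e., that the chosen $k, \ell$ satisfy $k, \ell \le n$ and that appending does not add an unwanted alternation — and that the widths remaining after each subsequent move are still sufficient to witness the relevant neighborhoods. Threading this neighborhood-size bound through every invocation of the sides-lemma and every induction step is where the paper's combinatorial care concentrates; once this indexing is set up, everything else is a mechanical transposition of the proof of Theorem \ref{thm:alt-ranker-char}.
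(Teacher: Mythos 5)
Your proposal follows essentially the same route as the paper, which itself only sketches this result by asserting that the proof of Theorem~\ref{thm:alt-ranker-char} goes through once one overlays the neighborhood modifications used to lift Theorem~\ref{thm:ranker-char} to Theorem~\ref{thm:suc-structure} (Samson spends his remaining $\le n-1$ moves walking to a neighborhood mismatch; Delilah replies on rankers extended by the $(k,\ell)$-neighborhood of Samson's position). Your account is in fact more explicit than the paper's about the interaction between neighborhood widths and the alternation budget, and correctly observes that the successor walks cost no alternations, so no further comment is needed.
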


\begin{thm}[alternation hierarchy for {\FOV[$<,\suc$]{2}}]
  \label{thm:suc-alt-hierarchy}
  Let $m$ be a positive integer. There is a
  $\varphi_m \in \FOVA[$<,\suc$]{2}{m}$ and there are two languages $K_m, L_m
  \subseteq \Sigma^\star$ such that $\varphi_m$ separates $K_m$ and $L_m$,
  but there is no $\psi \in \FOVA[$<,\suc$]{2}{m-1}$ that separates $K_m$ and
  $L_m$.
\end{thm}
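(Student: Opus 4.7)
The plan is to mimic the proof of Theorem \ref{thm:alt-hierarchy} using Theorem \ref{thm:suc-alt-structure} in place of Theorem \ref{thm:alt-ranker-char}. The construction of example languages $K_m$ and $L_m$ will be an adaptation of the non-successor construction, modified so that the extra expressive power of successor cannot be used to distinguish them with fewer alternations.

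Concretely, I would define padded versions $\tilde u_{m,n}$ and $\tilde v_{m,n}$ of the earlier example words $u_{m,n}$ and $v_{m,n}$ by replacing every letter $\mathtt{a}_i$ with a block $\mathtt{a}_i^{N}$ for $N = N(n)$ chosen large enough (say $N > 2n$), and setting $K_m = \{\tilde u_{m,n} \mid n \geq 1\}$, $L_m = \{\tilde v_{m,n} \mid n \geq 1\}$. The point of the padding is that any $(k,\ell)$-neighborhood with $k + \ell < N$, seen from a position inside such a block, is determined only by the distance of the position from the nearest block boundary and the letter of the block. Consequently, the extra information that successor-rankers can extract, compared with ordinary rankers, is the relative position inside a monochromatic block, and these distances are identical in $\tilde u_{m,n}$ and $\tilde v_{m,n}$ except around the single distinguishing middle position.

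For the upper bound I would lift the rankers $r_m, s_m$ from Lemma \ref{lem:alt-hierarchy-exp} to successor-rankers by equipping each boundary position with a trivial $(0,0)$-neighborhood, obtaining $(m-1)$-alternation $n'$-successor-rankers whose relative order differs on $\tilde u_{m,n}$ and $\tilde v_{m,n}$. By failure of condition (i)(b) of Theorem \ref{thm:suc-alt-structure}, this yields a separating formula $\varphi_m \in \FOVA[$<,\suc$]{2}{m}$. For the lower bound I would verify conditions (i)(a)--(i)(c) of Theorem \ref{thm:suc-alt-structure} for the parameter pair $(m-1,n)$. The crucial observation, generalizing the counting argument in Lemma \ref{lem:alt-hierarchy-nexp}, is that each $\rright$ or $\rleft$ boundary position (regardless of its neighborhood) can only traverse to the next occurrence of a fixed neighborhood pattern, so the layered construction of $\tilde u_{m,n}$ still forces at least $m-1$ direction alternations for any successor-ranker reaching the middle distinguishing block.

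The main obstacle is verifying conditions (i)(a) and the direction-alternation lower bound for the padded words: one must check that no new $(m-1,n)$-successor-ranker becomes defined on one padded word but not the other, and that enriched neighborhoods do not shortcut the alternation lower bound by letting a single $\rright$ or $\rleft$ ``jump across'' a block structure. Choosing $N$ larger than any neighborhood radius available to an $n$-successor-ranker confines every such neighborhood either to the interior of a block or to a single block boundary; since the pattern of block boundaries and their surrounding letters is identical in $\tilde u_{m,n}$ and $\tilde v_{m,n}$ except at the single middle difference, the verification reduces, block by block, to the combinatorial analysis already carried out in the non-successor case.
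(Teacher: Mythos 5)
Your overall plan---pad the example words so that successor neighborhoods carry no extra information, then rerun Lemmas \ref{lem:alt-hierarchy-exp} and \ref{lem:alt-hierarchy-nexp} through Theorem \ref{thm:suc-alt-structure}---is exactly the paper's strategy, and your upper-bound half is fine. But your choice of padding, replacing each letter $\mathtt{a}_i$ by a monochromatic block $\mathtt{a}_i^N$, breaks the lower bound for every $m \ge 3$: distinct letters remain adjacent at block boundaries, so a $(1,0)$-neighborhood boundary position can seek a two-letter junction pattern $\mathtt{a}_i\mathtt{a}_j$, and the occurrence sets of these junctions differ between $\tilde u_{m,n}$ and $\tilde v_{m,n}$ in a low-alternation-detectable way. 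Concretely, take $m=3$, so $\tilde u_{3,n}=(\mathtt{a}_0^N\mathtt{a}_1^N\mathtt{a}_2^N)^n\,\mathtt{a}_0^N\,(\mathtt{a}_1^N\mathtt{a}_0^N)^{2n}$ and $\tilde v_{3,n}=(\mathtt{a}_0^N\mathtt{a}_1^N\mathtt{a}_2^N)^n\,(\mathtt{a}_1^N\mathtt{a}_0^N)^{2n}$, and consider the two $1$-alternation $2$-successor-rankers $r=(\rleft_{\mathtt{a}_1},\rleft_{(\mathtt{a}_2,\mathtt{a}_0,\varepsilon)})$ and $s=(\rleft_{\mathtt{a}_1},\rleft_{\mathtt{a}_2})$. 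In both words $s$ lands on the last position of the prefix $(\mathtt{a}_0^N\mathtt{a}_1^N\mathtt{a}_2^N)^n$, but the last occurrence of the factor $\mathtt{a}_2\mathtt{a}_0$ sits at the prefix/middle junction in $\tilde u_{3,n}$ (to the right of $s$) and one period earlier in $\tilde v_{3,n}$ (to the left of $s$), for every $n$. Hence condition (i)(b) of Theorem \ref{thm:suc-alt-structure} already fails at level $(2,3)$, uniformly in $n$; since definedness and order of these rankers are invariants of $\equiv^2_{2,3}$ and this relation has finite index, your $K_3$ and $L_3$ are separated by a fixed \FOVDA[$<,\suc$]{2}{3}{2} formula. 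The mirror-image junction $\mathtt{a}_{m-1}\mathtt{a}_0$ kills every larger $m$ the same way.

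The root of the problem is your final claim that ``the pattern of block boundaries and their surrounding letters is identical except at the single middle difference'': that single middle difference \emph{is} a difference in the junction pattern, and ``last occurrence of the junction $\mathtt{a}_2\mathtt{a}_0$'' reaches it in one seek, bypassing the $m-1$ direction changes that single-letter seeks are forced to make. The paper's construction avoids this by interleaving runs $\mathtt{b}^{2n}$ of a \emph{fresh separator letter} between every two consecutive letters of $u_{m,n}$ and $v_{m,n}$ (and at the ends); then every $(k,\ell)$-neighborhood with $k+\ell\le n-1$ contains at most one non-$\mathtt{b}$ letter, so neighborhood boundary positions genuinely degenerate to single-letter seeks and the proofs of Lemmas \ref{lem:alt-hierarchy-exp} and \ref{lem:alt-hierarchy-nexp} go through unchanged. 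The padding must separate distinct letters, not merely lengthen them.
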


\begin{full}
\begin{proof}
  We use the same ideas as before in Theorem \ref{thm:alt-hierarchy}. We
  define example languages that now include an extra letter \texttt{b} to ensure
  that the successor predicate is of no use. As before, we inductively
  construct the words $u_{m,n}$ and $v_{m,n}$ and use them to define the
  languages $K_m$ and $L_m$.

  \begin{align*} u_{1,n} &:= \mathtt{b}^{2n} \mathtt{a}_0
    \mathtt{b}^{2n} & v_{1,n} &:= \mathtt{b}^{2n}\\ u_{2,n} &:=
    u_{1,n} \;
    (\mathtt{a}_1\mathtt{b}^{2n}\mathtt{a}_0\mathtt{b}^{2n})^{2n} &
    v_{2,n} &:= v_{1,n} \;
    (\mathtt{a}_1\mathtt{b}^{2n}\mathtt{a}_0\mathtt{b}^{2n})^{2n}\\
    u_{2i+1,n} &:= (\mathtt{b}^{2n} \mathtt{a}_0 \mathtt{b}^{2n}
    \ldots \mathtt{b}^{2n} \mathtt{a}_{2i})^n \; u_{2i,n} & v_{2i+1,n}
    &:= (\mathtt{b}^{2n} \mathtt{a}_0 \mathtt{b}^{2n} \ldots
    \mathtt{b}^{2n} \mathtt{a}_{2i})^n \; v_{2i,n}\\ u_{2i+2,n} &:=
    u_{2i+1,n} \; (\mathtt{a}_{2i+1} \mathtt{b}^{2n} \ldots
    \mathtt{a}_0 \mathtt{b}^{2n})^{n} & v_{2i+2,n} &:= v_{2i+1,n} \;
    (\mathtt{a}_{2i+1} \mathtt{b}^{2n} \ldots \mathtt{a}_0
    \mathtt{b}^{2n})^{n} \end{align*} Finally we set $K_m :=
    \bigcup_{n \ge 1} \{u_{m,n}\}$ and $L_m := \bigcup_{n \ge 1}
    \{v_{m,n}\}$. Notice that the \texttt{b}s are not necessary to
    distinguish between the two languages $K_m$ and $L_m$, and thus
    the proof of Lemma \ref{lem:alt-hierarchy-exp} goes through
    unchanged and we have a formula $\varphi_m \in
    \FOVA[$<,\suc$]{2}{m}$ that separates $K_m$ and $L_m$. To see that
    no $\FOVA[$<,\suc$]{2}{m-1}$ formula can separate $K_m$ and $L_m$,
    we observe that any $(n-1)$-neighborhood in the words $u_{m,n}$
    and $v_{m,n}$ contains all \texttt{b}s except for at most one
    letter $\mathtt{a}_i$ for some $i \in [0,m-1]$. Thus the proof of
    Lemma \ref{lem:alt-hierarchy-nexp} goes through here as
    well.\qedconf
\end{proof}
\end{full}

\begin{conference}
  The proof of Theorem \ref{thm:suc-alt-hierarchy} is given in \cite{WI07},
  and mostly identical to the proof of Theorem \ref{thm:alt-hierarchy}. We use
  $n$ copies of an extra letter between any two letters in our example words,
  and thus ensure that the successor predicate is not useful.
\end{conference}

\section{Small Models and Satisfiability for \texorpdfstring{$\FOV{2}[<]$}{FOV2[<]}}
\label{sec:satisfiability}

\newcommand{\TILING}{\lang{TILING}}

The complexity of satisfiability for $\FOV{2}[<]$ was investigated in
\cite{EVW02}. There it is shown that any satisfiable $\FOVD{2}{n}[<]$
formula has a model of size at most exponential in $n$. It follows that
satisfiability for $\FOV{2}[<]$ is in \NEXP, and a reduction from \TILING{}
shows that satisfiability for $\FOV{2}[<]$ is \NEXP-complete. Using our
characterization of $\FOV{2}[<]$, Wilke
observed that satisfiability becomes $\NP$-complete if we look at binary
alphabets only \cite{W07}. We generalize this observation and show that
satisfiability for $\FOV{2}[<]$ is $\NP$-complete for any fixed alphabet
size. In contrast to this, satisfiability for $\FOV{2}[<,\suc]$ is
$\NEXP$-complete even for binary alphabets \cite{EVW02}, since in the
presence of a successor predicate we can encode an arbitrary alphabet in
binary. Before we state and prove the two theorems of this section, we
prove a simple technical lemma first.

\begin{lem}
  \label{lem:equiv-word-replace}
  Let $u, v, v', w \in \Sigma^\star$. If $v \equiv^2_n v'$, then $uvw \equiv^2_n
  uv'w$.
\end{lem}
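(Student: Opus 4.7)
The natural route is via Ehrenfeucht--Fraïssé games (Fact \ref{fact:ef1}). From $v \equiv^2_n v'$, Delilah has a winning strategy $\sigma$ in $\fo^2_n(v,v')$, and the plan is to lift $\sigma$ to a winning strategy for Delilah in $\fo^2_n(uvw, uv'w)$ by playing locally in each of the three natural segments of the words.

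More concretely, I would identify each position of $uvw$ with a pair (segment, offset), where the segment is either the $u$-prefix, the middle block corresponding to $v$, or the $w$-suffix; likewise for $uv'w$ with $v'$ replacing $v$ in the middle. Delilah's response rule will be: if Samson plays in the $u$-prefix or the $w$-suffix at some offset, Delilah plays the same offset in the corresponding segment of the opposite word; if he plays in the middle block, she treats this as a move of a subgame on $(v, v')$ and answers using $\sigma$. In this way she can mirror up to all $n$ rounds of Samson's play, and in particular the number of subgame moves she ever uses is bounded by $n$, which is exactly what $\sigma$ can handle.

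To verify that this is a winning strategy, I would show by induction on the round number that after each move Delilah's map between pebbled positions is a partial isomorphism. Labels match automatically: the $u$- and $w$-segments of the two words are literally equal, and $\sigma$ preserves the $Q_\mathtt{a}$-relations inside the middle. For the linear order, I would split into cases: if both pebbles currently lie in the same segment, ordering is preserved because the segment-local strategy (identity on $u,w$; $\sigma$ on the middle) preserves it; if they lie in different segments, their relative order is forced by the global ordering $u$-prefix $<$ middle $<$ $w$-suffix, which is the same on both sides.

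The step that requires a little care is justifying that $\sigma$ can be applied even though pebbles can enter and leave the middle segment throughout the game --- $\sigma$ is a strategy for a specific starting position of the subgame on $(v,v')$, not for an arbitrary mid-play configuration. I would handle this by either (a) strengthening the induction to $n$ and appealing to a stronger version of Theorem~\ref{thm:structure-variables} that allows arbitrary initial placements of $x,y$ in the middle segments (so the required sub-equivalence follows from $v \equiv^2_n v'$ restricted to the right round count), or (b) proving the lemma directly by induction on $n$: in the inductive step, Samson's first move is handled by the segment rule above, after which the remaining $n-1$ rounds reduce to an instance with one fewer quantifier, where the induction hypothesis applies. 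I expect (b) to be cleaner, since it sidesteps the bookkeeping of a persistent subgame strategy and only needs the hypothesis $v \equiv^2_n v'$ to pass to $v \equiv^2_{n-1} v'$ after a move in the middle.
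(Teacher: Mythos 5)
Your proposal is correct and follows essentially the same route as the paper: Delilah copies Samson's moves identically in the $u$- and $w$-segments and answers moves in the middle segment via her winning strategy for $\fo^2_n(v,v')$, with order preservation checked segment by segment. The subtlety you flag about applying $\sigma$ to mid-play configurations is real but is simply glossed over in the paper's two-sentence proof, so your extra care (either resolution (a) or (b)) only makes the argument more complete.
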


\begin{proof}
  We argue that Delilah has a winning strategy for the game
  $\FOVD{2}{n}(uvw, uv'w)$: If Samson places a pebble in $u$ or $w$, Delilah
  replies with the identical position in $u$ or $w$ in the other structure.
  If Samson places a pebble in $v$ or $v'$, Delilah replies according to her
  winning strategy in the game $\FOVD{2}{n}(v,v')$. All of these moves
  obviously preserve the ordering of the pebbles, and thus Delilah wins.
\end{proof}

\begin{thm}[Small Model Property for Bounded Alphabets]
  \label{thm:small-model}
  Let $n \in \N$ and let $\varphi \in
  \FOVD{2}{n}[<]$ be a formula over a $k$-letter alphabet. If $\varphi$ is
  satisfiable, then $\varphi$ has a model of size $O(n^k)$.
\end{thm}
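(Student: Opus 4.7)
The plan is to prove by induction on the alphabet size $k$ the stronger statement that every word over a $k$-letter alphabet is $\equiv^2_n$-equivalent to some word of length at most $c_k\,n^k$, where $c_k$ depends only on $k$. The theorem then follows at once: given any model $w$ of $\varphi$, the equivalent short word $w'$ is still a model because $\varphi \in \FOVD{2}{n}[<]$.

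For the base case $k=1$, using Theorem~\ref{thm:ranker-char} I would check directly that $\mathtt{a}^m \equiv^2_n \mathtt{a}^{2n}$ for every $m \ge 2n$: on $\mathtt{a}^m$ each $n$-ranker evaluates to a position in $\{1,\ldots,n\} \cup \{m-n+1,\ldots,m\}$, and once $m \ge 2n$ these two sets are disjoint with the left one entirely below the right, so the ranker data no longer depends on $m$. Hence $|w'| \le 2n = O(n)$.

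For the inductive step, given a word $w$ over $\Sigma$ with $|\Sigma|=k+1$, I pick any letter $\mathtt{a}\in\Sigma$ and factor $w = u_0\,\mathtt{a}\,u_1\,\mathtt{a}\cdots\mathtt{a}\,u_m$, where each $u_i \in (\Sigma\setminus\{\mathtt{a}\})^\star$. The inductive hypothesis provides $u_i' \equiv^2_n u_i$ of length at most $c_k\,n^k$, and iterating Lemma~\ref{lem:equiv-word-replace} yields $w \equiv^2_n u_0'\,\mathtt{a}\,u_1'\,\mathtt{a}\cdots\mathtt{a}\,u_m'$. It then remains to bring the number $m$ of $\mathtt{a}$'s down to $O(n)$. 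Since $\FOVD{2}{n}$ has only finitely many inequivalent formulas over any fixed alphabet, the blocks $u_i'$ realise only boundedly many $\equiv^2_n$-types, so for $m$ sufficiently large the sequence of block types must contain a long pattern of repetitions. A pumping argument based on Theorem~\ref{thm:ranker-char}---analogous to the unary base case but applied to a repeated block pattern rather than a single letter---shows that one full period of such a repetition can be deleted while preserving every $n$-ranker together with all required orderings. Applying Lemma~\ref{lem:equiv-word-replace} to effect the deletion and iterating reduces $m$ to $O(n)$, giving the final size bound $|w'| \le (m+1)\,c_k\,n^k + m = O(n^{k+1})$.

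The hardest step is making this pumping precise: I must verify that an $n$-ranker cannot distinguish two words whose only difference is the removal of one period from a long homogeneous middle section. Intuitively this works because a depth-$n$ ranker can count at most $n$ occurrences of any letter from each side, so long repetitions become \emph{saturated}; formally one must check that condition~(i) of Theorem~\ref{thm:ranker-char} is preserved, which reduces to a careful analysis of how prefix rankers navigate through the deleted period---paralleling the unary base case, but parameterised by the $\equiv^2_n$-type of the repeating block.
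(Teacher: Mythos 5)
Your overall skeleton --- induction on the alphabet size, the base case $\mathtt{a}^m \equiv^2_n \mathtt{a}^{2n}$ for $m \ge 2n$, and the use of Lemma~\ref{lem:equiv-word-replace} to replace blocks by short equivalent blocks --- matches the paper's proof. The gap is in the one step where all the difficulty lives: reducing the \emph{number} of blocks to $O(n)$. Your argument for this is a pumping/periodicity claim, and it fails in two ways. First, the assertion that a long sequence over finitely many block types ``must contain a long pattern of repetitions'' is false: over three or more types there are arbitrarily long square-free sequences (Thue), so no repeated period $PP$ need occur at all, and the pumping opportunity you rely on is simply not guaranteed. Second, even if you replace this by a pigeonhole or idempotent-style argument (cut between two positions realising the same ``state''), the number of blocks you could guarantee surviving is bounded only by a function of the number of $\equiv^2_n$-types over a $k$-letter alphabet, which is super-polynomial in $n$; that would reproduce the already-known exponential small-model bound of \cite{EVW02} but not the $O(n^k)$ bound that the theorem (and the \NP-membership application) requires. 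There is also a structural obstacle to your decomposition $w = u_0\,\mathtt{a}\,u_1\,\mathtt{a}\cdots\mathtt{a}\,u_m$: a single boundary position $\rright_\mathtt{b}$ can jump over arbitrarily many $\mathtt{a}$-separated blocks (all those not containing $\mathtt{b}$), so $n$-rankers are not confined to the first or last $O(n)$ blocks and no truncation of your factorization is sound as stated.

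The paper avoids pumping entirely by choosing the decomposition so that rankers provably cannot travel far. After first trimming every maximal constant segment to length $2n$, it greedily writes $w' = u_1 s_1 u_2 \ldots u_r s_r u_{r+1}$ where each $u_i$ is a maximal prefix using exactly $k$ of the $k+1$ letters and $s_i$ is a segment of the one remaining letter; thus each unit $u_i s_i$ contains the full alphabet, so from any position at or before the end of $s_{i-1}$ the next $\rright$-move lands at or before the end of $s_i$. Hence no $n$-ranker starting with $\rright$ passes $s_n$, and with the mirror-image decomposition for $\rleft$-starting rankers one keeps only the first $n$ and last $n$ units, i.e.\ $O(n)$ blocks, each shrinkable to $O(n^k)$ by the inductive hypothesis via Lemma~\ref{lem:equiv-word-replace} and Theorem~\ref{thm:ranker-char}. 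Your ``saturation'' intuition is pointing in the right direction, but to make it yield a polynomial bound you need this kind of full-alphabet unit as the thing that gets saturated, not a repetition of block types.
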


\begin{conference}
  The proof of Theorem \ref{thm:small-model} is presented in \cite{WI07}. We
  argue that any fixed word has as most $O(n^k)$ positions that can be
  reached with $n$-rankers, and thus we have a word of size $O(n^k)$ that
  satisfies the given formula.
\end{conference}

\begin{full}
\begin{proof}
  Let $w$ be an arbitrary model of $\varphi$. We use induction on $k$ to
  show how to construct a new model of size $O(n^k)$ that satisfies
  $\varphi$. For $k=1$, i.e. a single letter alphabet, 
  we observe that an $n$-ranker can only point to a
  position within the first or last $n$ letters of $w$. We let $w'$ be a
  copy of $w$ with all letters after the first $n$ letters and before the
  last $n$ letters removed. The words $w$ and $w'$ agree on the existence
  and ordering of all $n$-rankers, thus we can apply Theorem
  \ref{thm:ranker-char} and it follows that $w' \models \varphi$.

  For the inductive case, we partition $w$ into segments, where each segment
  is a maximal sequence of the same letter. For example, the word
  $\mathtt{aaabb}$ has two segments, $\mathtt{aaa}$ and $\mathtt{bb}$.
  First, we let $w'$ be a copy of $w$ where we cut down all segments that
  are longer than $2n$ to exactly $2n$ letters. Since no $n$-ranker can
  point to a position within any segment after the first $n$ letters and
  before the last $n$ letters of that segment, we have $w' \models \varphi$.

  Now we partition the word $w'$ such that $w' = u_1 s_1 u_2 \ldots u_r s_r
  u_{r+1}$, where $r \in \N$ and for every $1 \le i \le r$, $u_i$ is a
  string of maximal length that uses exactly $k$ different letters, $s_i$ is
  a segment, and $u_{r+1}$ is a string over at most a $k$-letter alphabet.
  We observe that this partition is unique: If $\mathtt{a}$ is the last
  of the $(k+1)$ letters in our alphabet to appear in $w'$, starting from
  the left, then $s_1$ is the left-most segment of $\mathtt{a}$'s, and $u_1$
  is everything up to that segment. Now $s_2$ is the left-most segment after
  $s_1$ of the letter that appears last after $s_1$, and so on. We can point
  to a position in segment $s_n$ with an $n$-ranker, but no $n$-ranker that
  starts with $\rright$ can point to a position to the right of $s_n$.
  Similarly, we partition $w'$, now starting from the right, such that $w' =
  v_{q+1} t_q v_q \ldots v_2 t_1 v_1$, where $q \in \N$ and for every $1 \le
  i \le q$, $v_i$ is a string of maximal length that uses exactly $k$
  different letters, $t_i$ is a segment, and $v_{q+1}$ is a string over at
  most a $k$-letter alphabet. Again, this partition is unique and any
  $n$-ranker that starts with $\rleft$ cannot point to a position to the
  left of $t_n$. We also notice that both partitions have the same number
  of segments, i.e. $r = q$, since any substring $u_i s_i$ from the first
  partition contains all letters of the alphabet and thus has to contain
  at least one segment $t_j$ from the second partition, and vice versa.

  If both partitions use more than $2n$ segments, then the segment
  $s_n$ of the first partition occurs to the left of the segment $t_n$ of
  the second partition. In this case we construct the word $w'' = u_1 s_1
  u_2 \ldots u_n s_n t_n v_n \ldots v_2 t_1 v_1$. $w''$ agrees with $w'$ on
  all $n$-rankers, and thus $w'' \models \varphi$. Every one of the strings
  $u_1, \ldots, u_n$ and $v_1, \ldots v_n$ uses at most $k$ different
  letters, therefore we can apply the inductive hypothesis and replace each
  of these strings with an equivalent string of length $O(n^k)$, as
  explained in Lemma \ref{lem:equiv-word-replace}. Thus we
  have constructed a word of length $O(n^{k+1})$ that satisfies $\varphi$.

  If the partitions have at most $2n$ segments, then we combine the
  two partitions such that $w' = w_1 x_1 \ldots x_p w_{p+1}$, where $p
  \le 4n$, and for every $1 \le i \le p$, $x_p$ is one of the original
  segments $s_1, \ldots, s_r$ and $t_1, \ldots, t_q$. As above, we use the
  inductive hypothesis to replace all strings $x_i$ with equivalent strings
  of length $O(n^k)$, and thus construct a new string of length $O(n^{k+1})$
  that satisfies $\varphi$.\qedconf
\end{proof}
\end{full}

\begin{thm}
  Satisfiability for $\FOV{2}[<]$ where the size of the alphabet is bounded
  by some fixed $k \ge 2$ is \NP-complete.
\end{thm}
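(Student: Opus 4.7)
The plan is to prove both directions: membership in \NP{} and \NP-hardness.

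For membership, I would invoke Theorem~\ref{thm:small-model} directly. Any satisfiable $\varphi \in \FOVD[$<$]{2}{n}$ over a $k$-letter alphabet has a model of size $O(n^k)$; since $n \le |\varphi|$ and $k$ is a fixed constant, this bound is polynomial in $|\varphi|$. The \NP{} algorithm therefore guesses a word $w$ of polynomial length and verifies $w \models \varphi$ by standard bottom-up first-order model checking, which runs in polynomial time for a fixed signature and a fixed number of variables.

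For hardness, I would reduce from 3-SAT. Given a 3-CNF formula $\phi$ with variables $x_1, \ldots, x_n$ and clauses $C_1, \ldots, C_m$, I construct an $\FOV[$<$]{2}$ formula $\psi$ over a two-letter subalphabet $\{\mathtt{a},\mathtt{b}\}$ of the given $k$-letter alphabet whose models encode satisfying assignments of $\phi$: the $i$-th letter of such a model will be $\mathtt{a}$ if $x_i$ is set to true and $\mathtt{b}$ otherwise. For each $i \in [1,n]$, let $\pi_i(x)$ be an $\FOV[$<$]{2}$ formula asserting that $x$ is the $i$-th position from the left, realized as the conjunction of ``there exist at least $i-1$ positions strictly below $x$'' and ``there do not exist $i$ positions strictly below $x$.'' Each conjunct is writable with quantifier depth $O(i)$ via the standard two-variable nested-chain trick $\exists y(y < x \land \exists x(x < y \land \exists y(y < x \land \cdots)))$, alternating the two variables along the chain. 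Define $T_i := \exists x(\pi_i(x) \land Q_\mathtt{a}(x))$ and $F_i := \exists x(\pi_i(x) \land Q_\mathtt{b}(x))$, translate each literal by $x_i \mapsto T_i$ and $\neg x_i \mapsto F_i$, and let $\widetilde{C}_j$ be the disjunction obtained from clause $C_j$. Then set
\[
  \psi \;:=\; \bigwedge_{i=1}^n (T_i \lor F_i) \;\land\; \bigwedge_{j=1}^m \widetilde{C}_j.
\]
The size of $\psi$ is $O(n^2 + mn)$, so the reduction runs in polynomial time. Any satisfying assignment of $\phi$ yields a word in $\{\mathtt{a},\mathtt{b}\}^n$ satisfying $\psi$, and conversely the first $n$ letters of any model of $\psi$ (which must have length at least $n$ by the $T_i \lor F_i$ conjuncts) define an assignment satisfying every clause of $\phi$.

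The only delicate step is the construction of $\pi_i$ with size linear in $i$ using only two variables; this is a standard fact about $\FOV[$<$]{2}$ and is the only place where the restriction to two variables matters. Everything else is direct syntactic translation and an appeal to Theorem~\ref{thm:small-model}, so no substantive obstacle remains.
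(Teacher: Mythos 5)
Your proposal is correct and follows essentially the same route as the paper: membership in \NP{} via the $O(n^k)$ small-model bound of Theorem~\ref{thm:small-model} plus guess-and-check, and hardness by reducing (3-)SAT to satisfiability over a binary alphabet, encoding the truth value of $X_i$ in the $i$-th letter and expressing ``position $i$ carries a given letter'' with a linear-size two-variable counting formula. The only cosmetic difference is that the paper pins the model length to exactly $n$ with an auxiliary formula $\varphi_n$, whereas you enforce only length at least $n$ via the conjuncts $T_i \lor F_i$; both suffice.
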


\begin{proof}
  Membership in $\NP$ follows immediately from Theorem \ref{thm:small-model}
  -- we nondeterministically guess a model of size $O(n^k)$ where $n$ is the
  quantifier depth of the given formula, and verify that it is a model of
  the formula. Now we give a reduction from $\SAT$. Let $\alpha$ be a
  boolean formula in conjunctive normal form over the variables $X_1,
  \ldots, X_n$. We construct a $\FOV{2}[<]$ formula $\varphi = \varphi_n
  \land \alpha[\xi_i / X_i]$, where $\varphi_n$ says that every model has
  size exactly $n$, and where we replace every occurrence of $X_i$ in
  $\alpha$ with a formula $\xi_i$ of length $O(n)$ which says that the
  $i$-th letter is a $1$. The total length of $\varphi$ is $O(|\alpha| \cdot
  n)$, and $\varphi$ is satisfiable iff $\alpha$ is satisfiable.\qedconf
\end{proof}

\section{Conclusion}

We proved precise structure theorems for \FOV{2}, with and without the
successor predicate, that completely characterize the expressive power of
the respective logics, including exact bounds on the quantifier depth and on
the alternation depth. Using our structure theorems, we showed that the
quantifier alternation hierarchy for \FOV{2} is strict, settling an open
question from \cite{EVW97,EVW02}. Both our structure theorems and the
alternation hierarchy results add further insight to and simplify previous
characterizations of \FOV{2}. We hope that the insights gained in
our study of $\fo^2$ on words will be useful in
future investigations of the trade-off between formula size and number of
variables.

\section*{Acknowledgment}

We would like to thank Thomas Wilke for pointing out the consequences of our
structural results to the satisfiability problem for $\FOV{2}[<]$. We are
also very thankful to two anonymous reviewers, whose detailed comments and
suggestions significantly improved the presentation of our results.

\vskip-40 pt
\end{document}